\newtheorem{lemma}{Lemma}
\newcommand{\eq}[1]{Eq.~\hyperref[eq:#1]{(\ref*{eq:#1})}}
\renewcommand{\sec}[1]{\hyperref[sec:#1]{Section~\ref*{sec:#1}}}
\newcommand{\app}[1]{\hyperref[app:#1]{Appendix~\ref*{app:#1}}}
\newcommand{\tab}[1]{\hyperref[tab:#1]{Table~\ref*{tab:#1}}}
\newcommand{\fig}[1]{\hyperref[fig:#1]{Figure~\ref*{fig:#1}}}
\newcommand{\figa}[2]{\hyperref[fig:#1]{Figure~\ref*{fig:#1}#2}}
\newcommand{\figx}[2]{\hyperref[fig:#1]{Figure~\ref*{fig:#1}(#2)}}
\newcommand{\thm}[1]{\hyperref[thm:#1]{Theorem~\ref*{thm:#1}}}
\newcommand{\lem}[1]{\hyperref[lem:#1]{Lemma~\ref*{lem:#1}}}
\newcommand{\cor}[1]{\hyperref[cor:#1]{Corollary~\ref*{cor:#1}}}
\newcommand{\defn}[1]{\hyperref[def:#1]{Definition~\ref*{def:#1}}}
\newcommand{\alg}[1]{\hyperref[alg:#1]{Algorithm~\ref*{alg:#1}}}
\newcommand{\ceil}[1]{\lceil{#1}\rceil}
\newcommand{\F}{\mathbf{F}}
\newcommand\V{{\cal V}}
\newcommand{\be}{\begin{equation}}
\newcommand{\ee}{\end{equation}}
\newcommand{\ba}{\begin{eqnarray}}
\newcommand{\ea}{\end{eqnarray}}
\newcommand{\nn}{\nonumber \\}
\def\bra#1{\mathinner{\langle{#1}|}}
\def\ket#1{\mathinner{|{#1}\rangle}}
\renewcommand*\env@matrix[1][\arraystretch]{%
	\edef\arraystretch{#1}%
	\hskip -\arraycolsep
	\let\@ifnextchar\new@ifnextchar
	\array{*\c@MaxMatrixCols c}}
\newcommand{\MQ}{\affiliation{%
Department of Physics and Astronomy,
Macquarie University, Sydney, NSW, Australia}}
\newcommand{\Google}{\affiliation{%
Google Quantum AI, Venice, CA, United States}}
\newcommand{\Columbia}{\affiliation{Department of Chemistry, Columbia University, New York, NY, United States}}
\newcommand{\Sandia}{\affiliation{Quantum Algorithms and Applications Collaboratory,
Sandia National Laboratories, Albuquerque NM, United States}}
\newcommand{\NM}{\affiliation{Department of Physics and Astronomy,
University of New Mexico, Albuquerque, NM, United States}}
\begin{document}
\title{Quantum simulation of exact electron dynamics can be\\ more efficient than classical mean-field methods}

\date{\today}
\author{Ryan Babbush}
\email{corresponding author: ryanbabbush@gmail.com}
\Google

\author{William J.~Huggins}
\Google

\author{Dominic W.~Berry}
\MQ

\author{Shu Fay Ung}
\Columbia

\author{Andrew Zhao}
\Google
\NM

\author{David R.~Reichman}
\Columbia

\author{Hartmut Neven}
\Google

\author{Andrew D.~Baczewski}
\Sandia

\author{Joonho Lee}
\email{corresponding author: linusjoonho@gmail.com}
\Google
\Columbia

\begin{abstract}
Quantum algorithms for simulating electronic ground states are slower than popular classical mean-field algorithms such as Hartree-Fock and density functional theory, but offer higher accuracy. Accordingly, quantum computers have been predominantly regarded as competitors to only the most accurate and costly classical methods for treating electron correlation. However, here we tighten bounds showing that certain first quantized quantum algorithms enable exact time evolution of electronic systems with exponentially less space and polynomially fewer operations in basis set size than conventional real-time time-dependent Hartree-Fock and density functional theory. Although the need to sample observables in the quantum algorithm reduces the speedup, we show that one can estimate all elements of the $k$-particle reduced density matrix with a number of samples scaling only polylogarithmically in basis set size. We also introduce a more efficient quantum algorithm for first quantized mean-field state preparation that is likely cheaper than the cost of time evolution. We conclude that quantum speedup is most pronounced for finite temperature simulations and suggest several practically important electron dynamics problems with potential quantum advantage.
\end{abstract}

\maketitle

\subsection*{Introduction}

Quantum computers were first proposed as tools for dynamics by Feynman \cite{Feynman1982} and later shown to be universal for that purpose by Lloyd \emph{et al.}~\cite{Lloyd1996}. Like those early papers, most work on this topic assumes that the advantage of quantum computers for dynamics is that they provide an approach to simulation with systematically improvable precision but without scaling exponentially. Here, we advance and analyze a different idea: certain (exact) quantum algorithms for dynamics may be more efficient than even classical methods that make uncontrolled approximations. We examine this in the context of simulating interacting fermions -- systems of relevance in fields such as chemistry, physics, and materials science.

It is often the case that practically relevant ground state problems in chemistry and materials science do not exhibit strong correlation. For those problems, many classical heuristic methods work well \cite{Bartlett2007Feb,Mardirossian2017Oct,Lee2022Oct}. Even for some strongly correlated systems, there are successful polynomial-scaling classical methods \cite{EQA}. Here, we argue that even if electronic systems are well described by mean-field theory, quantum algorithms can achieve speedup over classical algorithms for simulating the time evolution of such systems. We focus on comparing to mean-field methods such as real-time time-dependent Hartree-Fock and density functional theory due to their popularity and well-defined scaling. Nonetheless, many of our arguments translate to advantages over other known classical approaches to dynamics that are more expensive but more accurate than mean-field methods. This is a sharp contrast to prior studies of quantum algorithms, which have focused on strongly correlated ground state problems such as FeMoCo \cite{Reiher2017,Li2019,Berry2019,vonBurg2020,Lee2020}, P450 \cite{P450}, chromium dimers \cite{Elfving2020} and jellium \cite{BabbushLow,BabbushSpectra,Kivlichan2019,McArdle2022}, assessing quantum advantage over only the most accurate and costly classical algorithms.

Quantum algorithms competitive with efficient classical algorithms for dynamics have been analyzed in contexts outside of fermionic simulation. For example, work by Somma \cite{Somma2015} showed that certain one-dimensional quantum systems, such as harmonic oscillators, could be simulated with sublinear complexity in system size. Experimentally motivated work by Geller \emph{et al.}~\cite{Geller2015} also proposed simulating quantum systems in a single-excitation subspace, a task for which they suggested a constant factor speedup was plausible. However, neither work is connected to the context studied here.

We begin by analyzing the cost of classical mean-field dynamics and recent exact quantum algorithms in first quantization, focusing on explaining why there is often a quantum speedup in the number of basis functions over classical mean-field methods. Next, we analyze the overheads associated with measuring quantities of interest on a quantum computer and introduce more efficient methods for measuring the one-particle reduced density matrix in first quantization (which characterizes all mean-field observables). Then, we discuss the costs of preparing mean-field states on the quantum computer and describe new methods that make this cost likely negligible compared to the cost of time evolution. Finally, we conclude with a discussion of systems where these techniques might lead to practical quantum advantage over classical mean-field simulations.

\subsection*{Classical mean-field dynamics}

Here we will discuss mean-field classical algorithms for simulating the dynamics of interacting systems of electrons and nuclei. Thus, we will focus on the {\it ab initio} Hamiltonian with $\eta$ particles discretized using $N$ basis functions, which can be expressed as

\begin{equation}
H = \sum_{\mu\nu}^N
h_{\mu \nu}
a_{\mu}^\dagger
a_{\nu}
+\frac12
\sum_{\mu\nu\lambda\sigma}^N
\left(\mu\nu|\lambda\sigma\right)
a_{\mu}^\dagger
a_{\lambda}^\dagger
a_{\sigma}
a_{\nu}
\end{equation}
where $a_{\mu}^{(\dagger)}$ is the fermionic annihilation (creation) operator for the $\mu$-th orbital and integral values are given by
\begin{align}
h_{\mu \nu} & = \int \textrm{d}r \, \phi_\mu^*\left(r\right) \left(-\frac{\nabla^2}{2} + V\left(r\right)\right)  \phi_\nu \left(r\right), \\
\left(\mu\nu|\lambda\sigma\right) & =  \int \textrm{d}r_1 \textrm{d}r_2 \frac{\phi_{\mu}^* \left(r_1\right) \phi_{\nu}\left(r_1\right)  \phi_\lambda^* \left(r_2\right) \phi_\sigma \left(r_2\right) }{\left| r_1 - r_2\right|} \, .
\end{align}
Here, $V(r)$ is the external potential (perhaps arising from the nuclei) and $\phi_\mu(r)$ represents a spatial orbital. 

Exact quantum dynamics is encoded by the time-dependent Schr{\"o}dinger equation given by
\begin{equation}
i\frac{\partial}{\partial t}\ket{\psi\left(t\right)} = H \ket{\psi\left(t\right)} \, .
\label{eq:tdse}
\end{equation}
Mean-field dynamics, such as real-time time-dependent Hartree-Fock (RT-TDHF) \cite{Dreuw2005Nov}, employs a time-dependent variational principle within the space of single Slater determinants (i.e., anti-symmetrized product states) to approximate \eq{tdse}. Other methods with similar cost such as real-time time-dependent density functional theory (RT-TDDFT) rely on a relationship between the interacting system and an auxiliary non-interacting system to define dynamics within a space of single Slater determinants~\cite{runge1984density,van1999mapping,Dreuw2005Nov}. In both methods, there are $\eta$ occupied orbitals, each expressed as a linear combination of $N$ basis functions using the coefficient matrix, $\mathbf C_\text{occ}$. The dimension of $\mathbf C_\text{occ}$ is $N \times \eta$. These orbitals then constitute a Slater determinant (i.e., anti-symmetric product states), $\det(\mathbf C_\text{occ})$. Storing $\mathbf C_\text{occ}$ on a classical computer has space complexity ${\cal O}(N \eta \log(1/\epsilon))$.

As a result of this approximation, we solve the following effective time-dependent equation for the occupied orbital coefficients that specify the Slater determinant $\mathbf C_\text{occ}(t)$ at a given moment in time:
\begin{equation}
i \frac{\partial \mathbf C_\text{occ}\left(t\right)}{\partial t}
=
\mathbf F\left(t\right)\mathbf C_\text{occ}\left(t\right)
\label{eq:tdhf}
\end{equation}
where the effective one-body mean-field operator $\mathbf F(t)$, also known as the time-dependent Fock matrix, is
\begin{equation}\label{eq:Fdef}
F_{\mu\nu}\!\left(t\right) = h_{\mu \nu} + 
\sum_{\lambda\sigma}^{N} \left(\left(\mu\nu|\lambda\sigma\right)-\frac{\left(\mu\sigma|\lambda\nu\right)}{2}\right) P_{\sigma\lambda}\!\left(t\right)
\end{equation}
with $\mathbf P(t) = \mathbf C_\text{occ} (t)(\mathbf C_\text{occ}(t))^\dagger$.
While $\mathbf F(t)$ is an $N \times N$ dimensional matrix, we can apply it to $\mathbf C_\text{occ}(t)$ without explicitly constructing it, thus avoiding a space complexity of ${\cal O}(N^2 \log (1/\epsilon))$. Using the most common methods of applying this matrix to update each of $\eta$ occupied orbitals in $\mathbf C_\text{occ}(t)$ requires $\widetilde{\cal O}(N^2\eta)$ total operations\footnote{Throughout the paper we will use the convention that $\widetilde{\cal O}(\cdot)$ implies big-${\cal O}$ notation suppressing polylogarithmic factors.}.

However, a recent technique referred to as occ-RI-K by Head-Gordon and co-workers \cite{Manzer2015Jul}, and similarly ``Adaptively Compressed Exchange'' (ACE) \cite{Lin2016May,LinTDDFT2019} by Lin and co-workers, further reduces this cost. These methods leverage the observation that, when restricted to the subspace of the $\eta$ occupied orbitals, the effective rank of the Fock operator scales as ${\cal O}(\eta)$. This gives an approach to updating the Fock operator that requires only
\begin{equation}
\label{eq:update_cost}
\widetilde{\cal O}(N \, \eta^2 )
\end{equation}
operations. Below we will use gate complexity and the number of operations interchangeably when discussing the scaling of classical algorithms. Although these techniques are not implemented in every quantum chemistry code, we regard them as the main point of comparison to quantum algorithms. We also note that RT-TDDFT with hybrid functionals \cite{Jia2019Jul} has the same scaling as RT-TDHF. Simpler RT-TDDFT methods (i.e., those without exact exchange) can achieve better scaling, $\widetilde{\cal O}(N \eta)$ in a plane wave basis, but are often less accurate.

For finite-temperature simulation, one often needs to track $M > \eta$ orbitals with appreciable occupations, increasing the space complexity to ${\cal O}(N M \log(1/\epsilon))$. This increases the cost of occ-RI-K or ACE mean-field updates to $\widetilde{\cal O}(N M^2)$. At temperatures well above the Fermi energy, most orbitals have appreciable occupations so $M \simeq N$.
More expensive methods for dynamics that include electron correlation in the dynamics tend to scale at least linearly in the cost of ground state simulation at that level of theory. Thus, speedup over mean-field methods implies speedup over more expensive methods.

In recent years, by leveraging the ``nearsightedness'' of electronic systems \cite{Prodan2005Aug}, ``linear-scaling'' methods have been developed that achieve updates scaling as ${\cal{O}}(N)$ \cite{Kussmann2013Nov}.
For RT-TDHF and RT-TDDFT, linear-scaling comes from the fact that the off-diagonal elements of $\mathbf P$ fall off quickly with distance for the ground state \cite{ORourke2015Sep} and some low-lying excited states \cite{Zuehlsdorff2013Aug} in a localized basis.
One can show that for gapped ground states, the decay rate is exponential, whereas for metallic ground states, it is algebraic \cite{Prodan2005Aug}.
However, often such asymptotic behavior only onsets for very large systems, and the onset can be highly system-dependent. This should be contrasted with the scaling analyzed above and the scaling of quantum algorithms ({\it vide infra}) that onsets already at modest system sizes. Furthermore, the nearsightedness of electrons does not necessarily hold for dynamics of highly excited states and at high temperatures. Due to these limitations, we do not focus on comparing quantum algorithms and classical linear scaling methods.

It has also been suggested that one can exploit a low-rank structure of occupied orbitals using the quantized tensor train format \cite{Khoromskaia2011Jan}. 
Assuming the compression of orbitals in real space is efficient such that the rank does not grow with system size or the number of grid points, the storage cost is reduced to $\tilde{\cal{O}}(\eta)$, and the update cost is $\tilde{\cal{O}}(\eta^2)$. It is unclear how well compression can be realized for dynamics problems and finite-temperature problems, and to our knowledge, it has been never been deployed for those purposes. Accordingly, we do not consider this approach as the point of comparison.

We now discuss how many time steps are required to perform time evolution using classical mean-field approaches. The number of time steps will depend on the target precision as well as the total unitless time
\begin{equation}
    T = \max_{\mathbf C_\text{occ}} \left\| \mathbf{F} \right\| \, t,
    \label{eq:time}
\end{equation}
where $t$ is duration of time-evolution and $\| \cdot \|$ denotes the spectral norm.
This dependence on the norm of $\F$ is similar to what would be obtained in the case of linear differential equations despite the dependence on $\mathbf C_\text{occ}$; see \app{scanorm} for a derivation.
We can upper bound $T$ by considering its scaling in a local basis, and with open boundary conditions. We find
\begin{equation}
\label{eq:norm}
 \max_{\mathbf C_\text{occ}}\! \left\| \mathbf{F}\right \|\! = {\cal O}\!\left(\frac{\eta^{2/3}}{\delta}\! +  \frac{1}{\delta^2} \right)\! = {\cal O}\!\left(N^{1/3}\eta^{1/3}\! + \frac{N^{2/3}}{\eta^{2/3}}\right),
\end{equation}
where $\delta = {\cal O}((\eta / N)^{1/3})$ is the minimum grid spacing.
The first term comes from the Coulomb operator, and the second comes from the kinetic energy operator.
This scaling for $\delta$ comes from taking the computational cell volume proportional to $\eta$.

We briefly describe how this scaling for the norm is obtained and refer the reader to \app{scanorm} for more details. The $1/\delta^2$ term is obtained from the kinetic energy term in $h_{\mu \nu}$.
When diagonalized, that term will be non-zero only when $\mu = \nu$ with entries scaling as ${\cal O}(1/\delta^2)$ due to the $\nabla^2$ in the expression for $h_{\mu\nu}$.
That upper bounds the spectral norm for this diagonal matrix, and the spectral norm is unchanged under change of basis. The $\eta^{2/3}/\delta$ comes from the sum in the expression for $F_{\mu\nu}$. To bound the tensor norm of $(\mu\nu|\lambda\sigma) - (\mu\sigma|\lambda\nu)/2$ we can bound the norms of the two terms separately.
For each, the tensor norm can be upper bounded by noting that the summing over $\mu\nu,\lambda\sigma$ with normalized vectors corresponds to transformations of the individual orbitals in the integral defining $(\mu\nu|\lambda\sigma)$.
Since orbitals cannot be any more compact than width $\delta$, the $1/|r_1-r_2|$ in the integral averages to give ${\cal O}(1/\delta)$.
There is a further factor of $\eta^{2/3}$ when accounting for $\eta$ electrons that cannot be any closer than $\eta^{1/3}\delta$ on average.

The number of time steps required to effect evolution to within error $\epsilon$ depends on the choice of time integrator. Many options are available~\cite{castro2004propagators,jia2018fast,kononov2022electron}, and the optimal choice depends on implementation details like the basis set and pseudization scheme, as well as the desired accuracy \cite{Shepard2021Sep}.
In \app{scanorm}, we argue that the minimum number of time steps $t / \Delta t$ one could hope for by using an arbitrarily high order integration scheme of this sort is $T^{1 + o(1)} / \epsilon^{o(1)}$.
In particular, for an order $k$ integrator, the error can be bounded as $\mathcal{O}((\|\mathbf F\|\Delta t)^{k+1})$, with a possibly $k$-dependent constant factor that is ignored in this expression.
That means the error for $t/\Delta t$ time steps is $\mathcal{O}(t\|\mathbf F\|^{k+1}\Delta t^k)$.
To obtain error no more than $\epsilon$, take $(t/\Delta t)^{k} = \mathcal{O}((t\|\mathbf F\|^{k+1}/\epsilon)$, so the number of time steps is $t/\Delta t = \mathcal{O}(T^{1+1/k}/\epsilon^{1/k})$. Plugging \eq{norm} into \eq{time} and multiplying the update cost in \eq{update_cost} by $T^{1 + o(1)} / \epsilon^{o(1)}$ time steps, we find the number of operations required for classical mean-field time-evolution is
\begin{equation}
\left(N^{4/3}\eta^{7/3}t + N^{5/3} \eta^{4/3} t\right) \left(\frac{N t}{\epsilon}\right)^{o\left(1\right)} \, .
\label{eq:classical_cost}
\end{equation}

Finally, when performing mean-field dynamics, the central quantity of interest is often the one-particle reduced density matrix (1-RDM). The 1-RDM is an $N \times N$ matrix defined as a function of time with matrix elements
\begin{align}
\rho_{\mu\nu}\left(t\right) 
& =  \bra{\psi\left(t\right)}a_\mu^\dagger a_\nu\ket{\psi\left(t\right)}.
\end{align}
The 1-RDM is the central quantity of interest because it can be used to reconstruct any observable associated with a Slater determinant efficiently. For more general states, one would also need higher order RDMs; however, all higher order RDMs can be exactly computed from the 1-RDM via Wick's theorem when the wavefunction is a single Slater determinant \cite{Shavitt2009Aug}. Thus, when mean-field approximations work well, the time-dependent 1-RDM can also be used to compute multi-time correlators such as Green's functions and spectral functions.

\subsection*{Exact quantum dynamics in first quantization}

One of the key advantages of some quantum algorithms over mean-field classical methods is the ability to perform dynamics using the compressed representation of first quantization. First quantized quantum simulations date back to \cite{Wiesner1996SimulationsComputer,Abrams1997,Zalka1998,Boghosian1998}. They were first applied to fermionic systems in \cite{Abrams1997} and developed for molecular systems in \cite{Lidar1999,Kassal2008}. In first quantization, one encodes the wavefunction using $\eta$ different registers (one for each occupied orbital), each of size $\log N$ (to index the basis functions comprising each occupied orbital). The space complexity of first quantized quantum algorithms is ${\cal O}(\eta \log N)$.

As described previously, mean-field classical methods require space complexity of ${\cal O}(N\eta  \log (1/\epsilon))$ where $\epsilon$ is the target precision. Thus, these quantum algorithms require exponentially less space in $N$. Usually, when one thinks of quantum computers more efficiently encoding representations of quantum systems, the advantage comes from the fact that the wavefunction might be specified by a Hilbert space vector of dimension ${N \choose \eta}$ and could require as much space to represent explicitly on a classical computer. However, this alone \emph{cannot} give exponential quantum advantage in storage in $N$ over classical mean-field methods since mean-field methods only resolve entanglement arising from anti-symmetry and do not attempt to represent wavefunction in the full Hilbert space. Instead, the scaling advantage these quantum algorithms have over mean-field methods is related to the ability to store the distribution of each occupied orbital over $N$ basis functions, using only $\log N$ qubits.
But quantum algorithms require more than the compressed representations of first quantization in order to realize a scaling advantage over classical mean-field methods; they must also have sufficiently low gate complexity in the basis size and other parameters. 

Here we will review and tighten bounds for the most efficient known quantum algorithms for simulating the dynamics of interacting electrons. Early first quantized algorithms for simulating chemistry dynamics such as \cite{Lidar1999,Kassal2008} were based on Trotterization of the time-evolution operator in a real space basis and utilized the quantum Fourier transform to switch between a representation where the potential operator was diagonal and the kinetic operator was diagonal. This enabled Trotter steps with gate complexity $\widetilde{\cal O}(\eta^2)$ but the number of Trotter steps required for the approach of those papers scaled worse than linearly in $N$, $\eta$, the simulation time $t$ and the desired inverse error in the evolution, $1/\epsilon$.

Leveraging recent techniques for bounding Trotter error \cite{Childs2019,Su2020,Low2022b}, in \app{trotter_step} we show that using sufficiently high order Trotter formulas, the overall gate complexity of these algorithms can be reduced to
\begin{equation}\label{eq:Trotcomp}
\left(N^{1/3}\eta^{7/3}t +  N^{2/3}\eta^{4/3}t\right)\left( \frac{N t}{\epsilon}\right)^{o\left(1\right)} \, .
\end{equation}
This is the lowest reported scaling of any Trotter based first quantized quantum chemistry simulation.
We remark that the $N^{1/3}\eta^{7/3} t$ scaling is dominant whenever $N < \Theta(\eta^3)$. In that regime, it represents a quartic speedup in basis size for propagation over the classical mean-field scaling given in \eq{classical_cost}.

The first algorithms to achieve sublinear scaling in $N$ were those introduced by Babbush \emph{et al.}~\cite{BabbushContinuum}. That work focused on first quantized simulation in a plane wave basis and leveraged the interaction picture simulation scheme of \cite{Low2018} to give gate complexity scaling as
\begin{equation}
    \widetilde{\cal O}\left(N^{1/3}\eta^{8/3} t\right) \, .
    \label{eq:interaction_scaling}
\end{equation}
When $N > \Theta(\eta^4)$, this algorithm is more efficient than the Trotter based approach. Since that is also the regime where the second term in \eq{classical_cost} dominates that scaling, this represents a quintic speedup in $N$, coupled with a quadratic slowdown in $\eta$, over mean-field classical algorithms.
The work of Su \emph{et al.}~\cite{Su2021} analyzed the constant factors in the scaling of this algorithm for use in ground state preparation via quantum phase estimation \cite{Aspuru-Guzik2005}. In \app{constant_factors} of this work we analyze the constant factors in the scaling of this algorithm when deployed for time-evolution. 
Su \emph{et al.}~\cite{Su2021} also introduced algorithms with the same scaling as \eq{interaction_scaling} but in a grid representation (see Appendix K therein).

A key component of the algorithms of \cite{BabbushContinuum,Su2021} is the realization of block encodings \cite{Low2016} with just $\widetilde{\cal O}(\eta)$ gates. The difficult part of the block encoding is the preparation of a superposition state with amplitudes proportional to the square root of the Hamiltonian term coefficients. A novel quantum algorithm is devised for this purpose in \cite{BabbushContinuum} which scales only polylogarithmically in basis size. The $N^{1/3}$ dependence of \eq{interaction_scaling} enters via the number of times the block encoding must be repeated to perform time evolution, related to the norm of the potential operator. Under certain assumptions, the norm of the potential term can be reduced to a polylogarithmic dependence on $N$ (see \app{softened_potential} for more details). In that case, exponential quantum advantage in $N$ is possible.

We note that second quantized algorithms outperform first quantized quantum algorithms in gate complexity when $N < \Theta(\eta^2)$. This is because while the best scaling Trotter steps in first quantization require $\widetilde{\cal O}(\eta^2)$ gates \cite{Kassal2008}, the best scaling Trotter steps in second quantization require $\widetilde{\cal O}(N)$ gates. As recently shown in \cite{Low2022b}, such approaches lead to a total gate complexity for Trotter based second quantized algorithms scaling as
\begin{equation}
\left(N^{4/3}\eta^{1/3}t +  \frac{N^{5/3}}{\eta^{2/3}}t\right)\left( \frac{N t}{\epsilon}\right)^{o\left(1\right)} \, .
\end{equation}
In the limit that $\eta = \Theta(N)$, this approach has ${\cal O}(N^{5/3})$ gate complexity, which is significantly less than the ${\cal O}(N^{8/3})$ gate complexity of Trotter based first quantized quantum algorithms mentioned here, or the ${\cal O}(N^{11/3})$ gate complexity of classical mean-field algorithms. (See \app{regimes} for discussion on the overall quantum speedup in different regimes of how $N$ scales in $\eta$.) However, these second quantized approaches generally require at least ${\cal O}(N)$ qubits. The approach used in \cite{Low2022b} to implement Trotter steps involves the fast multipole method \cite{Rokhlin1985}, which requires ${\cal O}(N \log N)$ qubits as well as the restriction to a grid-like basis. When using such basis sets, we expect $N \gg \eta$, and so this space complexity would be prohibitive for quantum computers.

Methods such as fast multipole \cite{Rokhlin1985}, Barnes-Hut \cite{BarnesHut}, or particle-mesh Ewald \cite{Ewald1993} compute the Coulomb potential in time $\widetilde{\cal O}(\eta)$ when implemented within the classical random access memory model. If the Coulomb potential could be computed with that complexity on a quantum computer it would speed up the first quantized Trotter algorithms discussed here by a factor of ${\cal O}(\eta)$. However, it is unclear whether such algorithms extend to the quantum circuit model with the same complexity without unfavorable assumptions such as QRAM \cite{Childs2022,Giovannetti2008}, or without restricting the maximum number of electrons within a region of space (see \app{regimes} for details). Thus, we exclude such approaches from our comparisons here.

\subsection*{Quantum measurement costs}

In contrast to classical mean-field simulations, on a quantum computer, all observables must be sampled from the quantum simulation. There are a variety of techniques for doing this, with the optimal choice depending on the target precision in the estimated observable as well as the number and type of observables one wishes to measure. For example, when measuring $W$ unit norm observables to precision $\epsilon$ one could use algorithms introduced in \cite{Huggins2021} which require $\widetilde{\cal O}(\sqrt{W} / \epsilon)$ state preparations and ${\cal O}(W \log(1/\epsilon))$ ancillae.
Thus, to measure all $W = {\cal O}(N^2)$ elements of the 1-RDM to a fixed additive error in each element, this approach would require $\widetilde{\cal O}(N / \epsilon)$ circuit repetitions. While scaling optimally in $\epsilon$ for quantum algorithms, this linear scaling in $N$ would decrease the speedup over classical mean-field algorithms.

Instead, here we will focus on measuring the 1-RDM with a new variation of the classical shadows method.
Classical shadows were introduced in \cite{Huang2020} and adapted for second quantized fermionic systems in \cite{Zhao2021,Wan2022,OGorman2022-ag,Low2022}.
Our approach is to apply a separate random Clifford channel to each of the $\eta$ different $\log N$ sized registers representing an occupied orbital.
Applying a random Clifford on $\log N$ qubits requires ${\cal O}(\log^2\!N)$ gates; thus, ${\cal O}(\eta \log^2\!N)$ gates comprise the full channel (a negligible cost relative to time-evolution).
In \app{shadows} we prove that repeating this procedure $\widetilde{\cal O}(\eta / \epsilon^2)$ times enables estimation of all 1-RDM elements to within additive error $\epsilon$.
More generally, we prove that this same procedure allows for estimating all higher order $k$-particle RDMs elements with $\widetilde{\cal O}(k^k \eta^k / \epsilon^2)$ circuit repetitions.
In the next section and in \app{state_prep}, we describe a way to map second quantized representations to first quantization, effectively extending the applicability of these classical shadows techniques to second quantization as well.

To give some intuition for how this works, we consider the 1-RDM elements in first quantization:
\begin{equation}
    \rho_{\mu\nu}\left(t\right) = \bra{\psi\left(t\right)} \left(\sum_{j=1}^{\eta} \ket{\mu}\!
    \!\bra{\nu}_j\right) \ket{\psi\left(t\right)} \, ,
\end{equation}
where the subscript $j$ indicates which of the $\eta$ registers the orbital-$\nu$ to orbital-$\mu$ transition operator acts upon.
Due to the antisymmetry of the occupied orbital registers in first quantization, we could also obtain the 1-RDM by measuring the expectation value of an operator such as $\eta \ket{p}\!
    \!\bra{q}_1 $, which acts on just one of the $\eta$ registers.
Because $\eta \ket{p}\!
    \!\bra{q}_1 $ has the Hilbert-Schmidt norm of ${\cal O}(\eta)$, the standard classical shadows procedure applied to this $\log N$ sized register would require $\widetilde{\cal O}(\eta^2 / \epsilon^2)$ repetitions.
But we can parallelize the procedure by also collecting classical shadows on the other $\eta - 1$ registers simultaneously.
One way of interpreting the results we prove in \app{shadows} is that, due to antisymmetry, these registers are anticorrelated.
As a result, collecting shadows on all $\eta$ registers simultaneously reduces the overall cost by at least a factor of $\eta$.
To obtain $W$ elements of the 1-RDM one will need to perform an offline classical inversion of the Clifford channel that will scale as $\widetilde{\cal O}(W \eta^2 / \epsilon^2)$; of course, any quantum or classical algorithm for estimating $W$ quantities must have gate complexity of at least $W$.
However, this only needs to be done once and does not scale in $t$.
As a comparison, the cost of computing 1-RDM classically without exploiting sparsity is $\mathcal O(W\eta)$.

When simulating systems that are well described by mean-field theory, all observables can be efficiently obtained from the time-dependent 1-RDM.
However, for observables such as the energy that have a norm growing in system size or basis size, targeting fixed additive error in the 1-RDM elements will not be sufficient for fixed additive error in the observable.
In such situations, it could be preferable to estimate the observable of interest directly using a combination of block encodings \cite{Low2016} and amplitude amplification \cite{Brassard2002} (see e.g., \cite{Rall2020}).
Assuming the cost of block encoding the observable is negligible to the cost of time-evolution (true for many observables, including energy), this results in needing ${\cal O}(\lambda / \epsilon)$ circuit repetitions, where $\lambda$ is the 1-norm associated with the block encoding of the observable.
For example, whereas there are many correlation functions with $\lambda = {\cal O}(1)$, for the energy $\lambda = {\cal O}(N^{1/3} \eta^{5/3} + N^{2/3}\eta^{1/3})$ \cite{BabbushContinuum}.
Multiplying that to the cost of quantum time-evolution further reduces the quantum speedup.

The final measurement cost to consider is that of resolving observables in time.
In some cases, e.g., when computing scattering cross sections or reaction rates, one might be satisfied measuring the state of the simulation at a single point in time $t$.
However, in other situations, one might wish to simulate time-evolution up to a maximum duration of $t$, but sample quantities at $L$ different points in time.
Most quantum simulation methods that accomplish this goal scale as ${\cal O}(L)$ (${\cal O}(L t)$ in the case where the points are evenly spaced in time).
However, the work of \cite{Huggins2021} shows that this cost can be reduced to ${\cal O}(\sqrt{L} t)$, but with an additional additive space complexity of $\widetilde{\cal O}(L)$.
Either way, this is another cost that plagues quantum but not classical algorithms.

\subsection*{Quantum state preparation costs}

Initial state preparation can be as simple or as complex as the state that one desires to begin the simulation in. Since the focus of this paper is outperforming mean-field calculations, we will discuss the cost of preparing Slater determinants within first quantization. For example, one may wish to start in the Hartree-Fock state (the lowest energy Slater determinant). Classical approaches to computing the Hartree-Fock state scale as roughly $\widetilde{\cal O}(N \eta^2)$ in practice \cite{Manzer2015Jul,Lin2016May}. This is a one-time additive classical cost that is not multiplied by the duration of time-evolution so it is likely subdominant to other costs.

Quantum algorithms for preparing Slater determinants have focused on the ``Givens rotation'' approach introduced in \cite{Kivlichan2018QuantumConnectivity} for second quantization. That algorithm requires ${\cal O}(N\eta)$ ``Givens rotation'' unitaries. Such unitaries can be implemented with ${\cal O}(\eta \log N)$ gates in first quantization \cite{Delgado2022,Su2021}, hence combining that with the sequence of rotations called for in \cite{Kivlichan2018QuantumConnectivity} gives an approach to preparing Slater determinants in first quantization with $\widetilde{\cal O}(N \eta^2)$ gates in total, a relatively high cost. Unlike the offline cost to compute the occupied orbital coefficients, this state preparation cost would be multiplied by the number of measurement repetitions.

Here, we develop a new algorithm to prepare arbitrary Slater determinants in first quantization with only $\widetilde{\cal O}(N \eta)$ gates. The approach is to first generate a superposition of all of the configurations of occupied orbitals in the Slater determinant while making sure that electron registers holding the label of the occupied orbitals are always sorted within each configuration so that they are in ascending order. This is necessary because without such structure (or guarantees of something similar), the next step (anti-symmetrization) could not be reversible. For this next step, we apply the anti-symmetrization procedure introduced in \cite{Berry2018}, which requires only ${\cal O}(\eta \log \eta \log N)$ gates (a negligible additive cost). Note that if one did not need the property that the configurations were ordered by the electron register, then it would be relatively trivial to prepare an arbitrary Slater determinant as a product state of $\eta$ different registers, each in an arbitrary superposition over $\log N$ bits (e.g., using the brute-force state preparation of \cite{Shende2006}).

\begin{table*}[t]
\begin{tabular}{c|c|c|c|c}
Processor
& Algorithm
& Observable
& Space
& Gate complexity\\
\hline\hline
classical 
& $T=0$ mean-field with occ-RI-K/ACE \cite{Manzer2015Jul,Lin2016May}
& anything
& $\widetilde{\cal O}(N \eta)$
& $(N^{4/3} \eta^{7/3}t +  N^{5/3} \eta^{4/3}t)(\frac{N t}{\epsilon})^{o(1)}$\\
classical
& $T>0$ mean-field (density matrix) with \cite{Manzer2015Jul,Lin2016May}
& anything
& $\widetilde{\cal O}(N M)$
& $(N^{4/3} M^2 \eta^{1/3}t \! +\!  \frac{N^{5/3} \! M^2 t } {\eta^{2/3}})(\frac{N t}{\epsilon})^{o(1)}$\\
classical
& $T>0$ mean-field (sampled trajectories) with \cite{Manzer2015Jul,Lin2016May}
& anything
& $\widetilde{\cal O}(N \eta)$
& $(\frac{N^{4/3} \eta^{7/3}t}{\epsilon^2} +  \frac{N^{5/3} \eta^{4/3}t}{\epsilon^2})(\frac{N t}{\epsilon})^{o(1)}$\\
quantum
& second quantized Trotter grid algorithm \cite{Low2022b}
& sample $\ket{\psi(t)}$ 
& ${\cal O}(N \log N)$
&  $(N^{4/3}\eta^{1/3}t + \frac{N^{5/3}t}{\eta^{2/3}})   (\frac{N t}{\epsilon})^{o(1)}$ \\
quantum
& first quantized Trotter grid algorithm here
& sample $\ket{\psi(t)}$ 
& ${\cal O}(\eta \log N)$
&  $(N^{1/3}\eta^{7/3}t +\! N^{2/3}\eta^{4/3}t)   (\frac{N t}{\epsilon})^{o(1)}$ \\
quantum
& interaction picture plane wave algorithm \cite{BabbushContinuum}
& sample $\ket{\psi(t)}$
& ${\cal O}(\eta \log N)$
&  $\widetilde{\cal O}(N^{1/3}\eta^{8/3} t)$ \\
quantum
& grid basis algorithm from Appendix K of \cite{Su2021} 
& sample $\ket{\psi(t)}$ 
& ${\cal O}(\eta \log N)$
&  $\widetilde{\cal O}(N^{1/3}\eta^{8/3} t)$ \\
quantum
& new shadows procedure here
& $k$-RDM$(t)$
& ${\cal O}(\eta \log N)$
&  $\widetilde{\cal O}(k^k \eta^{k} L \, {\cal C}_{\rm samp} / \epsilon^2)$ \\
quantum
& gradient measurement \cite{Huggins2021}
& $\bra{\psi(t)} O  \ket{\psi(t)}$
& $\widetilde{\cal O}(\eta + L)$
&  $\widetilde{\cal O}(\sqrt{L} \, {\cal C}_{\rm samp} \,\lambda  / \epsilon)$ \\
quantum
& gradient measurement \cite{Huggins2021}
& $\bra{\psi(t)} H \ket{\psi(t)}$ 
& $\widetilde{\cal O}(\eta + L)$
&  $\widetilde{\cal O}(\frac{\sqrt{L}  {\cal C}_{\rm samp} t (N^{1/3}\eta^{5/3}\! + N^{2/3}\eta^{1/3})}{\epsilon})$ \\
\hline
\end{tabular}
\caption{\label{tab:comparison} Costs of exact quantum algorithms and mean-field classical algorithms for simulating fermionic dynamics. $N$ is the number of basis functions, $\eta$ is the number of particles, $\epsilon$ is target precision, $M$ is the number of appreciably occupied orbitals in a finite temperature ($T$) simulation ($M \simeq N$ for high $T$), $O$ is any observable having norm $\lambda$ that can be block encoded with cost less than time-evolution, $t$ is the duration of evolution, $L$ is the number of time points at which we wish to resolve quantities and ${\cal C}_{\rm samp}$ is the cost of sampling $\ket{\psi(t)}$ with a quantum algorithm. 
For classical algorithms, gate complexity means the number of floating point operations.
We are not accounting for the additive time-independent costs of state preparation ($\widetilde{\cal O}(\eta N)$ gates using the procedure of \app{state_prep}), of classically computing initial occupied orbital coefficients, or of classically reconstructing the $k$-RDM given measurement outcomes. Thus, this table reports gate complexities for long-time $t$ simulations. In \app{regimes} we provide a table clarifying which algorithm has optimal gate complexity as a function of $N / \eta$.}
\end{table*}

A high level description of how the superposition of ``ordered'' configurations comprising the Slater determinant is prepared now follows, with details given in \app{state_prep}. The idea is to generate the Slater determinant in second quantization in an ancilla register using the Givens rotation approach of \cite{Kivlichan2018QuantumConnectivity}, while mapping the second quantized representation to a first quantized representation one second quantized qubit (orbital) at a time. One can get away with storing only $\eta$ non-zero qubits (orbitals) at a time in the second quantized representation because the Givens rotation algorithm gradually produces qubits that do not require further rotations. Whenever one produces a new qubit in the second quantized representation that does not require further rotations, one can convert it to the first quantized representation, which zeros that qubit. Thus, the procedure only requires ${\cal O}(\eta)$ ancilla qubits -- a negligible additive space overhead. A total of ${\cal O}(N \eta \log N)$ gates are required because for each of ${\cal O}(N)$ steps one accesses all ${\cal O}(\eta \log N)$ qubits of the first quantized representation. In \app{state_prep}, we show the Toffoli complexity can be further reduced to ${\cal O}(N \eta)$ with some additional tricks.

Finally, we note that quantum algorithms can also perform finite temperature simulation by sampling initial states from a thermal density matrix in each realization of the circuit. For example, if the system is in a regime that is well treated by mean-field theory, one can initialize the system in a Slater determinant that is sampled from the thermal Hartree-Fock state \cite{Mermin1963}. Since the output of quantum simulations already needs to be sampled this does not meaningfully increase the number of quantum repetitions required. Such an approach would also be viable classically (and would allow one to perform simulations that only ever treat $\eta$ occupied orbitals despite having finite temperature), but would introduce a multiplicative ${\cal O}(1/\epsilon^2)$ sampling cost. For either processor there is the cost of classically computing the thermal Hartree-Fock state, but this is a one-time cost not multiplied by the duration of time-evolution or ${\cal O}(1/\epsilon^2)$.

\subsection*{Discussion}

We have reviewed and provided new analysis of the costs associated with both classical mean-field methods and state-of-the-art exact quantum algorithms for dynamics. We introduced new and more efficient strategies for initializing Slater determinants in first quantization, and for measuring RDMs via classical shadows. We compare these costs in \tab{comparison}. Relative to classical mean-field methods, we see that when the goal is to sample the output of quantum dynamics at zero temperature, the best quantum algorithms deliver a seventh power speedup in particle number when $N < \Theta(\eta^2)$, quartic in basis size  when $\Theta(\eta^2) < N < \Theta(\eta^3)$, super-quadratic in basis size when $\Theta(\eta^3) < N < \Theta(\eta^4)$ and quintic in basis size but with a quadratic slowdown in $\eta$ when $N > \Theta(\eta^4)$. In the extremal regimes of $N < \Theta(\eta^{5/4})$ and $N > \Theta(\eta^4)$, the overall speedup in system size is super-quadratic (see \app{regimes} for details). These are large enough speedups that quantum advantage may persist even despite quantum error-correction overhead \cite{BabbushFocus}. Note that our analyses are based on derivable upper bounds for both classical and quantum algorithms over all possible input states. Tighter bounds derived over restricted inputs would give asymptotically fewer time steps required for both classical and quantum Trotter algorithms \cite{An2021}.

The story becomes more nuanced when we wish to estimate $\epsilon$-accurate quantities via sampling the quantum simulation output at $L$ different time points. For observables with norm scaling as ${\cal O}(1)$ (e.g., simple correlation functions or single RDM elements), or those pertaining to amplitudes of the state (e.g. scattering amplitudes or reaction rates) the scaling advantages in system and basis size are maintained but at the cost of the quantum algorithm slowing down by a multiplicative factor of at least ${\cal O}(\sqrt{L}/\epsilon)$. When targeting the 1-RDM (which characterizes all observables within mean-field theory) we maintain speedup in $N$ but at the cost of an additional linear slowdown in $\eta$. When measuring the total energy, the overall speedup becomes tenuous. Thus, the viability of quantum advantage with respect to zero temperature classical mean-field methods depends sensitively on the target precision and particular observables of interest.

In terms of applications, we expect RT-TDHF to provide qualitatively correct dynamics whenever electron correlation effects are not pronounced. RT-TDDFT includes some aspects of electron correlation but the adiabatic approximation often creates issues~\cite{Provorse2016May} and the method suffers from self-interaction error \cite{Cohen2008Aug}. When the adiabatic approximation is accurate, self-interaction error is not pronounced, and the system does not exhibit strong correlation, we expect RT-TDDFT to generate qualitatively correct dynamics. When there are many excited states to consider for spectral properties, it is often beneficial to resort to real-time dynamics methods instead of linear-response methods. Furthermore, we are often interested in real-time non-equilibrium electronic dynamics. This is the case for photo-excited molecules near metal surfaces \cite{Tully2000Oct}. The time evolution of electron density (i.e., the diagonal of the 1-RDM) near the molecule is of particular interest due to its implications for chemical reactivity and kinetics in the context of heterogeneous catalysis \cite{Wang2013Nov}.
In this application, the simulation of nuclear degrees of freedom may be equally important, which we will leave for future analysis. 

We see from \tab{comparison} that prospects for quantum advantage are considerably increased at finite temperatures. Thus, a promising class of problems to consider for speedup over mean-field methods is the electronic dynamics of either warm dense matter (WDM) \cite{baczewski2016x,magyar2016stopping,andrade2018negative,ding2018ab} or hot dense matter (HDM) \cite{atzeni2004physics}. The WDM regime (where thermal energy is comparable to the Fermi energy) is typified by temperatures and densities that require the accurate treatment of both quantum and thermal effects~\cite{graziani2014frontiers,dornheim2018uniform}. These conditions occur in planetary interiors, experiments involving high-intensity lasers, and in inertial confinement fusion experiments as the ablator and fuel are compressed into the conditions necessary for thermonuclear ignition. Ignition occurs in the hot dense matter (HDM) regime (where thermal energy far exceeds the Fermi energy). While certain aspects of these systems are conspicuously classical, they still present spectra that can be challenging to model~\cite{bailey2015higher,nagayama2019systematic}.


Simulations in either the WDM or HDM regime typically rely on large plane wave basis sets and the inclusion of ten to one-hundred times more partially occupied orbitals per atom than would be required at lower temperatures. Often, the attendant costs are so great that it is impractical to implement RT-TDDFT with hybrid functionals. Therefore, many calculations necessarily use adiabatic semi-local approximations, even on large classical high-performance computing systems~\cite{baczewski2016x}. Thus, the level of practically achievable accuracy can be quite low, and the prospect of exactly simulating the dynamics on a quantum computer is particularly compelling.

Although we have focused on assessing quantum speedup over mean-field theory, we view the main contribution of this work as more general. In particular, if exact quantum simulations are sometimes more efficient than classical mean-field methods, then all levels of theory in between mean-field and exact diagonalization are in scope for possible quantum advantage. Targeting systems that require more correlated calculations narrows the application space but improves prospects for quantum advantage due to the unfavorable scaling of the requisite classical algorithms. Thus, it may turn out that the domain of systems requiring, say, coupled cluster dynamics~\cite{Huber2011Feb,Sato2018Feb,Shushkov2019Oct,White2019Nov}, might be an even more ideal regime for practical quantum advantage, striking a balance in the trade-off between the breadth of possible applications and the cost of the classical competition.


\subsection*{Acknowledgments} 

The authors thank Alina Kononov, Garnet Kin-Lic Chan, Robin Kothari, Alicia Magann, Fionn Malone, Jarrod McClean, Thomas O'Brien, Nicholas Rubin, Henry Schurkus, Rolando Somma, and Yuan Su for helpful discussions and feedback. We thank Lin Lin for bringing our attention to the quantized tensor train format in \cite{Khoromskaia2011Jan} and thank Yuehaw Khoo for a discussion related to this. DWB worked on this project under a sponsored research agreement with Google Quantum AI. DWB is supported by Australian Research Council Discovery Projects DP190102633 and DP210101367. ADB acknowledges support from the Advanced Simulation and Computing Program and the Sandia LDRD Program. Some work on this project occurred while in residence at The Kavli Institute for Theoretical Physics, supported in part by the National Science Foundation under Grant No. NSF PHY-1748958. Sandia National Laboratories is a multi-mission laboratory managed and operated by National Technology and Engineering Solutions of Sandia, LLC, a wholly owned subsidiary of Honeywell International, Inc., for DOE’s National Nuclear Security Administration under contract DE-NA0003525.

\bibliography{ryan_Mendeley,extra}

\begin{thebibliography}{98}%
\makeatletter
\providecommand \@ifxundefined [1]{%
 \@ifx{#1\undefined}
}%
\providecommand \@ifnum [1]{%
 \ifnum #1\expandafter \@firstoftwo
 \else \expandafter \@secondoftwo
 \fi
}%
\providecommand \@ifx [1]{%
 \ifx #1\expandafter \@firstoftwo
 \else \expandafter \@secondoftwo
 \fi
}%
\providecommand \natexlab [1]{#1}%
\providecommand \enquote  [1]{``#1''}%
\providecommand \bibnamefont  [1]{#1}%
\providecommand \bibfnamefont [1]{#1}%
\providecommand \citenamefont [1]{#1}%
\providecommand \href@noop [0]{\@secondoftwo}%
\providecommand \href [0]{\begingroup \@sanitize@url \@href}%
\providecommand \@href[1]{\@@startlink{#1}\@@href}%
\providecommand \@@href[1]{\endgroup#1\@@endlink}%
\providecommand \@sanitize@url [0]{\catcode `\\12\catcode `\$12\catcode
  `\&12\catcode `\#12\catcode `\^12\catcode `\_12\catcode `\%12\relax}%
\providecommand \@@startlink[1]{}%
\providecommand \@@endlink[0]{}%
\providecommand \url  [0]{\begingroup\@sanitize@url \@url }%
\providecommand \@url [1]{\endgroup\@href {#1}{\urlprefix }}%
\providecommand \urlprefix  [0]{URL }%
\providecommand \Eprint [0]{\href }%
\providecommand \doibase [0]{http://dx.doi.org/}%
\providecommand \selectlanguage [0]{\@gobble}%
\providecommand \bibinfo  [0]{\@secondoftwo}%
\providecommand \bibfield  [0]{\@secondoftwo}%
\providecommand \translation [1]{[#1]}%
\providecommand \BibitemOpen [0]{}%
\providecommand \bibitemStop [0]{}%
\providecommand \bibitemNoStop [0]{.\EOS\space}%
\providecommand \EOS [0]{\spacefactor3000\relax}%
\providecommand \BibitemShut  [1]{\csname bibitem#1\endcsname}%
\let\auto@bib@innerbib\@empty
\bibitem [{\citenamefont {Feynman}(1982)}]{Feynman1982}%
  \BibitemOpen
  \bibfield  {author} {\bibinfo {author} {\bibfnamefont {Richard~P}\
  \bibnamefont {Feynman}},\ }\bibfield  {title} {\enquote {\bibinfo {title}
  {{Simulating physics with computers}},}\ }\href {\doibase 10.1007/BF02650179}
  {\bibfield  {journal} {\bibinfo  {journal} {International Journal of
  Theoretical Physics}\ }\textbf {\bibinfo {volume} {21}},\ \bibinfo {pages}
  {467--488} (\bibinfo {year} {1982})}\BibitemShut {NoStop}%
\bibitem [{\citenamefont {Lloyd}(1996)}]{Lloyd1996}%
  \BibitemOpen
  \bibfield  {author} {\bibinfo {author} {\bibfnamefont {Seth}\ \bibnamefont
  {Lloyd}},\ }\bibfield  {title} {\enquote {\bibinfo {title} {{Universal
  Quantum Simulators}},}\ }\href {\doibase 10.1126/science.273.5278.1073}
  {\bibfield  {journal} {\bibinfo  {journal} {Science}\ }\textbf {\bibinfo
  {volume} {273}},\ \bibinfo {pages} {1073--1078} (\bibinfo {year}
  {1996})}\BibitemShut {NoStop}%
\bibitem [{\citenamefont {Bartlett}\ and\ \citenamefont
  {Musial}(2007)}]{Bartlett2007Feb}%
  \BibitemOpen
  \bibfield  {author} {\bibinfo {author} {\bibfnamefont {Rodney~J.}\
  \bibnamefont {Bartlett}}\ and\ \bibinfo {author} {\bibfnamefont {Monika}\
  \bibnamefont {Musial}},\ }\bibfield  {title} {\enquote {\bibinfo {title}
  {{Coupled-cluster theory in quantum chemistry}},}\ }\href {\doibase
  10.1103/RevModPhys.79.291} {\bibfield  {journal} {\bibinfo  {journal} {Rev.
  Mod. Phys.}\ }\textbf {\bibinfo {volume} {79}},\ \bibinfo {pages} {291--352}
  (\bibinfo {year} {2007})}\BibitemShut {NoStop}%
\bibitem [{\citenamefont {Mardirossian}\ and\ \citenamefont
  {Head-Gordon}(2017)}]{Mardirossian2017Oct}%
  \BibitemOpen
  \bibfield  {author} {\bibinfo {author} {\bibfnamefont {Narbe}\ \bibnamefont
  {Mardirossian}}\ and\ \bibinfo {author} {\bibfnamefont {Martin}\ \bibnamefont
  {Head-Gordon}},\ }\bibfield  {title} {\enquote {\bibinfo {title} {{Thirty
  years of density functional theory in computational chemistry: an overview
  and extensive assessment of 200 density functionals}},}\ }\href {\doibase
  10.1080/00268976.2017.1333644} {\bibfield  {journal} {\bibinfo  {journal}
  {Mol. Phys.}\ }\textbf {\bibinfo {volume} {115}},\ \bibinfo {pages}
  {2315--2372} (\bibinfo {year} {2017})}\BibitemShut {NoStop}%
\bibitem [{\citenamefont {Lee}\ \emph {et~al.}(2022{\natexlab{a}})\citenamefont
  {Lee}, \citenamefont {Pham},\ and\ \citenamefont {Reichman}}]{Lee2022Oct}%
  \BibitemOpen
  \bibfield  {author} {\bibinfo {author} {\bibfnamefont {Joonho}\ \bibnamefont
  {Lee}}, \bibinfo {author} {\bibfnamefont {Hung~Q.}\ \bibnamefont {Pham}}, \
  and\ \bibinfo {author} {\bibfnamefont {David~R.}\ \bibnamefont {Reichman}},\
  }\bibfield  {title} {\enquote {\bibinfo {title} {{Twenty Years of
  Auxiliary-Field Quantum Monte Carlo in Quantum Chemistry: An Overview and
  Assessment on Main Group Chemistry and Bond-Breaking}},}\ }\href {\doibase
  10.1021/acs.jctc.2c00802} {\bibfield  {journal} {\bibinfo  {journal} {J.
  Chem. Theory Comput.}\ }\textbf {\bibinfo {volume} {2022}} (\bibinfo {year}
  {2022}{\natexlab{a}}),\ 10.1021/acs.jctc.2c00802}\BibitemShut {NoStop}%
\bibitem [{\citenamefont {Lee}\ \emph {et~al.}(2022{\natexlab{b}})\citenamefont
  {Lee}, \citenamefont {Lee}, \citenamefont {Zhai}, \citenamefont {Tong},
  \citenamefont {Dalzell}, \citenamefont {Kumar}, \citenamefont {Helms},
  \citenamefont {Gray}, \citenamefont {Cui}, \citenamefont {Liu}, \citenamefont
  {Kastoryano}, \citenamefont {Babbush}, \citenamefont {Preskill},
  \citenamefont {Reichman}, \citenamefont {Campbell}, \citenamefont {Valeev},
  \citenamefont {Lin},\ and\ \citenamefont {Chan}}]{EQA}%
  \BibitemOpen
  \bibfield  {author} {\bibinfo {author} {\bibfnamefont {Seunghoon}\
  \bibnamefont {Lee}}, \bibinfo {author} {\bibfnamefont {Joonho}\ \bibnamefont
  {Lee}}, \bibinfo {author} {\bibfnamefont {Huanchen}\ \bibnamefont {Zhai}},
  \bibinfo {author} {\bibfnamefont {Yu}~\bibnamefont {Tong}}, \bibinfo {author}
  {\bibfnamefont {Alexander~M.}\ \bibnamefont {Dalzell}}, \bibinfo {author}
  {\bibfnamefont {Ashutosh}\ \bibnamefont {Kumar}}, \bibinfo {author}
  {\bibfnamefont {Phillip}\ \bibnamefont {Helms}}, \bibinfo {author}
  {\bibfnamefont {Johnnie}\ \bibnamefont {Gray}}, \bibinfo {author}
  {\bibfnamefont {Zhi-Hao}\ \bibnamefont {Cui}}, \bibinfo {author}
  {\bibfnamefont {Wenyuan}\ \bibnamefont {Liu}}, \bibinfo {author}
  {\bibfnamefont {Michael}\ \bibnamefont {Kastoryano}}, \bibinfo {author}
  {\bibfnamefont {Ryan}\ \bibnamefont {Babbush}}, \bibinfo {author}
  {\bibfnamefont {John}\ \bibnamefont {Preskill}}, \bibinfo {author}
  {\bibfnamefont {David~R.}\ \bibnamefont {Reichman}}, \bibinfo {author}
  {\bibfnamefont {Earl~T.}\ \bibnamefont {Campbell}}, \bibinfo {author}
  {\bibfnamefont {Edward~F.}\ \bibnamefont {Valeev}}, \bibinfo {author}
  {\bibfnamefont {Lin}\ \bibnamefont {Lin}}, \ and\ \bibinfo {author}
  {\bibfnamefont {Garnet Kin-Lic}\ \bibnamefont {Chan}},\ }\bibfield  {title}
  {\enquote {\bibinfo {title} {{Is There Evidence for Exponential Quantum
  Advantage in Quantum Chemistry?}}}\ }\href {\doibase
  10.48550/arxiv.2208.02199} {\bibfield  {journal} {\bibinfo  {journal}
  {arXiv:2208.02199}\ } (\bibinfo {year} {2022}{\natexlab{b}}),\
  10.48550/arxiv.2208.02199}\BibitemShut {NoStop}%
\bibitem [{\citenamefont {Reiher}\ \emph {et~al.}(2017)\citenamefont {Reiher},
  \citenamefont {Wiebe}, \citenamefont {Svore}, \citenamefont {Wecker},\ and\
  \citenamefont {Troyer}}]{Reiher2017}%
  \BibitemOpen
  \bibfield  {author} {\bibinfo {author} {\bibfnamefont {Markus}\ \bibnamefont
  {Reiher}}, \bibinfo {author} {\bibfnamefont {Nathan}\ \bibnamefont {Wiebe}},
  \bibinfo {author} {\bibfnamefont {Krysta~M}\ \bibnamefont {Svore}}, \bibinfo
  {author} {\bibfnamefont {Dave}\ \bibnamefont {Wecker}}, \ and\ \bibinfo
  {author} {\bibfnamefont {Matthias}\ \bibnamefont {Troyer}},\ }\bibfield
  {title} {\enquote {\bibinfo {title} {{Elucidating Reaction Mechanisms on
  Quantum Computers}},}\ }\href
  {http://www.pnas.org/content/114/29/7555.abstract} {\bibfield  {journal}
  {\bibinfo  {journal} {Proceedings of the National Academy of Sciences}\
  }\textbf {\bibinfo {volume} {114}},\ \bibinfo {pages} {7555--7560} (\bibinfo
  {year} {2017})}\BibitemShut {NoStop}%
\bibitem [{\citenamefont {Li}\ \emph {et~al.}(2019)\citenamefont {Li},
  \citenamefont {Li}, \citenamefont {Dattani}, \citenamefont {Umrigar},\ and\
  \citenamefont {Chan}}]{Li2019}%
  \BibitemOpen
  \bibfield  {author} {\bibinfo {author} {\bibfnamefont {Zhendong}\
  \bibnamefont {Li}}, \bibinfo {author} {\bibfnamefont {Junhao}\ \bibnamefont
  {Li}}, \bibinfo {author} {\bibfnamefont {Nikesh~S.}\ \bibnamefont {Dattani}},
  \bibinfo {author} {\bibfnamefont {C.~J.}\ \bibnamefont {Umrigar}}, \ and\
  \bibinfo {author} {\bibfnamefont {Garnet Kin-Lic}\ \bibnamefont {Chan}},\
  }\bibfield  {title} {\enquote {\bibinfo {title} {{The electronic complexity
  of the ground-state of the FeMo cofactor of nitrogenase as relevant to
  quantum simulations}},}\ }\href {\doibase 10.1063/1.5063376} {\bibfield
  {journal} {\bibinfo  {journal} {The Journal of Chemical Physics}\ }\textbf
  {\bibinfo {volume} {150}},\ \bibinfo {pages} {024302} (\bibinfo {year}
  {2019})}\BibitemShut {NoStop}%
\bibitem [{\citenamefont {Berry}\ \emph {et~al.}(2019)\citenamefont {Berry},
  \citenamefont {Gidney}, \citenamefont {Motta}, \citenamefont {McClean},\ and\
  \citenamefont {Babbush}}]{Berry2019}%
  \BibitemOpen
  \bibfield  {author} {\bibinfo {author} {\bibfnamefont {Dominic}\ \bibnamefont
  {Berry}}, \bibinfo {author} {\bibfnamefont {Craig}\ \bibnamefont {Gidney}},
  \bibinfo {author} {\bibfnamefont {Mario}\ \bibnamefont {Motta}}, \bibinfo
  {author} {\bibfnamefont {Jarrod}\ \bibnamefont {McClean}}, \ and\ \bibinfo
  {author} {\bibfnamefont {Ryan}\ \bibnamefont {Babbush}},\ }\bibfield  {title}
  {\enquote {\bibinfo {title} {{Qubitization of Arbitrary Basis Quantum
  Chemistry Leveraging Sparsity and Low Rank Factorization}},}\ }\href
  {https://quantum-journal.org/papers/q-2019-12-02-208/} {\bibfield  {journal}
  {\bibinfo  {journal} {Quantum}\ }\textbf {\bibinfo {volume} {3}},\ \bibinfo
  {pages} {208} (\bibinfo {year} {2019})}\BibitemShut {NoStop}%
\bibitem [{\citenamefont {von Burg}\ \emph {et~al.}(2021)\citenamefont {von
  Burg}, \citenamefont {Low}, \citenamefont {H{\"{a}}ner}, \citenamefont
  {Steiger}, \citenamefont {Reiher}, \citenamefont {Roetteler},\ and\
  \citenamefont {Troyer}}]{vonBurg2020}%
  \BibitemOpen
  \bibfield  {author} {\bibinfo {author} {\bibfnamefont {Vera}\ \bibnamefont
  {von Burg}}, \bibinfo {author} {\bibfnamefont {Guang~Hao}\ \bibnamefont
  {Low}}, \bibinfo {author} {\bibfnamefont {Thomas}\ \bibnamefont
  {H{\"{a}}ner}}, \bibinfo {author} {\bibfnamefont {Damian~S.}\ \bibnamefont
  {Steiger}}, \bibinfo {author} {\bibfnamefont {Markus}\ \bibnamefont
  {Reiher}}, \bibinfo {author} {\bibfnamefont {Martin}\ \bibnamefont
  {Roetteler}}, \ and\ \bibinfo {author} {\bibfnamefont {Matthias}\
  \bibnamefont {Troyer}},\ }\bibfield  {title} {\enquote {\bibinfo {title}
  {{Quantum computing enhanced computational catalysis}},}\ }\href
  {https://journals.aps.org/prresearch/abstract/10.1103/PhysRevResearch.3.033055}
  {\bibfield  {journal} {\bibinfo  {journal} {Physical Review Research}\
  }\textbf {\bibinfo {volume} {3}},\ \bibinfo {pages} {033055--033071}
  (\bibinfo {year} {2021})}\BibitemShut {NoStop}%
\bibitem [{\citenamefont {Lee}\ \emph {et~al.}(2021)\citenamefont {Lee},
  \citenamefont {Berry}, \citenamefont {Gidney}, \citenamefont {Huggins},
  \citenamefont {McClean}, \citenamefont {Wiebe},\ and\ \citenamefont
  {Babbush}}]{Lee2020}%
  \BibitemOpen
  \bibfield  {author} {\bibinfo {author} {\bibfnamefont {Joonho}\ \bibnamefont
  {Lee}}, \bibinfo {author} {\bibfnamefont {Dominic~W.}\ \bibnamefont {Berry}},
  \bibinfo {author} {\bibfnamefont {Craig}\ \bibnamefont {Gidney}}, \bibinfo
  {author} {\bibfnamefont {William~J.}\ \bibnamefont {Huggins}}, \bibinfo
  {author} {\bibfnamefont {Jarrod~R.}\ \bibnamefont {McClean}}, \bibinfo
  {author} {\bibfnamefont {Nathan}\ \bibnamefont {Wiebe}}, \ and\ \bibinfo
  {author} {\bibfnamefont {Ryan}\ \bibnamefont {Babbush}},\ }\bibfield  {title}
  {\enquote {\bibinfo {title} {{Even More Efficient Quantum Computations of
  Chemistry Through Tensor Hypercontraction}},}\ }\href {\doibase
  10.1103/PRXQuantum.2.030305} {\bibfield  {journal} {\bibinfo  {journal} {PRX
  Quantum}\ }\textbf {\bibinfo {volume} {2}},\ \bibinfo {pages} {030305}
  (\bibinfo {year} {2021})}\BibitemShut {NoStop}%
\bibitem [{\citenamefont {Goings}\ \emph {et~al.}(2022)\citenamefont {Goings},
  \citenamefont {White}, \citenamefont {Lee}, \citenamefont {Tautermann},
  \citenamefont {Degroote}, \citenamefont {Gidney}, \citenamefont {Shiozaki},
  \citenamefont {Babbush},\ and\ \citenamefont {Rubin}}]{P450}%
  \BibitemOpen
  \bibfield  {author} {\bibinfo {author} {\bibfnamefont {Joshua~J.}\
  \bibnamefont {Goings}}, \bibinfo {author} {\bibfnamefont {Alec}\ \bibnamefont
  {White}}, \bibinfo {author} {\bibfnamefont {Joonho}\ \bibnamefont {Lee}},
  \bibinfo {author} {\bibfnamefont {Christofer~S.}\ \bibnamefont {Tautermann}},
  \bibinfo {author} {\bibfnamefont {Matthias}\ \bibnamefont {Degroote}},
  \bibinfo {author} {\bibfnamefont {Craig}\ \bibnamefont {Gidney}}, \bibinfo
  {author} {\bibfnamefont {Toru}\ \bibnamefont {Shiozaki}}, \bibinfo {author}
  {\bibfnamefont {Ryan}\ \bibnamefont {Babbush}}, \ and\ \bibinfo {author}
  {\bibfnamefont {Nicholas~C.}\ \bibnamefont {Rubin}},\ }\bibfield  {title}
  {\enquote {\bibinfo {title} {{Reliably Assessing the Electronic Structure of
  Cytochrome P450 on Today’s Classical Computers and Tomorrow’s Quantum
  Computers}},}\ }\href {\doibase 10.1073/pnas.2203533119} {\bibfield
  {journal} {\bibinfo  {journal} {Proceedings of the National Academy of
  Sciences}\ }\textbf {\bibinfo {volume} {119}} (\bibinfo {year} {2022}),\
  10.1073/pnas.2203533119}\BibitemShut {NoStop}%
\bibitem [{\citenamefont {Elfving}\ \emph {et~al.}(2020)\citenamefont
  {Elfving}, \citenamefont {Broer}, \citenamefont {Webber}, \citenamefont
  {Gavartin}, \citenamefont {Halls}, \citenamefont {Lorton},\ and\
  \citenamefont {Bochevarov}}]{Elfving2020}%
  \BibitemOpen
  \bibfield  {author} {\bibinfo {author} {\bibfnamefont {V.~E.}\ \bibnamefont
  {Elfving}}, \bibinfo {author} {\bibfnamefont {B.~W.}\ \bibnamefont {Broer}},
  \bibinfo {author} {\bibfnamefont {M.}~\bibnamefont {Webber}}, \bibinfo
  {author} {\bibfnamefont {J.}~\bibnamefont {Gavartin}}, \bibinfo {author}
  {\bibfnamefont {M.~D.}\ \bibnamefont {Halls}}, \bibinfo {author}
  {\bibfnamefont {K.~P.}\ \bibnamefont {Lorton}}, \ and\ \bibinfo {author}
  {\bibfnamefont {A.}~\bibnamefont {Bochevarov}},\ }\bibfield  {title}
  {\enquote {\bibinfo {title} {{How will quantum computers provide an
  industrially relevant computational advantage in quantum chemistry?}}}\
  }\href {http://arxiv.org/abs/2009.12472} {\  (\bibinfo {year}
  {2020})}\BibitemShut {NoStop}%
\bibitem [{\citenamefont {Babbush}\ \emph
  {et~al.}(2018{\natexlab{a}})\citenamefont {Babbush}, \citenamefont {Wiebe},
  \citenamefont {McClean}, \citenamefont {McClain}, \citenamefont {Neven},\
  and\ \citenamefont {Chan}}]{BabbushLow}%
  \BibitemOpen
  \bibfield  {author} {\bibinfo {author} {\bibfnamefont {Ryan}\ \bibnamefont
  {Babbush}}, \bibinfo {author} {\bibfnamefont {Nathan}\ \bibnamefont {Wiebe}},
  \bibinfo {author} {\bibfnamefont {Jarrod}\ \bibnamefont {McClean}}, \bibinfo
  {author} {\bibfnamefont {James}\ \bibnamefont {McClain}}, \bibinfo {author}
  {\bibfnamefont {Hartmut}\ \bibnamefont {Neven}}, \ and\ \bibinfo {author}
  {\bibfnamefont {Garnet Kin-Lic}\ \bibnamefont {Chan}},\ }\bibfield  {title}
  {\enquote {\bibinfo {title} {{Low-Depth Quantum Simulation of Materials}},}\
  }\href {https://journals.aps.org/prx/abstract/10.1103/PhysRevX.8.011044}
  {\bibfield  {journal} {\bibinfo  {journal} {Physical Review X}\ }\textbf
  {\bibinfo {volume} {8}},\ \bibinfo {pages} {011044} (\bibinfo {year}
  {2018}{\natexlab{a}})}\BibitemShut {NoStop}%
\bibitem [{\citenamefont {Babbush}\ \emph
  {et~al.}(2018{\natexlab{b}})\citenamefont {Babbush}, \citenamefont {Gidney},
  \citenamefont {Berry}, \citenamefont {Wiebe}, \citenamefont {McClean},
  \citenamefont {Paler}, \citenamefont {Fowler},\ and\ \citenamefont
  {Neven}}]{BabbushSpectra}%
  \BibitemOpen
  \bibfield  {author} {\bibinfo {author} {\bibfnamefont {Ryan}\ \bibnamefont
  {Babbush}}, \bibinfo {author} {\bibfnamefont {Craig}\ \bibnamefont {Gidney}},
  \bibinfo {author} {\bibfnamefont {Dominic}\ \bibnamefont {Berry}}, \bibinfo
  {author} {\bibfnamefont {Nathan}\ \bibnamefont {Wiebe}}, \bibinfo {author}
  {\bibfnamefont {Jarrod}\ \bibnamefont {McClean}}, \bibinfo {author}
  {\bibfnamefont {Alexandru}\ \bibnamefont {Paler}}, \bibinfo {author}
  {\bibfnamefont {Austin}\ \bibnamefont {Fowler}}, \ and\ \bibinfo {author}
  {\bibfnamefont {Hartmut}\ \bibnamefont {Neven}},\ }\bibfield  {title}
  {\enquote {\bibinfo {title} {{Encoding Electronic Spectra in Quantum Circuits
  with Linear T Complexity}},}\ }\href
  {https://journals.aps.org/prx/abstract/10.1103/PhysRevX.8.041015} {\bibfield
  {journal} {\bibinfo  {journal} {Physical Review X}\ }\textbf {\bibinfo
  {volume} {8}},\ \bibinfo {pages} {041015} (\bibinfo {year}
  {2018}{\natexlab{b}})}\BibitemShut {NoStop}%
\bibitem [{\citenamefont {Kivlichan}\ \emph {et~al.}(2020)\citenamefont
  {Kivlichan}, \citenamefont {Gidney}, \citenamefont {Berry}, \citenamefont
  {Wiebe}, \citenamefont {McClean}, \citenamefont {Sun}, \citenamefont {Jiang},
  \citenamefont {Rubin}, \citenamefont {Fowler}, \citenamefont {Aspuru-Guzik},
  \citenamefont {Neven},\ and\ \citenamefont {Babbush}}]{Kivlichan2019}%
  \BibitemOpen
  \bibfield  {author} {\bibinfo {author} {\bibfnamefont {Ian~D.}\ \bibnamefont
  {Kivlichan}}, \bibinfo {author} {\bibfnamefont {Craig}\ \bibnamefont
  {Gidney}}, \bibinfo {author} {\bibfnamefont {Dominic~W.}\ \bibnamefont
  {Berry}}, \bibinfo {author} {\bibfnamefont {Nathan}\ \bibnamefont {Wiebe}},
  \bibinfo {author} {\bibfnamefont {Jarrod}\ \bibnamefont {McClean}}, \bibinfo
  {author} {\bibfnamefont {Wei}\ \bibnamefont {Sun}}, \bibinfo {author}
  {\bibfnamefont {Zhang}\ \bibnamefont {Jiang}}, \bibinfo {author}
  {\bibfnamefont {Nicholas}\ \bibnamefont {Rubin}}, \bibinfo {author}
  {\bibfnamefont {Austin}\ \bibnamefont {Fowler}}, \bibinfo {author}
  {\bibfnamefont {Alán}\ \bibnamefont {Aspuru-Guzik}}, \bibinfo {author}
  {\bibfnamefont {Hartmut}\ \bibnamefont {Neven}}, \ and\ \bibinfo {author}
  {\bibfnamefont {Ryan}\ \bibnamefont {Babbush}},\ }\bibfield  {title}
  {\enquote {\bibinfo {title} {{Improved Fault-Tolerant Quantum Simulation of
  Condensed-Phase Correlated Electrons via Trotterization}},}\ }\href
  {https://quantum-journal.org/papers/q-2020-07-16-296/} {\bibfield  {journal}
  {\bibinfo  {journal} {Quantum}\ }\textbf {\bibinfo {volume} {4}},\ \bibinfo
  {pages} {296} (\bibinfo {year} {2020})}\BibitemShut {NoStop}%
\bibitem [{\citenamefont {McArdle}\ \emph {et~al.}(2022)\citenamefont
  {McArdle}, \citenamefont {Campbell},\ and\ \citenamefont {Su}}]{McArdle2022}%
  \BibitemOpen
  \bibfield  {author} {\bibinfo {author} {\bibfnamefont {Sam}\ \bibnamefont
  {McArdle}}, \bibinfo {author} {\bibfnamefont {Earl}\ \bibnamefont
  {Campbell}}, \ and\ \bibinfo {author} {\bibfnamefont {Yuan}\ \bibnamefont
  {Su}},\ }\bibfield  {title} {\enquote {\bibinfo {title} {{Exploiting fermion
  number in factorized decompositions of the electronic structure
  Hamiltonian}},}\ }\href {\doibase 10.1103/PhysRevA.105.012403} {\bibfield
  {journal} {\bibinfo  {journal} {Physical Review A}\ }\textbf {\bibinfo
  {volume} {105}},\ \bibinfo {pages} {012403} (\bibinfo {year}
  {2022})}\BibitemShut {NoStop}%
\bibitem [{\citenamefont {Somma}(2015)}]{Somma2015}%
  \BibitemOpen
  \bibfield  {author} {\bibinfo {author} {\bibfnamefont {Rolando~D.}\
  \bibnamefont {Somma}},\ }\bibfield  {title} {\enquote {\bibinfo {title}
  {{Quantum Simulations of One Dimensional Quantum Systems}},}\ }\href
  {https://arxiv.org/abs/1503.06319} {\bibfield  {journal} {\bibinfo  {journal}
  {arXiv:2203.17006}\ } (\bibinfo {year} {2015})}\BibitemShut {NoStop}%
\bibitem [{\citenamefont {Geller}\ \emph {et~al.}(2015)\citenamefont {Geller},
  \citenamefont {Martinis}, \citenamefont {Sornborger}, \citenamefont
  {Stancil}, \citenamefont {Pritchett}, \citenamefont {You},\ and\
  \citenamefont {Galiautdinov}}]{Geller2015}%
  \BibitemOpen
  \bibfield  {author} {\bibinfo {author} {\bibfnamefont {Michael~R.}\
  \bibnamefont {Geller}}, \bibinfo {author} {\bibfnamefont {John~M.}\
  \bibnamefont {Martinis}}, \bibinfo {author} {\bibfnamefont {Andrew~T.}\
  \bibnamefont {Sornborger}}, \bibinfo {author} {\bibfnamefont {Phillip~C.}\
  \bibnamefont {Stancil}}, \bibinfo {author} {\bibfnamefont {Emily~J.}\
  \bibnamefont {Pritchett}}, \bibinfo {author} {\bibfnamefont {Hao}\
  \bibnamefont {You}}, \ and\ \bibinfo {author} {\bibfnamefont {Andrei}\
  \bibnamefont {Galiautdinov}},\ }\bibfield  {title} {\enquote {\bibinfo
  {title} {{Universal Quantum Simulation with Prethreshold Superconducting
  Qubits: Single-Excitation Subspace Method}},}\ }\href
  {http://arxiv.org/abs/1505.04990} {\bibfield  {journal} {\bibinfo  {journal}
  {arXiv:1505.04990}\ } (\bibinfo {year} {2015})}\BibitemShut {NoStop}%
\bibitem [{\citenamefont {Dreuw}\ and\ \citenamefont
  {Head-Gordon}(2005)}]{Dreuw2005Nov}%
  \BibitemOpen
  \bibfield  {author} {\bibinfo {author} {\bibfnamefont {Andreas}\ \bibnamefont
  {Dreuw}}\ and\ \bibinfo {author} {\bibfnamefont {Martin}\ \bibnamefont
  {Head-Gordon}},\ }\bibfield  {title} {\enquote {\bibinfo {title}
  {{Single-Reference ab Initio Methods for the Calculation of Excited States of
  Large Molecules}},}\ }\href {\doibase 10.1021/cr0505627} {\bibfield
  {journal} {\bibinfo  {journal} {Chem. Rev.}\ }\textbf {\bibinfo {volume}
  {105}},\ \bibinfo {pages} {4009--4037} (\bibinfo {year} {2005})}\BibitemShut
  {NoStop}%
\bibitem [{\citenamefont {Runge}\ and\ \citenamefont
  {Gross}(1984)}]{runge1984density}%
  \BibitemOpen
  \bibfield  {author} {\bibinfo {author} {\bibfnamefont {Erich}\ \bibnamefont
  {Runge}}\ and\ \bibinfo {author} {\bibfnamefont {Eberhard~KU}\ \bibnamefont
  {Gross}},\ }\bibfield  {title} {\enquote {\bibinfo {title}
  {Density-functional theory for time-dependent systems},}\ }\href {\doibase
  10.1103/PhysRevLett.52.997} {\bibfield  {journal} {\bibinfo  {journal}
  {Physical review letters}\ }\textbf {\bibinfo {volume} {52}},\ \bibinfo
  {pages} {997} (\bibinfo {year} {1984})}\BibitemShut {NoStop}%
\bibitem [{\citenamefont {Van~Leeuwen}(1999)}]{van1999mapping}%
  \BibitemOpen
  \bibfield  {author} {\bibinfo {author} {\bibfnamefont {Robert}\ \bibnamefont
  {Van~Leeuwen}},\ }\bibfield  {title} {\enquote {\bibinfo {title} {Mapping
  from densities to potentials in time-dependent density-functional theory},}\
  }\href {\doibase 10.1103/PhysRevLett.82.3863} {\bibfield  {journal} {\bibinfo
   {journal} {Physical review letters}\ }\textbf {\bibinfo {volume} {82}},\
  \bibinfo {pages} {3863} (\bibinfo {year} {1999})}\BibitemShut {NoStop}%
\bibitem [{\citenamefont {Manzer}\ \emph {et~al.}(2015)\citenamefont {Manzer},
  \citenamefont {Horn}, \citenamefont {Mardirossian},\ and\ \citenamefont
  {Head-Gordon}}]{Manzer2015Jul}%
  \BibitemOpen
  \bibfield  {author} {\bibinfo {author} {\bibfnamefont {Samuel}\ \bibnamefont
  {Manzer}}, \bibinfo {author} {\bibfnamefont {Paul~R.}\ \bibnamefont {Horn}},
  \bibinfo {author} {\bibfnamefont {Narbe}\ \bibnamefont {Mardirossian}}, \
  and\ \bibinfo {author} {\bibfnamefont {Martin}\ \bibnamefont {Head-Gordon}},\
  }\bibfield  {title} {\enquote {\bibinfo {title} {{Fast, accurate evaluation
  of exact exchange: The occ-RI-K algorithm}},}\ }\href {\doibase
  10.1063/1.4923369} {\bibfield  {journal} {\bibinfo  {journal} {J. Chem.
  Phys.}\ }\textbf {\bibinfo {volume} {143}},\ \bibinfo {pages} {024113}
  (\bibinfo {year} {2015})}\BibitemShut {NoStop}%
\bibitem [{\citenamefont {Lin}(2016)}]{Lin2016May}%
  \BibitemOpen
  \bibfield  {author} {\bibinfo {author} {\bibfnamefont {Lin}\ \bibnamefont
  {Lin}},\ }\bibfield  {title} {\enquote {\bibinfo {title} {{Adaptively
  Compressed Exchange Operator}},}\ }\href {\doibase 10.1021/acs.jctc.6b00092}
  {\bibfield  {journal} {\bibinfo  {journal} {J. Chem. Theory Comput.}\
  }\textbf {\bibinfo {volume} {12}},\ \bibinfo {pages} {2242--2249} (\bibinfo
  {year} {2016})}\BibitemShut {NoStop}%
\bibitem [{\citenamefont {Jia}\ and\ \citenamefont
  {Lin}(2019{\natexlab{a}})}]{LinTDDFT2019}%
  \BibitemOpen
  \bibfield  {author} {\bibinfo {author} {\bibfnamefont {Weile}\ \bibnamefont
  {Jia}}\ and\ \bibinfo {author} {\bibfnamefont {Lin}\ \bibnamefont {Lin}},\
  }\bibfield  {title} {\enquote {\bibinfo {title} {{Fast real-time
  time-dependent hybrid functional calculations with the parallel transport
  gauge and the adaptively compressed exchange formulation}},}\ }\href
  {\doibase 10.1016/j.cpc.2019.02.009} {\bibfield  {journal} {\bibinfo
  {journal} {Computer Physics Communications}\ }\textbf {\bibinfo {volume}
  {240}},\ \bibinfo {pages} {21--29} (\bibinfo {year}
  {2019}{\natexlab{a}})}\BibitemShut {NoStop}%
\bibitem [{\citenamefont {Jia}\ and\ \citenamefont
  {Lin}(2019{\natexlab{b}})}]{Jia2019Jul}%
  \BibitemOpen
  \bibfield  {author} {\bibinfo {author} {\bibfnamefont {Weile}\ \bibnamefont
  {Jia}}\ and\ \bibinfo {author} {\bibfnamefont {Lin}\ \bibnamefont {Lin}},\
  }\bibfield  {title} {\enquote {\bibinfo {title} {{Fast real-time
  time-dependent hybrid functional calculations with the parallel transport
  gauge and the adaptively compressed exchange formulation}},}\ }\href
  {\doibase 10.1016/j.cpc.2019.02.009} {\bibfield  {journal} {\bibinfo
  {journal} {Comput. Phys. Commun.}\ }\textbf {\bibinfo {volume} {240}},\
  \bibinfo {pages} {21--29} (\bibinfo {year} {2019}{\natexlab{b}})}\BibitemShut
  {NoStop}%
\bibitem [{\citenamefont {Prodan}\ and\ \citenamefont
  {Kohn}(2005)}]{Prodan2005Aug}%
  \BibitemOpen
  \bibfield  {author} {\bibinfo {author} {\bibfnamefont {E.}~\bibnamefont
  {Prodan}}\ and\ \bibinfo {author} {\bibfnamefont {W.}~\bibnamefont {Kohn}},\
  }\bibfield  {title} {\enquote {\bibinfo {title} {{Nearsightedness of
  electronic matter}},}\ }\href {\doibase 10.1073/pnas.0505436102} {\bibfield
  {journal} {\bibinfo  {journal} {Proc. Natl. Acad. Sci. U.S.A.}\ }\textbf
  {\bibinfo {volume} {102}},\ \bibinfo {pages} {11635--11638} (\bibinfo {year}
  {2005})}\BibitemShut {NoStop}%
\bibitem [{\citenamefont {Kussmann}\ \emph {et~al.}(2013)\citenamefont
  {Kussmann}, \citenamefont {Beer},\ and\ \citenamefont
  {Ochsenfeld}}]{Kussmann2013Nov}%
  \BibitemOpen
  \bibfield  {author} {\bibinfo {author} {\bibfnamefont
  {J{\ifmmode\ddot{o}\else\"{o}\fi}rg}\ \bibnamefont {Kussmann}}, \bibinfo
  {author} {\bibfnamefont {Matthias}\ \bibnamefont {Beer}}, \ and\ \bibinfo
  {author} {\bibfnamefont {Christian}\ \bibnamefont {Ochsenfeld}},\ }\bibfield
  {title} {\enquote {\bibinfo {title} {{Linear-scaling self-consistent field
  methods for large molecules}},}\ }\href {\doibase 10.1002/wcms.1138}
  {\bibfield  {journal} {\bibinfo  {journal} {WIREs Comput. Mol. Sci.}\
  }\textbf {\bibinfo {volume} {3}},\ \bibinfo {pages} {614--636} (\bibinfo
  {year} {2013})}\BibitemShut {NoStop}%
\bibitem [{\citenamefont {O{'}Rourke}\ and\ \citenamefont
  {Bowler}(2015)}]{ORourke2015Sep}%
  \BibitemOpen
  \bibfield  {author} {\bibinfo {author} {\bibfnamefont {Conn}\ \bibnamefont
  {O{'}Rourke}}\ and\ \bibinfo {author} {\bibfnamefont {David~R.}\ \bibnamefont
  {Bowler}},\ }\bibfield  {title} {\enquote {\bibinfo {title} {{Linear scaling
  density matrix real time TDDFT: Propagator unitarity and matrix
  truncation}},}\ }\href {\doibase 10.1063/1.4919128} {\bibfield  {journal}
  {\bibinfo  {journal} {J. Chem. Phys.}\ }\textbf {\bibinfo {volume} {143}},\
  \bibinfo {pages} {102801} (\bibinfo {year} {2015})}\BibitemShut {NoStop}%
\bibitem [{\citenamefont {Zuehlsdorff}\ \emph {et~al.}(2013)\citenamefont
  {Zuehlsdorff}, \citenamefont {Hine}, \citenamefont {Spencer}, \citenamefont
  {Harrison}, \citenamefont {Riley},\ and\ \citenamefont
  {Haynes}}]{Zuehlsdorff2013Aug}%
  \BibitemOpen
  \bibfield  {author} {\bibinfo {author} {\bibfnamefont {T.~J.}\ \bibnamefont
  {Zuehlsdorff}}, \bibinfo {author} {\bibfnamefont {N.~D.~M.}\ \bibnamefont
  {Hine}}, \bibinfo {author} {\bibfnamefont {J.~S.}\ \bibnamefont {Spencer}},
  \bibinfo {author} {\bibfnamefont {N.~M.}\ \bibnamefont {Harrison}}, \bibinfo
  {author} {\bibfnamefont {D.~J.}\ \bibnamefont {Riley}}, \ and\ \bibinfo
  {author} {\bibfnamefont {P.~D.}\ \bibnamefont {Haynes}},\ }\bibfield  {title}
  {\enquote {\bibinfo {title} {{Linear-scaling time-dependent
  density-functional theory in the linear response formalism}},}\ }\href
  {\doibase 10.1063/1.4817330} {\bibfield  {journal} {\bibinfo  {journal} {J.
  Chem. Phys.}\ }\textbf {\bibinfo {volume} {139}},\ \bibinfo {pages} {064104}
  (\bibinfo {year} {2013})}\BibitemShut {NoStop}%
\bibitem [{\citenamefont {Khoromskaia}\ \emph {et~al.}(2011)\citenamefont
  {Khoromskaia}, \citenamefont {Khoromskij},\ and\ \citenamefont
  {Schneider}}]{Khoromskaia2011Jan}%
  \BibitemOpen
  \bibfield  {author} {\bibinfo {author} {\bibfnamefont {Venera}\ \bibnamefont
  {Khoromskaia}}, \bibinfo {author} {\bibfnamefont {Boris}\ \bibnamefont
  {Khoromskij}}, \ and\ \bibinfo {author} {\bibfnamefont {Reinhold}\
  \bibnamefont {Schneider}},\ }\bibfield  {title} {\enquote {\bibinfo {title}
  {{QTT Representation of the Hartree and Exchange Operators in Electronic
  Structure Calculations}},}\ }\href {\doibase 10.2478/cmam-2011-0018}
  {\bibfield  {journal} {\bibinfo  {journal} {Comput. Methods Appl. Math.}\
  }\textbf {\bibinfo {volume} {11}},\ \bibinfo {pages} {327--341} (\bibinfo
  {year} {2011})}\BibitemShut {NoStop}%
\bibitem [{\citenamefont {Castro}\ \emph {et~al.}(2004)\citenamefont {Castro},
  \citenamefont {Marques},\ and\ \citenamefont
  {Rubio}}]{castro2004propagators}%
  \BibitemOpen
  \bibfield  {author} {\bibinfo {author} {\bibfnamefont {Alberto}\ \bibnamefont
  {Castro}}, \bibinfo {author} {\bibfnamefont {Miguel~AL}\ \bibnamefont
  {Marques}}, \ and\ \bibinfo {author} {\bibfnamefont {Angel}\ \bibnamefont
  {Rubio}},\ }\bibfield  {title} {\enquote {\bibinfo {title} {{Propagators for
  the time-dependent Kohn--Sham equations}},}\ }\href {\doibase
  10.1063/1.1774980} {\bibfield  {journal} {\bibinfo  {journal} {The Journal of
  chemical physics}\ }\textbf {\bibinfo {volume} {121}},\ \bibinfo {pages}
  {3425--3433} (\bibinfo {year} {2004})}\BibitemShut {NoStop}%
\bibitem [{\citenamefont {Jia}\ \emph {et~al.}(2018)\citenamefont {Jia},
  \citenamefont {An}, \citenamefont {Wang},\ and\ \citenamefont
  {Lin}}]{jia2018fast}%
  \BibitemOpen
  \bibfield  {author} {\bibinfo {author} {\bibfnamefont {Weile}\ \bibnamefont
  {Jia}}, \bibinfo {author} {\bibfnamefont {Dong}\ \bibnamefont {An}}, \bibinfo
  {author} {\bibfnamefont {Lin-Wang}\ \bibnamefont {Wang}}, \ and\ \bibinfo
  {author} {\bibfnamefont {Lin}\ \bibnamefont {Lin}},\ }\bibfield  {title}
  {\enquote {\bibinfo {title} {Fast real-time time-dependent density functional
  theory calculations with the parallel transport gauge},}\ }\href {\doibase
  10.1021/acs.jctc.8b00580} {\bibfield  {journal} {\bibinfo  {journal} {Journal
  of Chemical Theory and Computation}\ }\textbf {\bibinfo {volume} {14}},\
  \bibinfo {pages} {5645--5652} (\bibinfo {year} {2018})}\BibitemShut {NoStop}%
\bibitem [{\citenamefont {Kononov}\ \emph {et~al.}(2022)\citenamefont
  {Kononov}, \citenamefont {Lee}, \citenamefont {dos Santos}, \citenamefont
  {Robinson}, \citenamefont {Yao}, \citenamefont {Yao}, \citenamefont
  {Andrade}, \citenamefont {Baczewski}, \citenamefont {Constantinescu},
  \citenamefont {Correa}, \citenamefont {Kanai}, \citenamefont {Modine},\ and\
  \citenamefont {Schleife}}]{kononov2022electron}%
  \BibitemOpen
  \bibfield  {author} {\bibinfo {author} {\bibfnamefont {Alina}\ \bibnamefont
  {Kononov}}, \bibinfo {author} {\bibfnamefont {Cheng-Wei}\ \bibnamefont
  {Lee}}, \bibinfo {author} {\bibfnamefont {Tatiane~Pereira}\ \bibnamefont {dos
  Santos}}, \bibinfo {author} {\bibfnamefont {Brian}\ \bibnamefont {Robinson}},
  \bibinfo {author} {\bibfnamefont {Yifan}\ \bibnamefont {Yao}}, \bibinfo
  {author} {\bibfnamefont {Yi}~\bibnamefont {Yao}}, \bibinfo {author}
  {\bibfnamefont {Xavier}\ \bibnamefont {Andrade}}, \bibinfo {author}
  {\bibfnamefont {Andrew~David}\ \bibnamefont {Baczewski}}, \bibinfo {author}
  {\bibfnamefont {Emil}\ \bibnamefont {Constantinescu}}, \bibinfo {author}
  {\bibfnamefont {Alfredo~A}\ \bibnamefont {Correa}}, \bibinfo {author}
  {\bibfnamefont {Yosuke}\ \bibnamefont {Kanai}}, \bibinfo {author}
  {\bibfnamefont {Normand}\ \bibnamefont {Modine}}, \ and\ \bibinfo {author}
  {\bibfnamefont {Andr\'e}\ \bibnamefont {Schleife}},\ }\bibfield  {title}
  {\enquote {\bibinfo {title} {Electron dynamics in extended systems within
  real-time time-dependent density-functional theory},}\ }\href {\doibase
  10.1557/s43579-022-00273-7} {\bibfield  {journal} {\bibinfo  {journal} {MRS
  Communications}\ ,\ \bibinfo {pages} {1--13}} (\bibinfo {year}
  {2022})}\BibitemShut {NoStop}%
\bibitem [{\citenamefont {Shepard}\ \emph {et~al.}(2021)\citenamefont
  {Shepard}, \citenamefont {Zhou}, \citenamefont {Yost}, \citenamefont {Yao},\
  and\ \citenamefont {Kanai}}]{Shepard2021Sep}%
  \BibitemOpen
  \bibfield  {author} {\bibinfo {author} {\bibfnamefont {Christopher}\
  \bibnamefont {Shepard}}, \bibinfo {author} {\bibfnamefont {Ruiyi}\
  \bibnamefont {Zhou}}, \bibinfo {author} {\bibfnamefont {Dillon~C.}\
  \bibnamefont {Yost}}, \bibinfo {author} {\bibfnamefont {Yi}~\bibnamefont
  {Yao}}, \ and\ \bibinfo {author} {\bibfnamefont {Yosuke}\ \bibnamefont
  {Kanai}},\ }\bibfield  {title} {\enquote {\bibinfo {title} {{Simulating
  electronic excitation and dynamics with real-time propagation approach to
  TDDFT within plane-wave pseudopotential formulation}},}\ }\href {\doibase
  10.1063/5.0057587} {\bibfield  {journal} {\bibinfo  {journal} {J. Chem.
  Phys.}\ }\textbf {\bibinfo {volume} {155}},\ \bibinfo {pages} {100901}
  (\bibinfo {year} {2021})}\BibitemShut {NoStop}%
\bibitem [{\citenamefont {Shavitt}\ and\ \citenamefont
  {Bartlett}(2009)}]{Shavitt2009Aug}%
  \BibitemOpen
  \bibfield  {author} {\bibinfo {author} {\bibfnamefont {Isaiah}\ \bibnamefont
  {Shavitt}}\ and\ \bibinfo {author} {\bibfnamefont {Rodney~J.}\ \bibnamefont
  {Bartlett}},\ }\href {\doibase 10.1017/CBO9780511596834} {\emph {\bibinfo
  {title} {{Many-Body Methods in Chemistry and Physics: MBPT and
  Coupled-Cluster Theory}}}}\ (\bibinfo  {publisher} {Cambridge University
  Press},\ \bibinfo {address} {Cambridge, England, UK},\ \bibinfo {year}
  {2009})\BibitemShut {NoStop}%
\bibitem [{\citenamefont {Wiesner}(1996)}]{Wiesner1996SimulationsComputer}%
  \BibitemOpen
  \bibfield  {author} {\bibinfo {author} {\bibfnamefont {Stephen}\ \bibnamefont
  {Wiesner}},\ }\bibfield  {title} {\enquote {\bibinfo {title} {{Simulations of
  Many-Body Quantum Systems by a Quantum Computer}},}\ }\href
  {https://arxiv.org/abs/quant-ph/9603028} {\bibfield  {journal} {\bibinfo
  {journal} {arXiv:quant-ph/9603028}\ } (\bibinfo {year} {1996})}\BibitemShut
  {NoStop}%
\bibitem [{\citenamefont {Abrams}\ and\ \citenamefont
  {Lloyd}(1997)}]{Abrams1997}%
  \BibitemOpen
  \bibfield  {author} {\bibinfo {author} {\bibfnamefont {Daniel~S}\
  \bibnamefont {Abrams}}\ and\ \bibinfo {author} {\bibfnamefont {Seth}\
  \bibnamefont {Lloyd}},\ }\bibfield  {title} {\enquote {\bibinfo {title}
  {{Simulation of Many-Body Fermi Systems on a Universal Quantum Computer}},}\
  }\href {https://journals.aps.org/prl/pdf/10.1103/PhysRevLett.79.2586}
  {\bibfield  {journal} {\bibinfo  {journal} {Physical Review Letters}\
  }\textbf {\bibinfo {volume} {79}},\ \bibinfo {pages} {4} (\bibinfo {year}
  {1997})}\BibitemShut {NoStop}%
\bibitem [{\citenamefont {Zalka}(1998)}]{Zalka1998}%
  \BibitemOpen
  \bibfield  {author} {\bibinfo {author} {\bibfnamefont {Christof}\
  \bibnamefont {Zalka}},\ }\bibfield  {title} {\enquote {\bibinfo {title}
  {{Efficient Simulation of Quantum Systems by Quantum Computers}},}\ }\href
  {\doibase 10.1002/(SICI)1521-3978(199811)46:6/8<877::AID-PROP877>3.0.CO;2-A}
  {\bibfield  {journal} {\bibinfo  {journal} {Fortschritte der Physik}\
  }\textbf {\bibinfo {volume} {46}},\ \bibinfo {pages} {877--879} (\bibinfo
  {year} {1998})}\BibitemShut {NoStop}%
\bibitem [{\citenamefont {Boghosian}\ and\ \citenamefont
  {Taylor}(1998)}]{Boghosian1998}%
  \BibitemOpen
  \bibfield  {author} {\bibinfo {author} {\bibfnamefont {Bruce~M}\ \bibnamefont
  {Boghosian}}\ and\ \bibinfo {author} {\bibfnamefont {Washington}\
  \bibnamefont {Taylor}},\ }\bibfield  {title} {\enquote {\bibinfo {title}
  {{Simulating quantum mechanics on a quantum computer}},}\ }\href {\doibase
  10.1016/S0167-2789(98)00042-6} {\bibfield  {journal} {\bibinfo  {journal}
  {Physica D-Nonlinear Phenomena}\ }\textbf {\bibinfo {volume} {120}},\
  \bibinfo {pages} {30--42} (\bibinfo {year} {1998})}\BibitemShut {NoStop}%
\bibitem [{\citenamefont {Lidar}\ and\ \citenamefont {Wang}(1999)}]{Lidar1999}%
  \BibitemOpen
  \bibfield  {author} {\bibinfo {author} {\bibfnamefont {Daniel~A}\
  \bibnamefont {Lidar}}\ and\ \bibinfo {author} {\bibfnamefont {Haobin}\
  \bibnamefont {Wang}},\ }\bibfield  {title} {\enquote {\bibinfo {title}
  {{Calculating the thermal rate constant with exponential speedup on a quantum
  computer}},}\ }\href {\doibase 10.1103/PhysRevE.59.2429} {\bibfield
  {journal} {\bibinfo  {journal} {Physical Review E}\ }\textbf {\bibinfo
  {volume} {59}},\ \bibinfo {pages} {2429--2438} (\bibinfo {year}
  {1999})}\BibitemShut {NoStop}%
\bibitem [{\citenamefont {Kassal}\ \emph {et~al.}(2008)\citenamefont {Kassal},
  \citenamefont {Jordan}, \citenamefont {Love}, \citenamefont {Mohseni},\ and\
  \citenamefont {Aspuru-Guzik}}]{Kassal2008}%
  \BibitemOpen
  \bibfield  {author} {\bibinfo {author} {\bibfnamefont {Ivan}\ \bibnamefont
  {Kassal}}, \bibinfo {author} {\bibfnamefont {Stephen~P}\ \bibnamefont
  {Jordan}}, \bibinfo {author} {\bibfnamefont {Peter~J}\ \bibnamefont {Love}},
  \bibinfo {author} {\bibfnamefont {Masoud}\ \bibnamefont {Mohseni}}, \ and\
  \bibinfo {author} {\bibfnamefont {Alan}\ \bibnamefont {Aspuru-Guzik}},\
  }\bibfield  {title} {\enquote {\bibinfo {title} {{Polynomial-time quantum
  algorithm for the simulation of chemical dynamics}},}\ }\href
  {http://www.pnas.org/content/105/48/18681.abstract} {\bibfield  {journal}
  {\bibinfo  {journal} {Proceedings of the National Academy of Sciences}\
  }\textbf {\bibinfo {volume} {105}},\ \bibinfo {pages} {18681--18686}
  (\bibinfo {year} {2008})}\BibitemShut {NoStop}%
\bibitem [{\citenamefont {Childs}\ and\ \citenamefont {Su}(2019)}]{Childs2019}%
  \BibitemOpen
  \bibfield  {author} {\bibinfo {author} {\bibfnamefont {Andrew}\ \bibnamefont
  {Childs}}\ and\ \bibinfo {author} {\bibfnamefont {Yuan}\ \bibnamefont {Su}},\
  }\bibfield  {title} {\enquote {\bibinfo {title} {{Nearly optimal lattice
  simulation by product formulas}},}\ }\href {\doibase
  10.1103/PhysRevLett.123.050503} {\bibfield  {journal} {\bibinfo  {journal}
  {Physical Review Letters}\ }\textbf {\bibinfo {volume} {123}},\ \bibinfo
  {pages} {050503} (\bibinfo {year} {2019})}\BibitemShut {NoStop}%
\bibitem [{\citenamefont {Su}\ \emph {et~al.}(2021{\natexlab{a}})\citenamefont
  {Su}, \citenamefont {Huang},\ and\ \citenamefont {Campbell}}]{Su2020}%
  \BibitemOpen
  \bibfield  {author} {\bibinfo {author} {\bibfnamefont {Yuan}\ \bibnamefont
  {Su}}, \bibinfo {author} {\bibfnamefont {Hsin-Yuan}\ \bibnamefont {Huang}}, \
  and\ \bibinfo {author} {\bibfnamefont {Earl~T.}\ \bibnamefont {Campbell}},\
  }\bibfield  {title} {\enquote {\bibinfo {title} {{Nearly tight Trotterization
  of interacting electrons}},}\ }\href {\doibase 10.22331/q-2021-07-05-495}
  {\bibfield  {journal} {\bibinfo  {journal} {Quantum}\ }\textbf {\bibinfo
  {volume} {5}},\ \bibinfo {pages} {495} (\bibinfo {year}
  {2021}{\natexlab{a}})}\BibitemShut {NoStop}%
\bibitem [{\citenamefont {Low}\ \emph {et~al.}(2022)\citenamefont {Low},
  \citenamefont {Su}, \citenamefont {Tong},\ and\ \citenamefont
  {Tran}}]{Low2022b}%
  \BibitemOpen
  \bibfield  {author} {\bibinfo {author} {\bibfnamefont {Guang~Hao}\
  \bibnamefont {Low}}, \bibinfo {author} {\bibfnamefont {Yuan}\ \bibnamefont
  {Su}}, \bibinfo {author} {\bibfnamefont {Yu}~\bibnamefont {Tong}}, \ and\
  \bibinfo {author} {\bibfnamefont {Minh~C.}\ \bibnamefont {Tran}},\ }\bibfield
   {title} {\enquote {\bibinfo {title} {{On the complexity of implementing
  Trotter steps}},}\ }\href {\doibase 10.48550/arxiv.2211.09133} {\  (\bibinfo
  {year} {2022}),\ 10.48550/arxiv.2211.09133}\BibitemShut {NoStop}%
\bibitem [{\citenamefont {Babbush}\ \emph {et~al.}(2019)\citenamefont
  {Babbush}, \citenamefont {Berry}, \citenamefont {McClean},\ and\
  \citenamefont {Neven}}]{BabbushContinuum}%
  \BibitemOpen
  \bibfield  {author} {\bibinfo {author} {\bibfnamefont {Ryan}\ \bibnamefont
  {Babbush}}, \bibinfo {author} {\bibfnamefont {Dominic~W.}\ \bibnamefont
  {Berry}}, \bibinfo {author} {\bibfnamefont {Jarrod~R.}\ \bibnamefont
  {McClean}}, \ and\ \bibinfo {author} {\bibfnamefont {Hartmut}\ \bibnamefont
  {Neven}},\ }\bibfield  {title} {\enquote {\bibinfo {title} {{Quantum
  Simulation of Chemistry with Sublinear Scaling in Basis Size}},}\ }\href
  {https://www.nature.com/articles/s41534-019-0199-y} {\bibfield  {journal}
  {\bibinfo  {journal} {npj Quantum Information}\ }\textbf {\bibinfo {volume}
  {5}},\ \bibinfo {pages} {92} (\bibinfo {year} {2019})}\BibitemShut {NoStop}%
\bibitem [{\citenamefont {Low}\ and\ \citenamefont {Wiebe}(2018)}]{Low2018}%
  \BibitemOpen
  \bibfield  {author} {\bibinfo {author} {\bibfnamefont {Guang~Hao}\
  \bibnamefont {Low}}\ and\ \bibinfo {author} {\bibfnamefont {Nathan}\
  \bibnamefont {Wiebe}},\ }\bibfield  {title} {\enquote {\bibinfo {title}
  {{Hamiltonian Simulation in the Interaction Picture}},}\ }\href
  {http://arxiv.org/abs/1805.00675} {\bibfield  {journal} {\bibinfo  {journal}
  {arXiv:1805.00675}\ } (\bibinfo {year} {2018})}\BibitemShut {NoStop}%
\bibitem [{\citenamefont {Su}\ \emph {et~al.}(2021{\natexlab{b}})\citenamefont
  {Su}, \citenamefont {Berry}, \citenamefont {Wiebe}, \citenamefont {Rubin},\
  and\ \citenamefont {Babbush}}]{Su2021}%
  \BibitemOpen
  \bibfield  {author} {\bibinfo {author} {\bibfnamefont {Yuan}\ \bibnamefont
  {Su}}, \bibinfo {author} {\bibfnamefont {Dominic}\ \bibnamefont {Berry}},
  \bibinfo {author} {\bibfnamefont {Nathan}\ \bibnamefont {Wiebe}}, \bibinfo
  {author} {\bibfnamefont {Nicholas}\ \bibnamefont {Rubin}}, \ and\ \bibinfo
  {author} {\bibfnamefont {Ryan}\ \bibnamefont {Babbush}},\ }\bibfield  {title}
  {\enquote {\bibinfo {title} {{Fault-tolerant quantum simulations of chemistry
  in first quantization}},}\ }\href
  {https://journals.aps.org/prxquantum/abstract/10.1103/PRXQuantum.2.040332}
  {\bibfield  {journal} {\bibinfo  {journal} {PRX Quantum}\ }\textbf {\bibinfo
  {volume} {4}},\ \bibinfo {pages} {040332} (\bibinfo {year}
  {2021}{\natexlab{b}})}\BibitemShut {NoStop}%
\bibitem [{\citenamefont {Aspuru-Guzik}\ \emph {et~al.}(2005)\citenamefont
  {Aspuru-Guzik}, \citenamefont {Dutoi}, \citenamefont {Love},\ and\
  \citenamefont {Head-Gordon}}]{Aspuru-Guzik2005}%
  \BibitemOpen
  \bibfield  {author} {\bibinfo {author} {\bibfnamefont {Alan}\ \bibnamefont
  {Aspuru-Guzik}}, \bibinfo {author} {\bibfnamefont {Anthony~D}\ \bibnamefont
  {Dutoi}}, \bibinfo {author} {\bibfnamefont {Peter~J}\ \bibnamefont {Love}}, \
  and\ \bibinfo {author} {\bibfnamefont {Martin}\ \bibnamefont {Head-Gordon}},\
  }\bibfield  {title} {\enquote {\bibinfo {title} {{Simulated Quantum
  Computation of Molecular Energies}},}\ }\href {\doibase
  10.1126/science.1113479} {\bibfield  {journal} {\bibinfo  {journal}
  {Science}\ }\textbf {\bibinfo {volume} {309}},\ \bibinfo {pages} {1704}
  (\bibinfo {year} {2005})}\BibitemShut {NoStop}%
\bibitem [{\citenamefont {Low}\ and\ \citenamefont {Chuang}(2019)}]{Low2016}%
  \BibitemOpen
  \bibfield  {author} {\bibinfo {author} {\bibfnamefont {Guang~Hao}\
  \bibnamefont {Low}}\ and\ \bibinfo {author} {\bibfnamefont {Isaac~L}\
  \bibnamefont {Chuang}},\ }\bibfield  {title} {\enquote {\bibinfo {title}
  {{Hamiltonian Simulation by Qubitization}},}\ }\href
  {https://doi.org/10.22331/q-2019-07-12-163} {\bibfield  {journal} {\bibinfo
  {journal} {Quantum}\ }\textbf {\bibinfo {volume} {3}},\ \bibinfo {pages}
  {163} (\bibinfo {year} {2019})}\BibitemShut {NoStop}%
\bibitem [{\citenamefont {Rokhlin}(1985)}]{Rokhlin1985}%
  \BibitemOpen
  \bibfield  {author} {\bibinfo {author} {\bibfnamefont {V}~\bibnamefont
  {Rokhlin}},\ }\bibfield  {title} {\enquote {\bibinfo {title} {{Rapid solution
  of integral equations of classical potential theory}},}\ }\href {\doibase
  10.1016/0021-9991(85)90002-6} {\bibfield  {journal} {\bibinfo  {journal}
  {Journal of Computational Physics}\ }\textbf {\bibinfo {volume} {60}},\
  \bibinfo {pages} {187--207} (\bibinfo {year} {1985})}\BibitemShut {NoStop}%
\bibitem [{\citenamefont {Barnes}\ and\ \citenamefont {Hut}(1986)}]{BarnesHut}%
  \BibitemOpen
  \bibfield  {author} {\bibinfo {author} {\bibfnamefont {Josh}\ \bibnamefont
  {Barnes}}\ and\ \bibinfo {author} {\bibfnamefont {Piet}\ \bibnamefont
  {Hut}},\ }\bibfield  {title} {\enquote {\bibinfo {title} {{A hierarchical O(N
  log N) force-calculation algorithm}},}\ }\href {\doibase 10.1038/324446a0}
  {\bibfield  {journal} {\bibinfo  {journal} {Nature}\ }\textbf {\bibinfo
  {volume} {324}},\ \bibinfo {pages} {446--449} (\bibinfo {year}
  {1986})}\BibitemShut {NoStop}%
\bibitem [{\citenamefont {Darden}\ \emph {et~al.}(1993)\citenamefont {Darden},
  \citenamefont {York},\ and\ \citenamefont {Pedersen}}]{Ewald1993}%
  \BibitemOpen
  \bibfield  {author} {\bibinfo {author} {\bibfnamefont {Tom}\ \bibnamefont
  {Darden}}, \bibinfo {author} {\bibfnamefont {Darrin}\ \bibnamefont {York}}, \
  and\ \bibinfo {author} {\bibfnamefont {Lee}\ \bibnamefont {Pedersen}},\
  }\bibfield  {title} {\enquote {\bibinfo {title} {{Particle mesh Ewald: An N
  log N method for Ewald sums in large systems}},}\ }\href {\doibase
  10.1063/1.464397} {\bibfield  {journal} {\bibinfo  {journal} {The Journal of
  Chemical Physics}\ }\textbf {\bibinfo {volume} {98}},\ \bibinfo {pages}
  {10089--10092} (\bibinfo {year} {1993})}\BibitemShut {NoStop}%
\bibitem [{\citenamefont {Childs}\ \emph {et~al.}(2022)\citenamefont {Childs},
  \citenamefont {Leng}, \citenamefont {Li}, \citenamefont {Liu},\ and\
  \citenamefont {Zhang}}]{Childs2022}%
  \BibitemOpen
  \bibfield  {author} {\bibinfo {author} {\bibfnamefont {Andrew~M.}\
  \bibnamefont {Childs}}, \bibinfo {author} {\bibfnamefont {Jiaqi}\
  \bibnamefont {Leng}}, \bibinfo {author} {\bibfnamefont {Tongyang}\
  \bibnamefont {Li}}, \bibinfo {author} {\bibfnamefont {Jin-Peng}\ \bibnamefont
  {Liu}}, \ and\ \bibinfo {author} {\bibfnamefont {Chenyi}\ \bibnamefont
  {Zhang}},\ }\bibfield  {title} {\enquote {\bibinfo {title} {{Quantum
  Simulation of Real-Space Dynamics}},}\ }\href
  {https://arxiv.org/abs/2203.17006} {\bibfield  {journal} {\bibinfo  {journal}
  {arXiv:2203.17006}\ } (\bibinfo {year} {2022})}\BibitemShut {NoStop}%
\bibitem [{\citenamefont {Giovannetti}\ \emph {et~al.}(2008)\citenamefont
  {Giovannetti}, \citenamefont {Lloyd},\ and\ \citenamefont
  {Maccone}}]{Giovannetti2008}%
  \BibitemOpen
  \bibfield  {author} {\bibinfo {author} {\bibfnamefont {Vittorio}\
  \bibnamefont {Giovannetti}}, \bibinfo {author} {\bibfnamefont {Seth}\
  \bibnamefont {Lloyd}}, \ and\ \bibinfo {author} {\bibfnamefont {Lorenzo}\
  \bibnamefont {Maccone}},\ }\bibfield  {title} {\enquote {\bibinfo {title}
  {{Quantum Random Access Memory}},}\ }\href {\doibase
  10.1103/PhysRevLett.100.160501} {\bibfield  {journal} {\bibinfo  {journal}
  {Physical Review Letters}\ }\textbf {\bibinfo {volume} {100}},\ \bibinfo
  {pages} {160501} (\bibinfo {year} {2008})}\BibitemShut {NoStop}%
\bibitem [{\citenamefont {Huggins}\ \emph {et~al.}(2021)\citenamefont
  {Huggins}, \citenamefont {Wan}, \citenamefont {McClean}, \citenamefont
  {O'Brien}, \citenamefont {Wiebe},\ and\ \citenamefont
  {Babbush}}]{Huggins2021}%
  \BibitemOpen
  \bibfield  {author} {\bibinfo {author} {\bibfnamefont {William~J.}\
  \bibnamefont {Huggins}}, \bibinfo {author} {\bibfnamefont {Kianna}\
  \bibnamefont {Wan}}, \bibinfo {author} {\bibfnamefont {Jarrod}\ \bibnamefont
  {McClean}}, \bibinfo {author} {\bibfnamefont {Thomas~E.}\ \bibnamefont
  {O'Brien}}, \bibinfo {author} {\bibfnamefont {Nathan}\ \bibnamefont {Wiebe}},
  \ and\ \bibinfo {author} {\bibfnamefont {Ryan}\ \bibnamefont {Babbush}},\
  }\bibfield  {title} {\enquote {\bibinfo {title} {{Nearly Optimal Quantum
  Algorithm for Estimating Multiple Expectation Values}},}\ }\href {\doibase
  10.48550/arxiv.2111.09283} {\bibfield  {journal} {\bibinfo  {journal}
  {arXiv:2111.09283}\ } (\bibinfo {year} {2021}),\
  10.48550/arxiv.2111.09283}\BibitemShut {NoStop}%
\bibitem [{\citenamefont {Huang}\ \emph {et~al.}(2020)\citenamefont {Huang},
  \citenamefont {Kueng},\ and\ \citenamefont {Preskill}}]{Huang2020}%
  \BibitemOpen
  \bibfield  {author} {\bibinfo {author} {\bibfnamefont {Hsin-Yuan}\
  \bibnamefont {Huang}}, \bibinfo {author} {\bibfnamefont {Richard}\
  \bibnamefont {Kueng}}, \ and\ \bibinfo {author} {\bibfnamefont {John}\
  \bibnamefont {Preskill}},\ }\bibfield  {title} {\enquote {\bibinfo {title}
  {{Predicting many properties of a quantum system from very few
  measurements}},}\ }\href {\doibase 10.1038/s41567-020-0932-7} {\bibfield
  {journal} {\bibinfo  {journal} {Nature Physics}\ }\textbf {\bibinfo {volume}
  {16}},\ \bibinfo {pages} {1050--1057} (\bibinfo {year} {2020})}\BibitemShut
  {NoStop}%
\bibitem [{\citenamefont {Zhao}\ \emph {et~al.}(2021)\citenamefont {Zhao},
  \citenamefont {Rubin},\ and\ \citenamefont {Miyake}}]{Zhao2021}%
  \BibitemOpen
  \bibfield  {author} {\bibinfo {author} {\bibfnamefont {Andrew}\ \bibnamefont
  {Zhao}}, \bibinfo {author} {\bibfnamefont {Nicholas~C.}\ \bibnamefont
  {Rubin}}, \ and\ \bibinfo {author} {\bibfnamefont {Akimasa}\ \bibnamefont
  {Miyake}},\ }\bibfield  {title} {\enquote {\bibinfo {title} {{Fermionic
  Partial Tomography via Classical Shadows}},}\ }\href {\doibase
  10.1103/PhysRevLett.127.110504} {\bibfield  {journal} {\bibinfo  {journal}
  {Physical Review Letters}\ }\textbf {\bibinfo {volume} {127}},\ \bibinfo
  {pages} {110504} (\bibinfo {year} {2021})}\BibitemShut {NoStop}%
\bibitem [{\citenamefont {Wan}\ \emph {et~al.}(2022)\citenamefont {Wan},
  \citenamefont {Huggins}, \citenamefont {Lee},\ and\ \citenamefont
  {Babbush}}]{Wan2022}%
  \BibitemOpen
  \bibfield  {author} {\bibinfo {author} {\bibfnamefont {Kianna}\ \bibnamefont
  {Wan}}, \bibinfo {author} {\bibfnamefont {William~J.}\ \bibnamefont
  {Huggins}}, \bibinfo {author} {\bibfnamefont {Joonho}\ \bibnamefont {Lee}}, \
  and\ \bibinfo {author} {\bibfnamefont {Ryan}\ \bibnamefont {Babbush}},\
  }\bibfield  {title} {\enquote {\bibinfo {title} {{Matchgate Shadows for
  Fermionic Quantum Simulation}},}\ }\href {\doibase 10.48550/arxiv.2207.13723}
  {\bibfield  {journal} {\bibinfo  {journal} {arXiv:2207.13723}\ } (\bibinfo
  {year} {2022}),\ 10.48550/arxiv.2207.13723}\BibitemShut {NoStop}%
\bibitem [{\citenamefont {O'Gorman}(2022)}]{OGorman2022-ag}%
  \BibitemOpen
  \bibfield  {author} {\bibinfo {author} {\bibfnamefont {B}~\bibnamefont
  {O'Gorman}},\ }\bibfield  {title} {\enquote {\bibinfo {title} {Fermionic
  tomography and learning},}\ }\href {http://arxiv.org/abs/2207.14787}
  {\bibfield  {journal} {\bibinfo  {journal} {arXiv:2207.14787}\ } (\bibinfo
  {year} {2022})}\BibitemShut {NoStop}%
\bibitem [{\citenamefont {Low}(2022)}]{Low2022}%
  \BibitemOpen
  \bibfield  {author} {\bibinfo {author} {\bibfnamefont {Guang~Hao}\
  \bibnamefont {Low}},\ }\bibfield  {title} {\enquote {\bibinfo {title}
  {{Classical shadows of fermions with particle number symmetry}},}\ }\href
  {\doibase 10.48550/arxiv.2208.08964} {\bibfield  {journal} {\bibinfo
  {journal} {arXiv:2208.08964}\ } (\bibinfo {year} {2022}),\
  10.48550/arxiv.2208.08964}\BibitemShut {NoStop}%
\bibitem [{\citenamefont {Brassard}\ \emph {et~al.}(2002)\citenamefont
  {Brassard}, \citenamefont {H{\o}yer}, \citenamefont {Mosca},\ and\
  \citenamefont {Tapp}}]{Brassard2002}%
  \BibitemOpen
  \bibfield  {author} {\bibinfo {author} {\bibfnamefont {Gilles}\ \bibnamefont
  {Brassard}}, \bibinfo {author} {\bibfnamefont {Peter}\ \bibnamefont
  {H{\o}yer}}, \bibinfo {author} {\bibfnamefont {Michele}\ \bibnamefont
  {Mosca}}, \ and\ \bibinfo {author} {\bibfnamefont {Alain}\ \bibnamefont
  {Tapp}},\ }\bibfield  {title} {\enquote {\bibinfo {title} {{Quantum amplitude
  amplification and estimation}},}\ }in\ \href {\doibase
  10.1090/conm/305/05215} {\emph {\bibinfo {booktitle} {Quantum Computation and
  Information}}},\ \bibinfo {editor} {edited by\ \bibinfo {editor}
  {\bibnamefont {{Vitaly I Voloshin}}}, \bibinfo {editor} {\bibnamefont
  {{Samuel J. Lomonaco}}}, \ and\ \bibinfo {editor} {\bibnamefont {{Howard E.
  Brandt}}}}\ (\bibinfo  {publisher} {American Mathematical Society},\ \bibinfo
  {address} {Washington D.C.},\ \bibinfo {year} {2002})\ Chap.~\bibinfo
  {chapter} {3}, pp.\ \bibinfo {pages} {53--74}\BibitemShut {NoStop}%
\bibitem [{\citenamefont {Rall}(2020)}]{Rall2020}%
  \BibitemOpen
  \bibfield  {author} {\bibinfo {author} {\bibfnamefont {Patrick}\ \bibnamefont
  {Rall}},\ }\bibfield  {title} {\enquote {\bibinfo {title} {{Quantum
  algorithms for estimating physical quantities using block encodings}},}\
  }\href {\doibase 10.1103/PhysRevA.102.022408} {\bibfield  {journal} {\bibinfo
   {journal} {Physical Review A}\ }\textbf {\bibinfo {volume} {102}},\ \bibinfo
  {pages} {022408} (\bibinfo {year} {2020})}\BibitemShut {NoStop}%
\bibitem [{\citenamefont {Kivlichan}\ \emph {et~al.}(2018)\citenamefont
  {Kivlichan}, \citenamefont {McClean}, \citenamefont {Wiebe}, \citenamefont
  {Gidney}, \citenamefont {Aspuru-Guzik}, \citenamefont {Chan},\ and\
  \citenamefont {Babbush}}]{Kivlichan2018QuantumConnectivity}%
  \BibitemOpen
  \bibfield  {author} {\bibinfo {author} {\bibfnamefont {I.D.}\ \bibnamefont
  {Kivlichan}}, \bibinfo {author} {\bibfnamefont {J.}~\bibnamefont {McClean}},
  \bibinfo {author} {\bibfnamefont {N.}~\bibnamefont {Wiebe}}, \bibinfo
  {author} {\bibfnamefont {C.}~\bibnamefont {Gidney}}, \bibinfo {author}
  {\bibfnamefont {A.}~\bibnamefont {Aspuru-Guzik}}, \bibinfo {author}
  {\bibfnamefont {G.K.-L.}\ \bibnamefont {Chan}}, \ and\ \bibinfo {author}
  {\bibfnamefont {R.}~\bibnamefont {Babbush}},\ }\bibfield  {title} {\enquote
  {\bibinfo {title} {{Quantum Simulation of Electronic Structure with Linear
  Depth and Connectivity}},}\ }\href {\doibase 10.1103/PhysRevLett.120.110501}
  {\bibfield  {journal} {\bibinfo  {journal} {Physical Review Letters}\
  }\textbf {\bibinfo {volume} {120}} (\bibinfo {year} {2018}),\
  10.1103/PhysRevLett.120.110501}\BibitemShut {NoStop}%
\bibitem [{\citenamefont {Delgado}\ \emph {et~al.}(2022)\citenamefont
  {Delgado}, \citenamefont {Casares}, \citenamefont {dos Reis}, \citenamefont
  {Zini}, \citenamefont {Campos}, \citenamefont {Cruz-Hern{\'{a}}ndez},
  \citenamefont {Voigt}, \citenamefont {Lowe}, \citenamefont {Jahangiri},
  \citenamefont {Martin-Delgado}, \citenamefont {Mueller},\ and\ \citenamefont
  {Arrazola}}]{Delgado2022}%
  \BibitemOpen
  \bibfield  {author} {\bibinfo {author} {\bibfnamefont {Alain}\ \bibnamefont
  {Delgado}}, \bibinfo {author} {\bibfnamefont {Pablo A.~M.}\ \bibnamefont
  {Casares}}, \bibinfo {author} {\bibfnamefont {Roberto}\ \bibnamefont {dos
  Reis}}, \bibinfo {author} {\bibfnamefont {Modjtaba~Shokrian}\ \bibnamefont
  {Zini}}, \bibinfo {author} {\bibfnamefont {Roberto}\ \bibnamefont {Campos}},
  \bibinfo {author} {\bibfnamefont {Norge}\ \bibnamefont
  {Cruz-Hern{\'{a}}ndez}}, \bibinfo {author} {\bibfnamefont {Arne-Christian}\
  \bibnamefont {Voigt}}, \bibinfo {author} {\bibfnamefont {Angus}\ \bibnamefont
  {Lowe}}, \bibinfo {author} {\bibfnamefont {Soran}\ \bibnamefont {Jahangiri}},
  \bibinfo {author} {\bibfnamefont {M.~A.}\ \bibnamefont {Martin-Delgado}},
  \bibinfo {author} {\bibfnamefont {Jonathan~E.}\ \bibnamefont {Mueller}}, \
  and\ \bibinfo {author} {\bibfnamefont {Juan~Miguel}\ \bibnamefont
  {Arrazola}},\ }\bibfield  {title} {\enquote {\bibinfo {title} {{Simulating
  key properties of lithium-ion batteries with a fault-tolerant quantum
  computer}},}\ }\href {\doibase 10.1103/PhysRevA.106.032428} {\bibfield
  {journal} {\bibinfo  {journal} {Physical Review A}\ }\textbf {\bibinfo
  {volume} {106}},\ \bibinfo {pages} {032428} (\bibinfo {year}
  {2022})}\BibitemShut {NoStop}%
\bibitem [{\citenamefont {Berry}\ \emph {et~al.}(2018)\citenamefont {Berry},
  \citenamefont {Kieferov{\'{a}}}, \citenamefont {Scherer}, \citenamefont
  {Sanders}, \citenamefont {Low}, \citenamefont {Wiebe}, \citenamefont
  {Gidney},\ and\ \citenamefont {Babbush}}]{Berry2018}%
  \BibitemOpen
  \bibfield  {author} {\bibinfo {author} {\bibfnamefont {Dominic~W}\
  \bibnamefont {Berry}}, \bibinfo {author} {\bibfnamefont {Maria}\ \bibnamefont
  {Kieferov{\'{a}}}}, \bibinfo {author} {\bibfnamefont {Artur}\ \bibnamefont
  {Scherer}}, \bibinfo {author} {\bibfnamefont {Yuval~R}\ \bibnamefont
  {Sanders}}, \bibinfo {author} {\bibfnamefont {Guang~Hao}\ \bibnamefont
  {Low}}, \bibinfo {author} {\bibfnamefont {Nathan}\ \bibnamefont {Wiebe}},
  \bibinfo {author} {\bibfnamefont {Craig}\ \bibnamefont {Gidney}}, \ and\
  \bibinfo {author} {\bibfnamefont {Ryan}\ \bibnamefont {Babbush}},\ }\bibfield
   {title} {\enquote {\bibinfo {title} {{Improved Techniques for Preparing
  Eigenstates of Fermionic Hamiltonians}},}\ }\href {\doibase
  10.1038/s41534-018-0071-5} {\bibfield  {journal} {\bibinfo  {journal} {npj
  Quantum Information}\ }\textbf {\bibinfo {volume} {4}},\ \bibinfo {pages}
  {22} (\bibinfo {year} {2018})}\BibitemShut {NoStop}%
\bibitem [{\citenamefont {Shende}\ \emph {et~al.}(2006)\citenamefont {Shende},
  \citenamefont {Bullock},\ and\ \citenamefont {Markov}}]{Shende2006}%
  \BibitemOpen
  \bibfield  {author} {\bibinfo {author} {\bibfnamefont {V~V}\ \bibnamefont
  {Shende}}, \bibinfo {author} {\bibfnamefont {S~S}\ \bibnamefont {Bullock}}, \
  and\ \bibinfo {author} {\bibfnamefont {I~L}\ \bibnamefont {Markov}},\
  }\bibfield  {title} {\enquote {\bibinfo {title} {{Synthesis of quantum-logic
  circuits}},}\ }\href {\doibase 10.1109/TCAD.2005.855930} {\bibfield
  {journal} {\bibinfo  {journal} {IEEE Transactions on Computer-Aided Design of
  Integrated Circuits and Systems}\ }\textbf {\bibinfo {volume} {25}},\
  \bibinfo {pages} {1000--1010} (\bibinfo {year} {2006})}\BibitemShut {NoStop}%
\bibitem [{\citenamefont {Mermin}(1963)}]{Mermin1963}%
  \BibitemOpen
  \bibfield  {author} {\bibinfo {author} {\bibfnamefont {N.~David}\
  \bibnamefont {Mermin}},\ }\bibfield  {title} {\enquote {\bibinfo {title}
  {{Stability of the thermal Hartree-Fock approximation}},}\ }\href {\doibase
  10.1016/0003-4916(63)90226-4} {\bibfield  {journal} {\bibinfo  {journal}
  {Annals of Physics}\ }\textbf {\bibinfo {volume} {21}},\ \bibinfo {pages}
  {99--121} (\bibinfo {year} {1963})}\BibitemShut {NoStop}%
\bibitem [{\citenamefont {Babbush}\ \emph {et~al.}(2021)\citenamefont
  {Babbush}, \citenamefont {McClean}, \citenamefont {Newman}, \citenamefont
  {Gidney}, \citenamefont {Boixo},\ and\ \citenamefont {Neven}}]{BabbushFocus}%
  \BibitemOpen
  \bibfield  {author} {\bibinfo {author} {\bibfnamefont {Ryan}\ \bibnamefont
  {Babbush}}, \bibinfo {author} {\bibfnamefont {Jarrod~R.}\ \bibnamefont
  {McClean}}, \bibinfo {author} {\bibfnamefont {Michael}\ \bibnamefont
  {Newman}}, \bibinfo {author} {\bibfnamefont {Craig}\ \bibnamefont {Gidney}},
  \bibinfo {author} {\bibfnamefont {Sergio}\ \bibnamefont {Boixo}}, \ and\
  \bibinfo {author} {\bibfnamefont {Hartmut}\ \bibnamefont {Neven}},\
  }\bibfield  {title} {\enquote {\bibinfo {title} {{Focus beyond Quadratic
  Speedups for Error-Corrected Quantum Advantage}},}\ }\href {\doibase
  10.1103/PRXQuantum.2.010103} {\bibfield  {journal} {\bibinfo  {journal} {PRX
  Quantum}\ }\textbf {\bibinfo {volume} {2}},\ \bibinfo {pages} {010103}
  (\bibinfo {year} {2021})}\BibitemShut {NoStop}%
\bibitem [{\citenamefont {An}\ \emph {et~al.}(2021)\citenamefont {An},
  \citenamefont {Fang},\ and\ \citenamefont {Lin}}]{An2021}%
  \BibitemOpen
  \bibfield  {author} {\bibinfo {author} {\bibfnamefont {Dong}\ \bibnamefont
  {An}}, \bibinfo {author} {\bibfnamefont {Di}~\bibnamefont {Fang}}, \ and\
  \bibinfo {author} {\bibfnamefont {Lin}\ \bibnamefont {Lin}},\ }\bibfield
  {title} {\enquote {\bibinfo {title} {{Time-dependent unbounded Hamiltonian
  simulation with vector norm scaling}},}\ }\href {\doibase
  10.22331/q-2021-05-26-459} {\bibfield  {journal} {\bibinfo  {journal}
  {Quantum}\ }\textbf {\bibinfo {volume} {5}},\ \bibinfo {pages} {459}
  (\bibinfo {year} {2021})}\BibitemShut {NoStop}%
\bibitem [{\citenamefont {Provorse}\ and\ \citenamefont
  {Isborn}(2016)}]{Provorse2016May}%
  \BibitemOpen
  \bibfield  {author} {\bibinfo {author} {\bibfnamefont {Makenzie~R.}\
  \bibnamefont {Provorse}}\ and\ \bibinfo {author} {\bibfnamefont
  {Christine~M.}\ \bibnamefont {Isborn}},\ }\bibfield  {title} {\enquote
  {\bibinfo {title} {{Electron dynamics with real-time time-dependent density
  functional theory}},}\ }\href {\doibase 10.1002/qua.25096} {\bibfield
  {journal} {\bibinfo  {journal} {Int. J. Quantum Chem.}\ }\textbf {\bibinfo
  {volume} {116}},\ \bibinfo {pages} {739--749} (\bibinfo {year}
  {2016})}\BibitemShut {NoStop}%
\bibitem [{\citenamefont {Cohen}\ \emph {et~al.}(2008)\citenamefont {Cohen},
  \citenamefont {Mori-Sa'nchez},\ and\ \citenamefont {Yang}}]{Cohen2008Aug}%
  \BibitemOpen
  \bibfield  {author} {\bibinfo {author} {\bibfnamefont {Aron~J.}\ \bibnamefont
  {Cohen}}, \bibinfo {author} {\bibfnamefont {Paula}\ \bibnamefont
  {Mori-Sa'nchez}}, \ and\ \bibinfo {author} {\bibfnamefont {Weitao}\
  \bibnamefont {Yang}},\ }\bibfield  {title} {\enquote {\bibinfo {title}
  {{Insights into Current Limitations of Density Functional Theory}},}\ }\href
  {\doibase 10.1126/science.1158722} {\bibfield  {journal} {\bibinfo  {journal}
  {Science}\ }\textbf {\bibinfo {volume} {321}},\ \bibinfo {pages} {792--794}
  (\bibinfo {year} {2008})}\BibitemShut {NoStop}%
\bibitem [{\citenamefont {Tully}(2000)}]{Tully2000Oct}%
  \BibitemOpen
  \bibfield  {author} {\bibinfo {author} {\bibfnamefont {John~C.}\ \bibnamefont
  {Tully}},\ }\bibfield  {title} {\enquote {\bibinfo {title} {{Chemical
  Dynamics at Metal Surfaces}},}\ }\href {\doibase
  10.1146/annurev.physchem.51.1.153} {\bibfield  {journal} {\bibinfo  {journal}
  {Annu. Rev. Phys. Chem.}\ }\textbf {\bibinfo {volume} {51}},\ \bibinfo
  {pages} {153--178} (\bibinfo {year} {2000})}\BibitemShut {NoStop}%
\bibitem [{\citenamefont {Wang}\ \emph {et~al.}(2013)\citenamefont {Wang},
  \citenamefont {Hou},\ and\ \citenamefont {Zheng}}]{Wang2013Nov}%
  \BibitemOpen
  \bibfield  {author} {\bibinfo {author} {\bibfnamefont {Rulin}\ \bibnamefont
  {Wang}}, \bibinfo {author} {\bibfnamefont {Dong}\ \bibnamefont {Hou}}, \ and\
  \bibinfo {author} {\bibfnamefont {Xiao}\ \bibnamefont {Zheng}},\ }\bibfield
  {title} {\enquote {\bibinfo {title} {{Time-dependent density-functional
  theory for real-time electronic dynamics on material surfaces}},}\ }\href
  {\doibase 10.1103/PhysRevB.88.205126} {\bibfield  {journal} {\bibinfo
  {journal} {Phys. Rev. B}\ }\textbf {\bibinfo {volume} {88}},\ \bibinfo
  {pages} {205126} (\bibinfo {year} {2013})}\BibitemShut {NoStop}%
\bibitem [{\citenamefont {Baczewski}\ \emph {et~al.}(2016)\citenamefont
  {Baczewski}, \citenamefont {Shulenburger}, \citenamefont {Desjarlais},
  \citenamefont {Hansen},\ and\ \citenamefont {Magyar}}]{baczewski2016x}%
  \BibitemOpen
  \bibfield  {author} {\bibinfo {author} {\bibfnamefont {Andrew~David}\
  \bibnamefont {Baczewski}}, \bibinfo {author} {\bibfnamefont {L}~\bibnamefont
  {Shulenburger}}, \bibinfo {author} {\bibfnamefont {MP}~\bibnamefont
  {Desjarlais}}, \bibinfo {author} {\bibfnamefont {SB}~\bibnamefont {Hansen}},
  \ and\ \bibinfo {author} {\bibfnamefont {RJ}~\bibnamefont {Magyar}},\
  }\bibfield  {title} {\enquote {\bibinfo {title} {X-ray thomson scattering in
  warm dense matter without the chihara decomposition},}\ }\href {\doibase
  10.1103/PhysRevLett.116.115004} {\bibfield  {journal} {\bibinfo  {journal}
  {Physical review letters}\ }\textbf {\bibinfo {volume} {116}},\ \bibinfo
  {pages} {115004} (\bibinfo {year} {2016})}\BibitemShut {NoStop}%
\bibitem [{\citenamefont {Magyar}\ \emph {et~al.}(2016)\citenamefont {Magyar},
  \citenamefont {Shulenburger},\ and\ \citenamefont
  {Baczewski}}]{magyar2016stopping}%
  \BibitemOpen
  \bibfield  {author} {\bibinfo {author} {\bibfnamefont {Rudolph~J}\
  \bibnamefont {Magyar}}, \bibinfo {author} {\bibfnamefont {L}~\bibnamefont
  {Shulenburger}}, \ and\ \bibinfo {author} {\bibfnamefont {AD}~\bibnamefont
  {Baczewski}},\ }\href {\doibase 10.1002/ctpp.201500143} {\enquote {\bibinfo
  {title} {Stopping of deuterium in warm dense deuterium from ehrenfest
  time-dependent density functional theory},}\ } (\bibinfo {year}
  {2016})\BibitemShut {NoStop}%
\bibitem [{\citenamefont {Andrade}\ \emph {et~al.}(2018)\citenamefont
  {Andrade}, \citenamefont {Hamel},\ and\ \citenamefont
  {Correa}}]{andrade2018negative}%
  \BibitemOpen
  \bibfield  {author} {\bibinfo {author} {\bibfnamefont {Xavier}\ \bibnamefont
  {Andrade}}, \bibinfo {author} {\bibfnamefont {S{\'e}bastien}\ \bibnamefont
  {Hamel}}, \ and\ \bibinfo {author} {\bibfnamefont {Alfredo~A}\ \bibnamefont
  {Correa}},\ }\bibfield  {title} {\enquote {\bibinfo {title} {Negative
  differential conductivity in liquid aluminum from real-time quantum
  simulations},}\ }\href {\doibase 10.1140/epjb/e2018-90291-5} {\bibfield
  {journal} {\bibinfo  {journal} {The European Physical Journal B}\ }\textbf
  {\bibinfo {volume} {91}},\ \bibinfo {pages} {1--7} (\bibinfo {year}
  {2018})}\BibitemShut {NoStop}%
\bibitem [{\citenamefont {Ding}\ \emph {et~al.}(2018)\citenamefont {Ding},
  \citenamefont {White}, \citenamefont {Hu}, \citenamefont {Certik},\ and\
  \citenamefont {Collins}}]{ding2018ab}%
  \BibitemOpen
  \bibfield  {author} {\bibinfo {author} {\bibfnamefont {YH}~\bibnamefont
  {Ding}}, \bibinfo {author} {\bibfnamefont {Alexander~James}\ \bibnamefont
  {White}}, \bibinfo {author} {\bibfnamefont {SX}~\bibnamefont {Hu}}, \bibinfo
  {author} {\bibfnamefont {Ondrej}\ \bibnamefont {Certik}}, \ and\ \bibinfo
  {author} {\bibfnamefont {Lee~A}\ \bibnamefont {Collins}},\ }\bibfield
  {title} {\enquote {\bibinfo {title} {Ab initio studies on the stopping power
  of warm dense matter with time-dependent orbital-free density functional
  theory},}\ }\href {\doibase 10.1103/PhysRevLett.121.145001} {\bibfield
  {journal} {\bibinfo  {journal} {Physical Review Letters}\ }\textbf {\bibinfo
  {volume} {121}},\ \bibinfo {pages} {145001} (\bibinfo {year}
  {2018})}\BibitemShut {NoStop}%
\bibitem [{\citenamefont {Atzeni}\ and\ \citenamefont {Meyer-ter
  Vehn}(2004)}]{atzeni2004physics}%
  \BibitemOpen
  \bibfield  {author} {\bibinfo {author} {\bibfnamefont {Stefano}\ \bibnamefont
  {Atzeni}}\ and\ \bibinfo {author} {\bibfnamefont {J{\"u}rgen}\ \bibnamefont
  {Meyer-ter Vehn}},\ }\href@noop {} {\emph {\bibinfo {title} {The physics of
  inertial fusion: beam plasma interaction, hydrodynamics, hot dense
  matter}}},\ Vol.\ \bibinfo {volume} {125}\ (\bibinfo  {publisher} {OUP
  Oxford},\ \bibinfo {year} {2004})\BibitemShut {NoStop}%
\bibitem [{\citenamefont {Graziani}\ \emph {et~al.}(2014)\citenamefont
  {Graziani}, \citenamefont {Desjarlais}, \citenamefont {Redmer},\ and\
  \citenamefont {Trickey}}]{graziani2014frontiers}%
  \BibitemOpen
  \bibfield  {author} {\bibinfo {author} {\bibfnamefont {Frank}\ \bibnamefont
  {Graziani}}, \bibinfo {author} {\bibfnamefont {Michael~P}\ \bibnamefont
  {Desjarlais}}, \bibinfo {author} {\bibfnamefont {Ronald}\ \bibnamefont
  {Redmer}}, \ and\ \bibinfo {author} {\bibfnamefont {Samuel~B}\ \bibnamefont
  {Trickey}},\ }\href@noop {} {\emph {\bibinfo {title} {Frontiers and
  challenges in warm dense matter}}},\ Vol.~\bibinfo {volume} {96}\ (\bibinfo
  {publisher} {Springer Science \& Business},\ \bibinfo {year}
  {2014})\BibitemShut {NoStop}%
\bibitem [{\citenamefont {Dornheim}\ \emph {et~al.}(2018)\citenamefont
  {Dornheim}, \citenamefont {Groth},\ and\ \citenamefont
  {Bonitz}}]{dornheim2018uniform}%
  \BibitemOpen
  \bibfield  {author} {\bibinfo {author} {\bibfnamefont {Tobias}\ \bibnamefont
  {Dornheim}}, \bibinfo {author} {\bibfnamefont {Simon}\ \bibnamefont {Groth}},
  \ and\ \bibinfo {author} {\bibfnamefont {Michael}\ \bibnamefont {Bonitz}},\
  }\bibfield  {title} {\enquote {\bibinfo {title} {The uniform electron gas at
  warm dense matter conditions},}\ }\href {\doibase
  10.1016/j.physrep.2018.04.001} {\bibfield  {journal} {\bibinfo  {journal}
  {Physics Reports}\ }\textbf {\bibinfo {volume} {744}},\ \bibinfo {pages}
  {1--86} (\bibinfo {year} {2018})}\BibitemShut {NoStop}%
\bibitem [{\citenamefont {Bailey}\ \emph {et~al.}(2015)\citenamefont {Bailey},
  \citenamefont {Nagayama}, \citenamefont {Loisel}, \citenamefont {Rochau},
  \citenamefont {Blancard}, \citenamefont {Colgan}, \citenamefont {Cosse},
  \citenamefont {Faussurier}, \citenamefont {Fontes}, \citenamefont {Gilleron}
  \emph {et~al.}}]{bailey2015higher}%
  \BibitemOpen
  \bibfield  {author} {\bibinfo {author} {\bibfnamefont {James~E}\ \bibnamefont
  {Bailey}}, \bibinfo {author} {\bibfnamefont {Taisuke}\ \bibnamefont
  {Nagayama}}, \bibinfo {author} {\bibfnamefont {Guillaume~Pascal}\
  \bibnamefont {Loisel}}, \bibinfo {author} {\bibfnamefont {Gregory~Alan}\
  \bibnamefont {Rochau}}, \bibinfo {author} {\bibfnamefont {C}~\bibnamefont
  {Blancard}}, \bibinfo {author} {\bibfnamefont {James}\ \bibnamefont
  {Colgan}}, \bibinfo {author} {\bibfnamefont {Ph}~\bibnamefont {Cosse}},
  \bibinfo {author} {\bibfnamefont {G}~\bibnamefont {Faussurier}}, \bibinfo
  {author} {\bibfnamefont {CJ}~\bibnamefont {Fontes}}, \bibinfo {author}
  {\bibfnamefont {F}~\bibnamefont {Gilleron}},  \emph {et~al.},\ }\bibfield
  {title} {\enquote {\bibinfo {title} {A higher-than-predicted measurement of
  iron opacity at solar interior temperatures},}\ }\href {\doibase
  10.1038/nature14048} {\bibfield  {journal} {\bibinfo  {journal} {Nature}\
  }\textbf {\bibinfo {volume} {517}},\ \bibinfo {pages} {56--59} (\bibinfo
  {year} {2015})}\BibitemShut {NoStop}%
\bibitem [{\citenamefont {Nagayama}\ \emph {et~al.}(2019)\citenamefont
  {Nagayama}, \citenamefont {Bailey}, \citenamefont {Loisel}, \citenamefont
  {Dunham}, \citenamefont {Rochau}, \citenamefont {Blancard}, \citenamefont
  {Colgan}, \citenamefont {Coss{\'e}}, \citenamefont {Faussurier},
  \citenamefont {Fontes} \emph {et~al.}}]{nagayama2019systematic}%
  \BibitemOpen
  \bibfield  {author} {\bibinfo {author} {\bibfnamefont {Taisuke}\ \bibnamefont
  {Nagayama}}, \bibinfo {author} {\bibfnamefont {JE}~\bibnamefont {Bailey}},
  \bibinfo {author} {\bibfnamefont {GP}~\bibnamefont {Loisel}}, \bibinfo
  {author} {\bibfnamefont {GS}~\bibnamefont {Dunham}}, \bibinfo {author}
  {\bibfnamefont {GA}~\bibnamefont {Rochau}}, \bibinfo {author} {\bibfnamefont
  {C}~\bibnamefont {Blancard}}, \bibinfo {author} {\bibfnamefont
  {J}~\bibnamefont {Colgan}}, \bibinfo {author} {\bibfnamefont
  {Ph}~\bibnamefont {Coss{\'e}}}, \bibinfo {author} {\bibfnamefont
  {G}~\bibnamefont {Faussurier}}, \bibinfo {author} {\bibfnamefont
  {Christopher~John}\ \bibnamefont {Fontes}},  \emph {et~al.},\ }\bibfield
  {title} {\enquote {\bibinfo {title} {Systematic study of l-shell opacity at
  stellar interior temperatures},}\ }\href {\doibase
  10.1103/PhysRevLett.122.235001} {\bibfield  {journal} {\bibinfo  {journal}
  {Physical review letters}\ }\textbf {\bibinfo {volume} {122}},\ \bibinfo
  {pages} {235001} (\bibinfo {year} {2019})}\BibitemShut {NoStop}%
\bibitem [{\citenamefont {Huber}\ and\ \citenamefont
  {Klamroth}(2011)}]{Huber2011Feb}%
  \BibitemOpen
  \bibfield  {author} {\bibinfo {author} {\bibfnamefont {Christian}\
  \bibnamefont {Huber}}\ and\ \bibinfo {author} {\bibfnamefont {Tillmann}\
  \bibnamefont {Klamroth}},\ }\bibfield  {title} {\enquote {\bibinfo {title}
  {{Explicitly time-dependent coupled cluster singles doubles calculations of
  laser-driven many-electron dynamics}},}\ }\href {\doibase 10.1063/1.3530807}
  {\bibfield  {journal} {\bibinfo  {journal} {J. Chem. Phys.}\ }\textbf
  {\bibinfo {volume} {134}},\ \bibinfo {pages} {054113} (\bibinfo {year}
  {2011})}\BibitemShut {NoStop}%
\bibitem [{\citenamefont {Sato}\ \emph {et~al.}(2018)\citenamefont {Sato},
  \citenamefont {Pathak}, \citenamefont {Orimo},\ and\ \citenamefont
  {Ishikawa}}]{Sato2018Feb}%
  \BibitemOpen
  \bibfield  {author} {\bibinfo {author} {\bibfnamefont {Takeshi}\ \bibnamefont
  {Sato}}, \bibinfo {author} {\bibfnamefont {Himadri}\ \bibnamefont {Pathak}},
  \bibinfo {author} {\bibfnamefont {Yuki}\ \bibnamefont {Orimo}}, \ and\
  \bibinfo {author} {\bibfnamefont {Kenichi~L.}\ \bibnamefont {Ishikawa}},\
  }\bibfield  {title} {\enquote {\bibinfo {title} {{Communication:
  Time-dependent optimized coupled-cluster method for multielectron
  dynamics}},}\ }\href {\doibase 10.1063/1.5020633} {\bibfield  {journal}
  {\bibinfo  {journal} {J. Chem. Phys.}\ }\textbf {\bibinfo {volume} {148}},\
  \bibinfo {pages} {051101} (\bibinfo {year} {2018})}\BibitemShut {NoStop}%
\bibitem [{\citenamefont {Shushkov}\ and\ \citenamefont
  {Miller}(2019)}]{Shushkov2019Oct}%
  \BibitemOpen
  \bibfield  {author} {\bibinfo {author} {\bibfnamefont {Philip}\ \bibnamefont
  {Shushkov}}\ and\ \bibinfo {author} {\bibfnamefont {Thomas~F.}\ \bibnamefont
  {Miller}},\ }\bibfield  {title} {\enquote {\bibinfo {title} {{Real-time
  density-matrix coupled-cluster approach for closed and open systems at finite
  temperature}},}\ }\href {\doibase 10.1063/1.5121749} {\bibfield  {journal}
  {\bibinfo  {journal} {J. Chem. Phys.}\ }\textbf {\bibinfo {volume} {151}},\
  \bibinfo {pages} {134107} (\bibinfo {year} {2019})}\BibitemShut {NoStop}%
\bibitem [{\citenamefont {White}\ and\ \citenamefont
  {Chan}(2019)}]{White2019Nov}%
  \BibitemOpen
  \bibfield  {author} {\bibinfo {author} {\bibfnamefont {Alec~F.}\ \bibnamefont
  {White}}\ and\ \bibinfo {author} {\bibfnamefont {Garnet Kin-Lic}\
  \bibnamefont {Chan}},\ }\bibfield  {title} {\enquote {\bibinfo {title}
  {{Time-Dependent Coupled Cluster Theory on the Keldysh Contour for
  Nonequilibrium Systems}},}\ }\href {\doibase 10.1021/acs.jctc.9b00750}
  {\bibfield  {journal} {\bibinfo  {journal} {J. Chem. Theory Comput.}\
  }\textbf {\bibinfo {volume} {15}},\ \bibinfo {pages} {6137--6153} (\bibinfo
  {year} {2019})}\BibitemShut {NoStop}%
\bibitem [{\citenamefont {Kivlichan}\ \emph {et~al.}(2017)\citenamefont
  {Kivlichan}, \citenamefont {Wiebe}, \citenamefont {Babbush},\ and\
  \citenamefont {Aspuru-Guzik}}]{Kivlichan2016}%
  \BibitemOpen
  \bibfield  {author} {\bibinfo {author} {\bibfnamefont {Ian~D}\ \bibnamefont
  {Kivlichan}}, \bibinfo {author} {\bibfnamefont {Nathan}\ \bibnamefont
  {Wiebe}}, \bibinfo {author} {\bibfnamefont {Ryan}\ \bibnamefont {Babbush}}, \
  and\ \bibinfo {author} {\bibfnamefont {Alan}\ \bibnamefont {Aspuru-Guzik}},\
  }\bibfield  {title} {\enquote {\bibinfo {title} {{Bounding the costs of
  quantum simulation of many-body physics in real space}},}\ }\href
  {http://iopscience.iop.org/article/10.1088/1751-8121/aa77b8} {\bibfield
  {journal} {\bibinfo  {journal} {Journal of Physics A: Mathematical and
  Theoretical}\ }\textbf {\bibinfo {volume} {50}},\ \bibinfo {pages} {305301}
  (\bibinfo {year} {2017})}\BibitemShut {NoStop}%
\bibitem [{\citenamefont {Chen}\ and\ \citenamefont
  {Weeks}(2006)}]{smoothCoulomb2006}%
  \BibitemOpen
  \bibfield  {author} {\bibinfo {author} {\bibfnamefont {Yng-Gwei}\
  \bibnamefont {Chen}}\ and\ \bibinfo {author} {\bibfnamefont {John~D.}\
  \bibnamefont {Weeks}},\ }\bibfield  {title} {\enquote {\bibinfo {title}
  {{Local molecular field theory for effective attractions between like charged
  objects in systems with strong Coulomb interactions}},}\ }\href {\doibase
  10.1073/pnas.0600282103} {\bibfield  {journal} {\bibinfo  {journal}
  {Proceedings of the National Academy of Sciences}\ }\textbf {\bibinfo
  {volume} {103}},\ \bibinfo {pages} {7560--7565} (\bibinfo {year}
  {2006})}\BibitemShut {NoStop}%
\bibitem [{\citenamefont {Gonz{\'{a}}lez-Espinoza}\ \emph
  {et~al.}(2016)\citenamefont {Gonz{\'{a}}lez-Espinoza}, \citenamefont {Ayers},
  \citenamefont {Karwowski},\ and\ \citenamefont {Savin}}]{smoothCoulomb2016}%
  \BibitemOpen
  \bibfield  {author} {\bibinfo {author} {\bibfnamefont {Cristina~E.}\
  \bibnamefont {Gonz{\'{a}}lez-Espinoza}}, \bibinfo {author} {\bibfnamefont
  {Paul~W.}\ \bibnamefont {Ayers}}, \bibinfo {author} {\bibfnamefont {Jacek}\
  \bibnamefont {Karwowski}}, \ and\ \bibinfo {author} {\bibfnamefont {Andreas}\
  \bibnamefont {Savin}},\ }\bibfield  {title} {\enquote {\bibinfo {title}
  {{Smooth models for the Coulomb potential}},}\ }\href {\doibase
  10.1007/s00214-016-2007-5} {\bibfield  {journal} {\bibinfo  {journal}
  {Theoretical Chemistry Accounts}\ }\textbf {\bibinfo {volume} {135}},\
  \bibinfo {pages} {256} (\bibinfo {year} {2016})}\BibitemShut {NoStop}%
\bibitem [{\citenamefont {Carrier}\ \emph {et~al.}(1988)\citenamefont
  {Carrier}, \citenamefont {Greengard},\ and\ \citenamefont
  {Rokhlin}}]{adaptiveFMM}%
  \BibitemOpen
  \bibfield  {author} {\bibinfo {author} {\bibfnamefont {J.}~\bibnamefont
  {Carrier}}, \bibinfo {author} {\bibfnamefont {L.}~\bibnamefont {Greengard}},
  \ and\ \bibinfo {author} {\bibfnamefont {V.}~\bibnamefont {Rokhlin}},\
  }\bibfield  {title} {\enquote {\bibinfo {title} {{A Fast Adaptive Multipole
  Algorithm for Particle Simulations}},}\ }\href {\doibase 10.1137/0909044}
  {\bibfield  {journal} {\bibinfo  {journal} {SIAM Journal on Scientific and
  Statistical Computing}\ }\textbf {\bibinfo {volume} {9}},\ \bibinfo {pages}
  {669--686} (\bibinfo {year} {1988})}\BibitemShut {NoStop}%
\bibitem [{\citenamefont {Bravyi}\ and\ \citenamefont
  {Maslov}(2021)}]{Bravyi2021-rn}%
  \BibitemOpen
  \bibfield  {author} {\bibinfo {author} {\bibfnamefont {Sergey}\ \bibnamefont
  {Bravyi}}\ and\ \bibinfo {author} {\bibfnamefont {Dmitri}\ \bibnamefont
  {Maslov}},\ }\bibfield  {title} {\enquote {\bibinfo {title} {Hadamard-free
  circuits expose the structure of the clifford group},}\ }\href {\doibase
  10.1109/tit.2021.3081415} {\bibfield  {journal} {\bibinfo  {journal} {IEEE
  Trans. Inf. Theory}\ }\textbf {\bibinfo {volume} {67}},\ \bibinfo {pages}
  {4546--4563} (\bibinfo {year} {2021})}\BibitemShut {NoStop}%
\bibitem [{\citenamefont {Huggins}\ \emph {et~al.}(2022)\citenamefont
  {Huggins}, \citenamefont {O'Gorman}, \citenamefont {Rubin}, \citenamefont
  {Reichman}, \citenamefont {Babbush},\ and\ \citenamefont
  {Lee}}]{Huggins2022-nc}%
  \BibitemOpen
  \bibfield  {author} {\bibinfo {author} {\bibfnamefont {William~J}\
  \bibnamefont {Huggins}}, \bibinfo {author} {\bibfnamefont {Bryan~A}\
  \bibnamefont {O'Gorman}}, \bibinfo {author} {\bibfnamefont {Nicholas~C}\
  \bibnamefont {Rubin}}, \bibinfo {author} {\bibfnamefont {David~R}\
  \bibnamefont {Reichman}}, \bibinfo {author} {\bibfnamefont {Ryan}\
  \bibnamefont {Babbush}}, \ and\ \bibinfo {author} {\bibfnamefont {Joonho}\
  \bibnamefont {Lee}},\ }\bibfield  {title} {\enquote {\bibinfo {title}
  {Unbiasing fermionic quantum monte carlo with a quantum computer},}\ }\href
  {\doibase 10.1038/s41586-021-04351-z} {\bibfield  {journal} {\bibinfo
  {journal} {Nature}\ }\textbf {\bibinfo {volume} {603}},\ \bibinfo {pages}
  {416--420} (\bibinfo {year} {2022})},\ \Eprint
  {http://arxiv.org/abs/2106.16235} {arXiv:2106.16235 [quant-ph]} \BibitemShut
  {NoStop}%
\bibitem [{\citenamefont {Lerasle}(2019)}]{Lerasle2019-zm}%
  \BibitemOpen
  \bibfield  {author} {\bibinfo {author} {\bibfnamefont {Matthieu}\
  \bibnamefont {Lerasle}},\ }\bibfield  {title} {\enquote {\bibinfo {title}
  {Lecture notes: Selected topics on robust statistical learning theory},}\
  }\href {http://arxiv.org/abs/1908.10761} {\bibfield  {journal} {\bibinfo
  {journal} {arXiv:1908.10761}\ } (\bibinfo {year} {2019})}\BibitemShut
  {NoStop}%
\bibitem [{\citenamefont {Garc{\'\i}a}\ \emph {et~al.}(2017)\citenamefont
  {Garc{\'\i}a}, \citenamefont {Markov},\ and\ \citenamefont
  {Cross}}]{Garcia2017-rx}%
  \BibitemOpen
  \bibfield  {author} {\bibinfo {author} {\bibfnamefont {H{\'e}ctor~J}\
  \bibnamefont {Garc{\'\i}a}}, \bibinfo {author} {\bibfnamefont {Igor~L}\
  \bibnamefont {Markov}}, \ and\ \bibinfo {author} {\bibfnamefont {Andrew~W}\
  \bibnamefont {Cross}},\ }\bibfield  {title} {\enquote {\bibinfo {title} {On
  the geometry of stabilizer states},}\ }\href
  {http://arxiv.org/abs/1711.07848} {\bibfield  {journal} {\bibinfo  {journal}
  {arXiv:1711.07848}\ } (\bibinfo {year} {2017})}\BibitemShut {NoStop}%
\bibitem [{\citenamefont {Aaronson}\ and\ \citenamefont
  {Gottesman}(2004)}]{Aaronson2004-yb}%
  \BibitemOpen
  \bibfield  {author} {\bibinfo {author} {\bibfnamefont {Scott}\ \bibnamefont
  {Aaronson}}\ and\ \bibinfo {author} {\bibfnamefont {Daniel}\ \bibnamefont
  {Gottesman}},\ }\bibfield  {title} {\enquote {\bibinfo {title} {Improved
  simulation of stabilizer circuits},}\ }\href {\doibase
  10.1103/physreva.70.052328} {\bibfield  {journal} {\bibinfo  {journal} {Phys.
  Rev. A}\ }\textbf {\bibinfo {volume} {70}} (\bibinfo {year} {2004}),\
  10.1103/physreva.70.052328}\BibitemShut {NoStop}%
\bibitem [{\citenamefont {Gottesman}(1998)}]{Gottesman1998-rv}%
  \BibitemOpen
  \bibfield  {author} {\bibinfo {author} {\bibfnamefont {D}~\bibnamefont
  {Gottesman}},\ }\href
  {https://www.osti.gov/biblio/319738-heisenberg-representation-quantum-computers}
  {\emph {\bibinfo {title} {The Heisenberg representation of quantum
  computers}}},\ \bibinfo {type} {Tech. Rep.}\ \bibinfo {number}
  {LA-UR-98-2848; CONF-980788-}\ (\bibinfo  {institution} {Los Alamos National
  Lab., NM (United States)},\ \bibinfo {year} {1998})\BibitemShut {NoStop}%
\bibitem [{\citenamefont {Gross}\ \emph {et~al.}(2015)\citenamefont {Gross},
  \citenamefont {Krahmer},\ and\ \citenamefont {Kueng}}]{Gross2015-ck}%
  \BibitemOpen
  \bibfield  {author} {\bibinfo {author} {\bibfnamefont {D}~\bibnamefont
  {Gross}}, \bibinfo {author} {\bibfnamefont {F}~\bibnamefont {Krahmer}}, \
  and\ \bibinfo {author} {\bibfnamefont {R}~\bibnamefont {Kueng}},\ }\bibfield
  {title} {\enquote {\bibinfo {title} {A partial derandomization of {PhaseLift}
  using spherical designs},}\ }\href {\doibase 10.1007/s00041-014-9361-2}
  {\bibfield  {journal} {\bibinfo  {journal} {J. Fourier Anal. Appl.}\ }\textbf
  {\bibinfo {volume} {21}},\ \bibinfo {pages} {229--266} (\bibinfo {year}
  {2015})}\BibitemShut {NoStop}%
\end{thebibliography}%

\onecolumngrid

\appendix

\section{Norms and scaling for the nonlinear differential equation governing mean-field evolution}
\label{app:scanorm}

We have the differential equation
\begin{equation}
i \frac{\partial \mathbf C_\text{occ}\left(t\right)}{\partial t}
=
\mathbf F\left(t\right)\mathbf C_\text{occ}\left(t\right)
\end{equation}
where 
\begin{equation}
F_{\mu\nu}\!\left(t\right) = h_{\mu \nu} + 
\sum_{\lambda\sigma}^{N} \left(\left(\mu\nu|\lambda\sigma\right)-\frac{\left(\mu\sigma|\lambda\nu\right)}{2}\right) P_{\sigma\lambda}\!\left(t\right)
\end{equation}
with $\mathbf P(t) = \mathbf C_\text{occ} (t)\mathbf C_\text{occ}(t)^\dagger$.
If $\mathbf F$ were independent of $\mathbf C_\text{occ}$, then it would imply that taking the $n$th derivative gives
\begin{equation}
(i)^n \frac{\partial^n \mathbf C_\text{occ}\left(t\right)}{\partial t^n}
=
\mathbf F^n \mathbf C_\text{occ}\left(t\right).
\end{equation}
That means the norm of the $n$th derivative would scale as $\|\mathbf F\|^n$ (with $\mathbf C_\text{occ}$ normalized).

Then higher-order methods will typically have an error that scales as the norm of the higher-order derivatives.
For example, if one were to use a Taylor series up to order $k$ to approximate a time step, then the error for a time step of length $\delta t$ would scale as
\begin{equation}
\frac 1{(k+1)!} \|\mathbf F\|^{k+1}  \delta t^{k+1}.
\end{equation}
This means that if the size of the time step is taken as proportional to $1/\|\mathbf F\|$, then the error may be made exponentially small in $k$.
As a result, the total number of time steps used scales as ${\cal O}(\|\mathbf F\| t)$.
Similar considerations hold for other higher-order methods for integration.
The dependence of the complexity on $\|\mathbf F\|$ can also be expected from principles of scaling, where if $\mathbf F$ is divided by $\|\mathbf F\|$ but $t$ is also multiplied by $\|\mathbf F\|$, then the same differential equation is obtained.

In our case where $\mathbf F$ is dependent on $\mathbf C_\text{occ}$, the situation is more complicated.
This is because taking higher-order derivatives of $\mathbf C_\text{occ}$ yields more terms due to the derivatives of $\mathbf C_\text{occ}$ in $\mathbf F$.
To describe this, let us write, omitting $h_{\mu\nu}$ for simplicity,
\begin{equation}
    F_{\mu\nu}\!\left(t\right) = V_{\mu\nu\sigma\lambda} C_{\sigma a}C^*_{\lambda a},
\end{equation}
with $C_{\sigma a}$ the matrix entries of $\mathbf C_\text{occ}$.
We are taking a convention that Greek indices are over all orbitals, English letters are over electrons, and repeated indices are summed over.
Then we would give the derivative as
\begin{equation}
    i\frac{\partial C_{\mu b}}{\partial t} =  V_{\mu\nu\sigma\lambda} C_{\sigma a}C^*_{\lambda a} C_{\nu b}.
\end{equation}
We can define an $\eta$-norm of $\V$ as
\begin{equation}
    \| \mathbf V \|_\eta = \max_{\mathbf{x},\mathbf{y},\mathbf{z}} V_{\mu\nu\sigma\lambda} x_\mu y^*_\nu z_{\sigma\lambda},
\label{eq:etanorm}
\end{equation}
with $\|\mathbf{x}\|=\|\mathbf{y}\|=\|\mathbf{z}\|=1$ (i.e., spectral norms are normalized), and $\mathbf{z}$ of rank $\eta$.
To bound this norm, we can consider the first term for $V_{\mu\nu\sigma\lambda}$, which is
\begin{equation}
\left(\mu\nu|\lambda\sigma\right) =  \int \textrm{d}\mathbf r_1 \, \textrm{d}\mathbf r_2 \frac{\phi_{\mu}^* \left(\mathbf r_1\right) \phi_{\nu}\left(\mathbf r_1\right)  \phi_\lambda^* \left(\mathbf r_2\right) \phi_\sigma \left(\mathbf r_2\right) }{\left| \mathbf r_1 - \mathbf r_2\right|} \, .
\end{equation}
The multiplication by $x_\mu$ and sum over $\mu$ corresponds to a transformation of $\phi_{\mu}$ to a new orbital, and similarly, the sum over $\nu$ transforms $\phi_{\nu}$ to another new orbital.
Since $\mathbf{z}$ is of rank $\eta$, the sum over $\lambda$ and $\sigma$ corresponds to transforming the orbital basis for both $\phi_\lambda$ and $\phi_\sigma$, and summing over $\eta$ of these basis states.

That is, we can write
\begin{equation}
\sum_{a=1}^\eta \int \textrm{d}\mathbf r_1 \, \textrm{d}\mathbf r_2 \frac{\phi^* \left(\mathbf r_1\right) \chi\left(\mathbf r_1\right)  \psi_a^* \left(\mathbf r_2\right) \theta_a \left(\mathbf r_2\right) }{\left| \mathbf r_1 - \mathbf r_2\right|} \, ,
\end{equation}
for some transformed orbitals $\phi,\chi,\psi_a,\theta_a$.
We can then use the fact that $|\phi^* \chi| \le |\phi|^2 + |\chi|^2$, and similarly for $\psi_a$ and $\theta_a$ to upper bound this expression by
\begin{equation}
4\sum_{a=1}^\eta \int \textrm{d}\mathbf r_1 \, \textrm{d}\mathbf r_2 \frac{|\phi \left(\mathbf r_1\right)|^2 |\psi_a \left(\mathbf r_2\right)|^2 }{\left| \mathbf r_1 - \mathbf r_2\right|} \, ,
\end{equation}
for some choice of $\phi$ and $\psi_a$.
This integral can be maximized when the $\psi_a$ are orbitals that are clustered as close as possible to $\phi$.
With neighboring grid points separated by $\delta$, the smallest the average separation can be is ${\cal O}(\eta^{1/3}\delta)$. Then the factor of $1/|\mathbf r_1-\mathbf r_2|$ in the integrals will give ${\cal O}(1/[\eta^{1/3}\delta])$.
Multiplying the sum by $\eta$ gives ${\cal O}(\eta^{2/3}/\delta)$.

The second term for $V_{\mu\nu\sigma\lambda}$ is $\left(\mu\sigma|\lambda\nu\right)$.
This is similar to $\left(\mu\nu|\lambda\sigma\right)$, but with $\nu$ and $\sigma$ swapped.
Then the transformation of orbitals gives
\begin{equation}
\sum_{a=1}^\eta \int \textrm{d}\mathbf r_1 \, \textrm{d}\mathbf r_2 \frac{\phi^* \left(\mathbf r_1\right) \chi_a\left(\mathbf r_1\right)  \psi_a^* \left(\mathbf r_2\right) \theta \left(\mathbf r_2\right) }{\left| \mathbf r_1 - \mathbf r_2\right|} \, ,
\end{equation}
for some choice of $\phi,\chi_a,\psi_a,\theta$.
The same argument holds, where the sum is maximized with orbitals over a region of volume $\eta\delta^3$ so there are contributions from all $\eta$ terms in the sum, but $1/|\mathbf r_1-\mathbf r_2|$ averages to give ${\cal O}(1/[\eta^{1/3}\delta])$.
This gives the same scaling for the second term for $V_{\mu\nu\sigma\lambda}$, and so
\begin{equation}
    \| \mathbf V \|_\eta = {\cal O}(\eta^{2/3}/\delta).
\end{equation}
What this means is that, whenever we have a contraction of the $\sigma,\lambda$ indices in $V_{\mu\nu\sigma\lambda}$ with a normalized matrix of rank $\eta$, the remaining matrix has norm ${\cal O}(\eta^{2/3}/\delta)$.
That immediately implies that $\|\F\|$ has this norm.
Then applying $\F$ to the normalized matrix $\mathbf C_\text{occ}$ gives an upper bound on the first derivative ${\mathcal O}(\eta^{2/3}/\delta)$.

Taking the second derivative then yields an expression with 3 terms, where each has $\mathbf V$ appearing twice and $\mathbf C_\text{occ}$ appearing five times.
In particular,
\begin{align}
    -\frac{\partial^2 C_{\mu b}}{\partial t^2} &=  V_{\mu\nu\sigma\lambda} 
    [(V_{\sigma\epsilon\zeta\eta} C_{\zeta c}C^*_{\eta c}) C_{\epsilon a}
    C^*_{\lambda a}] C_{\nu b} \nn
    & \quad + V_{\mu\nu\sigma\lambda} [C_{\sigma a}
    (V_{\lambda\epsilon\zeta\eta} C^*_{\zeta c}C_{\eta c}) C^*_{\epsilon a}]
    C_{\nu b} \nn
    & \quad + (V_{\mu\nu\sigma\lambda} C_{\sigma a}C^*_{\lambda a})
    (V_{\nu\epsilon\zeta\eta} C_{\zeta c}C^*_{\eta c}) C_{\epsilon b}
\end{align}
Only the third line has a simple interpretation as $\mathbf F$ squared times $\mathbf C_\text{occ}$ (indicated by the brackets).

The first line has $\mathbf V$ contracted with $\mathbf C_\text{occ}$ using $\zeta,\eta$, so the expression in round brackets is a matrix with norm ${\mathcal O}(\eta^{2/3}/\delta)$.
Then in matrix terms, it is multiplied by $C_{\epsilon a}C^*_{\lambda a}$ (summed over $a$), which is a matrix of norm 1 and rank $\eta$.
As a result, the expression in square brackets is of norm ${\mathcal O}(\eta^{2/3}/\delta)$ and rank $\eta$.
We can then see that the first $\mathbf V$ is contracted over $\sigma,\lambda$ with a matrix of rank $\eta$ and norm ${\mathcal O}(\eta^{2/3}/\delta)$.
That implies that the norm of the resulting matrix is upper bounded by the square of ${\mathcal O}(\eta^{2/3}/\delta)$.
That is then multiplied by $C_{\nu b}$ which is of norm 1, resulting in the overall norm of this line being upper bounded by the square of ${\mathcal O}(\eta^{2/3}/\delta)$.
Similar considerations hold for the second line, so we can upper bound the entire second derivative by an order scaling that is the square of that for $\|\F\|$.

In this, the general principle is that wherever we have something of the form $C_{\sigma a}C^*_{\lambda a}$, it is a matrix of norm 1 and rank $\eta$, and taking the derivative of it yields something that is still of rank $\eta$, but with a norm upper bounded by ${\mathcal O}(\eta^{2/3}/\delta)$.
Because we have bounded the norm when contracting $\mathbf V$ with a general matrix of rank $\eta$, that yields a factor of ${\mathcal O}(\eta^{2/3}/\delta)$ on whatever result we had for the lower-order derivative.
The other scenario is where we take the derivative of $C_{\nu b}$, which is effectively like multiplying it by $\F$ which increases the norm (but not the rank).

This reasoning holds in general whenever we take the derivative of an expression for the derivative of some order to give the derivative of higher order.
The norm is multiplied by ${\mathcal O}(\eta^{2/3}/\delta)$ for each of the terms.
The number of terms will increase exponentially with the order.
The third derivative has $3\times 5$ terms, where each of the three original terms yields five due to the derivatives of $\mathbf C_\text{occ}$ at each location.
Then the fourth-order derivative has $3\times 5 \times 7$ terms and so on. In describing the scaling we can ignore this exponential number of terms, and give the upper bound on the $n$th order derivative as $\mathcal{O}(\eta^{2/3}/\delta)$ to the power of $n$.
This implies that the appropriate scaling of the time should again be $T=\|\F\|t$.

Finally we bound the norm of $\|\mathbf h\|$.
When using a plane wave basis, $h_{\mu \nu}$ will be non-zero only when $\mu = \nu$ with entries scaling as ${\cal O}(1/\delta^2)$ due to the $\nabla^2$ in the expression for $h_{\mu\nu}$. That gives the scaling of the spectral norm for this component, which would be unchanged under a unitary transformation, such as the Fourier transform which maps plane waves to an approximately local basis.

For the dependence of $h_{\mu \nu}$ on $V(r)$, the potential will come from nuclei, and for charge-neutral systems the total nuclear charge will be the same as the number of electrons.
If the nuclear charge were entirely at one location and we have a charge-neutral system, then the largest contribution to $h_{\mu \nu}$ would be for an approximately local basis, where the contribution would scale as $\eta/\delta$, with the factor of $\eta$ from the nuclear charge and $1/\delta$ from the inverse distance.

In most cases that we would be interested in, there would be a more even distribution of nuclear charges through the volume.
In that case, if the volume scales as $\eta$, there would be an average distance ${\cal O}(\eta^{1/3})$.
That would result in a contribution to $h_{\mu \nu}$ of ${\cal O}(\eta^{2/3})$.
An orbital localized near one nucleus would give a contribution of ${\cal O}(1/\delta)$ just from that nucleus, which may be larger than ${\cal O}(\eta^{2/3})$ if $N>\eta^3$ but may be ignored in comparison to $1/\delta^2$.

As a result of these considerations, we can give the upper bound on $\mathbf{F}$ in the case without $V(r)$ as
\begin{equation}
 \left\| \mathbf{F}\right \| = {\cal O}\!\left(\frac{\eta^{2/3}}{\delta} +  \frac{1}{\delta^2} \right).
\end{equation}
In the case with nuclei we obtain the same result,
provided the nuclear charges are not clustered any closer than the grid spacing. Here $\delta = {\cal O}((\eta / N)^{1/3})$ is the minimum grid spacing.
This scaling for $\delta$ comes from taking the computational cell volume proportional to $\eta$ (a reasonable assumption for both condensed-phase and molecular systems).
Thus, the scaling becomes
\begin{equation}
\|\mathbf{F}\| = \mathcal{O} \left( N^{1/3} \eta^{1/3} + \frac{N^{2/3}}{\eta^{2/3}} \right).
\label{eq:fock_norm}
\end{equation}
In this case we can see that the first term is dominant unless $N>\eta^3$.

\section{Proving sublinear gate complexity in basis size for Trotter based methods}
\label{app:trotter_step}

Here we derive the complexity for quantum simulation of the electronic structure problem given in \eq{Trotcomp}.
We consider the simulation of the electronic structure problem defined on a spatial grid in first quantization. Such a Hamiltonian can be expressed as
\begin{align}
\label{eq:real_space}
H & = T + U + V + \sum_{\ell \neq \kappa=1}^L\frac{\zeta_\ell \zeta_\kappa}{2\left\|R_\ell - R_\kappa\right\|} \\
T & \approx  \sum_{i=1}^{\eta} {\rm QFT}_j \left( \sum_{p\in G} \frac{\left \| k_p\right\|^2}{2} \ket{p}\!\!\bra{p}_{j} \right) {\rm QFT}_j^\dagger\\
U & = -\sum_{j=1}^\eta\sum_{\ell =1}^{L}  \sum_{p\in G}\frac{\zeta_\ell}{\left\|R_\ell - r_p\right\|} \ket{p}\!\!\bra{p}_{j}\\
V & = \sum_{j\neq k=1}^\eta \sum_{p,q\in G}\frac{1}{2\left\|r_p - r_q\right\|} \ket{p}\!\!\bra{p}_{j} \ket{q}\!\!\bra{q}_{k}
\end{align}
where ${\rm QFT}_j$ is the usual quantum Fourier transform applied to register $j$.
We emphasize that $T$ is only approximately given by the expression involving the QFT. This relation is exact in the continuum limit where $N\rightarrow\infty$. For finite-sized grids $N$, it cannot be the case that the QFT completely diagonalizes the momentum operator. Instead, writing $T$ this way represents something similar to the approximations made by so-called ``discrete value representation'' methods.
Using the QFT means that the evolution can be broken into a product of the evolution under $T$ and the one under $U+V$.

In the above expression, $\ell$ and $\kappa$ index nuclear degrees of freedom; thus, $R_\ell$ represents the positions of nuclei and $\zeta_\ell$ the atomic numbers of nuclei.
In this appendix, we use $L$ to denote the number of nuclei in our simulation (elsewhere, $L$ is the number of time points).
Furthermore, we have the following definition of grid points and their frequencies in the dual space defined by the QFT:
\begin{equation}
r_p = \frac{p \, \Omega^{1/3}}{N^{1/3}} \qquad \qquad k_p = \frac{2 \pi p}{\Omega^{1/3}} \qquad \qquad
p \in G \qquad \qquad G = \left[-\frac{N^{1/3}-1}{2},\frac{N^{1/3}-1}{2}\right]^3 \subset \mathbb{Z}^3 \, ,
\end{equation}
where $\Omega$ is the volume of the simulation cell and $N$ is the number of grid points in the cell. Although it is defined here in more precise terms, this is essentially the same representation used in the first work on quantum simulating chemistry in first quantization, by Kassal \emph{et al.}~\cite{Kassal2008}, well over a decade ago.

We consider simulation performed using high-order product formulas with a split-operator Trotter step.
What we mean by the latter is that we will alternate evolution under $T$ (using the QFT) and evolution under $U + V$.
In fact, the implementation of each Trotter step that we will pursue is essentially identical to the Trotter steps proposed by Kassal \emph{et al.}~\cite{Kassal2008}. The Trotter step requires $\widetilde{\cal O}(\eta^2)$ gates, with the complexity being dominated by computing the ${\cal O}(\eta^2)$ different interactions in the two-electron term.
Recently, Low \emph{et al.}~\cite{Low2022b} have shown that the number of Trotter steps required in second quantization using arbitrarily high order formulas can be as low as
\begin{equation}\label{eq:lowcomplex}
\left(N^{1/3} \eta^{1/3} + \frac{N^{2/3}}{\eta^{2/3}}\right)\frac{t^{1 + o(1)} N^{o(1)}}{\epsilon^{o(1)}} \, .
\end{equation}
We note that, curiously, this also closely matches our bound for the norm of the Fock operator (see \eq{fock_norm}) proved in \app{scanorm}.
The first term in brackets similarly corresponds to a contribution to the potential from electrons grouped as closely as possible in real space, but the reason why this quantity is relevant is very different between the two calculations.

The results for the Trotter error in second quantization also hold for first quantization.
As a general principle, we can consider the effect of $\sum_j \ket{p}\bra{q}_j$ on a computational basis state consisting of an anti-symmetric combination of lists of electron positions.
This removes an electron from orbital $q$ and places it in $p$.
This is performed for every part of the anti-symmetric state, preserving its sign.
However, for the starting anti-symmetric state the sign is based on whether the permutation is even or odd (as compared to ascending order).
If moving an electron from $q$ to $p$ passes over an \emph{odd} number of electrons, then the parity of each permutation flips.
That means that there is an overall sign flip in the basis state.

Similarly, if we consider the action of $a_p^\dagger a_q$ on a state $a_{q_1}^\dagger \cdots a_{q_\eta}^\dagger\ket{0}$, then the $a_q$ can be anti-commuted to the right to give several sign flips corresponding to the number of $a_{q_j}^\dagger$ operators that are anti-commuted through.
This corresponds to the number of occupied orbitals before $q$.
Then $a_q a_q^\dagger$ gives the identity.
Next, anti-commute $a_p^\dagger$ to the appropriate location in the list of operators.
The sign that is obtained corresponds to the number of $a_{q_j}^\dagger$ operators that are anti-commuted through, which is the number of electrons before $p$.
There is an overall sign flip if there is an odd number of electrons between $p$ and $q$.

This can then be extended to products such as
\begin{equation}
    \sum_j \ket{p}\bra{q}_j  \sum_k \ket{r}\bra{s}_k.
\end{equation}
The first sum corresponds to $a_p^\dagger a_q$ in second quantization, and the second sum corresponds to $a_r^\dagger a_s$.
This means that we have the equivalence
\begin{equation}
    \sum_{pqrs} V_{pqrs} \sum_j \ket{p}\bra{q}_j  \sum_k \ket{r}\bra{s}_k \equiv \sum_{pqrs} V_{pqrs} a_p^\dagger a_q a_r^\dagger a_s.
\end{equation}
The action on an anti-symmetric computational basis state in first quantization has exactly the same effects as that on the corresponding second-quantization state with $\eta$ electrons.
Moreover, the action of the operators always preserves the electron number in second quantization, so there is a corresponding state in first quantization.
Similarly, because we are using anti-symmetric states in first quantization, it is impossible to obtain a state with multiple electrons on the same orbital.
That is because two registers with the same orbital number will give cancellation of terms.

As a result all operators and states in second-quantization map directly to first quantization, preserving the norms, and in particular the error bounds derived in second-quantization hold for first quantization.
Therefore, multiplying the number of steps in \eq{lowcomplex} by the $\widetilde{\cal O}(\eta^2)$ gate complexity required of the first quantized Trotter step from \cite{Kassal2008} gives the following gate complexity for the product formula based time evolution in first quantization:
\begin{equation}
    \left(N^{1/3} \eta^{7/3} + N^{2/3}\eta^{4/3}\right)\frac{t^{1 + o(1)} N^{o(1)}}{\epsilon^{o(1)}} \, .
\end{equation}
This is the complexity given in \eq{Trotcomp}.

\section{Constant factors for time-evolution in the interaction-picture plane-wave algorithm}

\label{app:constant_factors}

Here we analyze the constant factors in the scaling of the interaction picture based plane wave algorithm from Babbush \emph{at al.}~\cite{BabbushContinuum} which was analyzed in detail for use in phase estimation by Su \emph{et al.}~\cite{Su2021}.
As explained on page 30 of \cite{Su2021}, the number of steps to give total time $T$ using the time evolution approach is $\lambda_B T/\ln 2$, but with a factor of 3 overhead for amplitude amplification.  Using the qubitization approach the number of steps is $e \lambda_B T$.  That means simulating the time evolution gives an overhead of $3/(e\ln 2)\approx 1.59$ over the qubitization.
Then in Eq.~(154) of \cite{Su2021}, the total time of evolution is approximately $\pi/(2\epsilon_{\rm pha})$ to give precision $\epsilon_{\rm pha}$ of the phase estimation.
There is moreover a (small) term $\mathcal{O}((\lambda_U+\lambda_V)^2\Delta E^2)$ in the expression for the number of steps $\mathcal{N}$ in \cite{Su2021} that originates from the nonlinearity of the sine function in phase estimation, which is not used here.

As a result, the complexity given in Theorem 5 of \cite{Su2021} can be modified to be appropriate for time evolution simply by replacing the formula for the number of steps in Eq.~(174) of \cite{Su2021} with
\begin{equation}
\mathcal{N} = \frac{ 3T (\lambda^1_U+\lambda^1_V/(1-1/\eta))}{P_{\rm eq} \ln 2} + \mathcal{O}(1) \, .
\end{equation}
Here we have replaced $\pi/(2\epsilon_{\rm pha})$ with $T$, replaced $e$ with $e/\ln 2$, and removed $\mathcal{O}((\lambda_U+\lambda_V)^2\Delta E^2)$.
Note that in this expression
\begin{align}
    \lambda_U &= \frac{\eta\sum_\ell \zeta_\ell}{\pi\Omega^{1/3}}\lambda_\nu, \\
    \lambda_V &= \frac{\eta(\eta-1)}{2\pi\Omega^{1/3}}\lambda_\nu, \\
    \lambda_\nu &= \sum_{\nu\in G_0} \frac 1{\|\nu\|^2} \le 4\pi N^{1/3},
\end{align}
$\lambda_U^1\approx \lambda_U$, $\lambda_V^1\approx \lambda_V$, and $P_{\rm eq}$ is close to 1.
This expression together with an appropriate choice of constant factor in $\Omega \propto \eta$ gives
the constant factor for the number of steps to use for time evolution.
It needs to be multiplied by a further complicated expression in Theorem 5 of \cite{Su2021} for the gate complexity of a single step to provide the full
constant factor for the gate complexity in \eq{interaction_scaling}.

\section{Smoothing the Coulomb operator to exponentially suppresses quantum scaling in basis size}
\label{app:softened_potential}

Here we discuss the fact that if one is willing to introduce a slight systematic bias into the Coulomb operator, it is possible to further improve the speedup in $N$ of the quantum algorithm. The ${N^{1/3}}$ dependence enters into the cost from the 1-norm of the two-body Coulomb operator, which scales as $\lambda = {\cal O}(\eta^2 V_{\rm max})$ where $V_{\rm max}$ is the maximum value of the electron-electron interaction for a single pair of electrons.

For typical plane wave or grid discretizations we have that $V_{\rm max} = {\cal O}(N^{1/3} / \Omega^{1/3}$) where $\Omega$ is the size of the computational cell (for the purpose of the analysis in this paper we assume that $\Omega = {\cal O}(\eta)$, since that is explicitly the case in condensed phase simulations). But we could also take steps to smooth out the cusp in the Coulomb operator and thus, lower the energy scale of $V_{\rm max}$. For example, this could be accomplished by taking $V_{\rm max}$ to be a constant and modifying the real-space form of the two-body Coulomb operator as
\begin{equation}
\frac{1}{\left | r_1 - r_2 \right |} \rightarrow \frac{1}{\left | r_1 - r_2\right | + V_{\rm max}}\, .
\end{equation}
Such a strategy has been explored in the context of first quantized quantum algorithms in real space in papers by Kivlichan \emph{et al.}~\cite{Kivlichan2016} and Childs \emph{et al.}~\cite{Childs2022}.

In principle, one could choose $V_{\rm max} = {\cal O}(\log N)$ and this would lead to the quantum algorithm scaling exponentially better than classical algorithms in $N$. This would also slightly reduce the cost of classical mean-field algorithms from scaling as $N^{4/3}$ to scaling as $N$. Of course, using such a drastic cutoff will introduce a significant bias into the overall dynamics. In order to avoid this, papers such as \cite{smoothCoulomb2006,smoothCoulomb2016} have sought to develop Richardson extrapolation type schemes where simulations are run with a series of smoothing or cutoff parameters in order to extrapolate the value of the observable with zero cutoff. However, questions remain about the convergence of such procedures and it seems likely to re-introduce some polynomial dependence on $N$ in order to reach convergence with the continuum limit.

Nevertheless, the context of this paper is that one might be interested in getting a speedup over low accuracy classical algorithms. In that spirit, one could probably make the case that if merely trying to improve in speed over mean-field algorithms, the error introduced in imposing a cutoff in the Coulomb operator might be less significant than the error due to making the mean-field approximation. Thus, this is perhaps a valid approach when competing with such classical methods, and thus might provide an exponential speedup.

\section{Gate complexity and speedup in various regimes}
\label{app:regimes}

\begin{table*}[h]
\begin{tabular}{c|c|c|c|c}
Processor
& Algorithm for sampling $\ket{\psi(t)}$
& Regime of optimality
& Space
& Effective gate complexity\\
\hline\hline
classical 
& zero temp mean-field with occ-RI-K/ACE \cite{Manzer2015Jul,Lin2016May}
& $N \leq \Theta(\eta^3)$
& $\widetilde{\cal O}(N \eta)$
& $N^{4/3} \eta^{7/3} t (N t / \epsilon)^{o(1)}$\\
classical 
& zero temp mean-field with occ-RI-K/ACE \cite{Manzer2015Jul,Lin2016May}
& $N \geq \Theta(\eta^3)$
& $\widetilde{\cal O}(N \eta)$
& $N^{5/3} \eta^{4/3} t (N t / \epsilon)^{o(1)}$\\
quantum
& second quantized Trotter grid algorithm \cite{Low2022b}
& $N \leq \Theta(\eta^2)$
& ${\cal O}(N \log N)$
&  $N^{4/3} \eta^{1/3} t (N t / \epsilon)^{o(1)}$ \\
quantum
& first quantized Trotter grid algorithm here
& $ \Theta(\eta^2) \leq N \leq \Theta(\eta^3)$
& ${\cal O}(\eta \log N)$
&  $N^{1/3}\eta^{7/3}t (N t / \epsilon)^{o(1)}$ \\
quantum
& first quantized Trotter grid algorithm here
& $ \Theta(\eta^3) \leq N < \Theta(\eta^4)$
& ${\cal O}(\eta \log N)$
&  $N^{2/3}\eta^{4/3}t (N t / \epsilon)^{o(1)}$ \\
quantum
& qubitization algorithms from \cite{BabbushContinuum} or \cite{Su2021} 
& $ N = \Theta(\eta^4)$
& ${\cal O}(\eta \log N)$
&  $\widetilde{\cal O}(N^{2/3}\eta^{4/3}t)$ \\
quantum
& interaction picture algorithms from \cite{BabbushContinuum} or \cite{Su2021}  
& $N > \Theta(\eta^4)$
& ${\cal O}(\eta \log N)$
&  $\widetilde{\cal O}(N^{1/3}\eta^{8/3} t)$ \\
\hline
\end{tabular}
\caption{\label{tab:regimes} Best known gate complexities of exact quantum algorithms and classical mean-field algorithms for sampling the output of time-evolution, by ratio of basis size to particle number. Here we use the asymptotic $\Theta(\cdot)$ notation, which implies the union of both an asymptotic upper-bound and an asymptotic lower-bound on the scaling. $N$ is number of basis functions, $\eta$ is number of particles, $\epsilon$ is target precision, and $t$ is duration of evolution. ``Effective gate complexity'' is the leading order scaling in the stated regime. All quantum algorithms discussed here require either a plane wave or grid basis. For those basis sets, the large space overhead of second quantization likely makes second quantized approaches infeasible in practice. When $N = \eta^4$, the quantum algorithms with the best asymptotic scaling are the plane wave or grid basis qubitization algorithms from \cite{BabbushContinuum} or \cite{Su2021}, respectively, as opposed to the interaction picture algorithms of those same works. This is due to lower polylogarithmic factors in the scaling that are suppressed by the $\widetilde{\cal O}(\cdot)$ notation.}
\end{table*}

Another way to express the results of \tab{regimes} is as a formula for the leading order scaling if assume that $N = \Theta(\eta^{\alpha})$. Then, for the classical algorithm we have that the leading gate complexity of the best approach is
\begin{equation}
\left(\eta^\beta t\right)\left(\frac{N t}{\epsilon}\right)^{o(1)} \qquad \textrm{where} \qquad N = \Theta\left(\eta^{\alpha}\right) \qquad \textrm{and} \qquad \beta = \begin{cases}
\frac{4 \alpha + 7}{3} & \alpha \leq 3\\
\frac{5 \alpha + 4}{3} & \alpha \geq 3\\
\end{cases} \, .
\end{equation}
By contrast, for the quantum algorithm we have that the leading order gate complexity of the best approach is
\begin{equation}
\left(\eta^\beta t\right)\left(\frac{N t}{\epsilon}\right)^{o(1)} \qquad \textrm{where} \qquad N = \Theta\left(\eta^{\alpha}\right) \qquad \textrm{and} \qquad \beta = \begin{cases}
\frac{4 \alpha + 1}{3} & \alpha \leq 2\\
\frac{\alpha + 7}{3} & 2 \leq \alpha \leq 3\\
\frac{2\alpha + 4}{3} & 3 \leq \alpha \leq 4\\
\frac{\alpha + 8}{3} & \alpha \geq 4\\
\end{cases} \, .
\end{equation}
For both classical and quantum expressions, these complexities are sometimes loose by sub-polynomial factors. Finally, we compare the speedup that exact quantum algorithms offer over classical mean-field algorithms. We report this as
\begin{equation}
\label{eq:speedup}
\frac{\textrm{exponent of $\eta$ scaling of classical complexity}}{\textrm{exponent of $\eta$ scaling of quantum complexity}} = \begin{cases}
\left(4\alpha + 7\right)/\left(4 \alpha + 1\right) & \alpha \leq 2\\
\left(4\alpha + 7\right)/\left(\alpha + 7\right) & 2 \leq \alpha \leq 3\\
\left(5\alpha + 4\right)/\left(2\alpha + 4\right) & 3 \leq \alpha \leq 4\\
\left(5\alpha + 4\right)/\left(\alpha + 8\right) & \alpha \geq 4\\
\end{cases}  
\qquad \textrm{if} \qquad N = \Theta\left(\eta^{\alpha}\right) \, .
\end{equation}
We plot numerical values of this speedup in \fig{speedup}.
\begin{figure}
  \centering
  \includegraphics[width=0.5\textwidth]{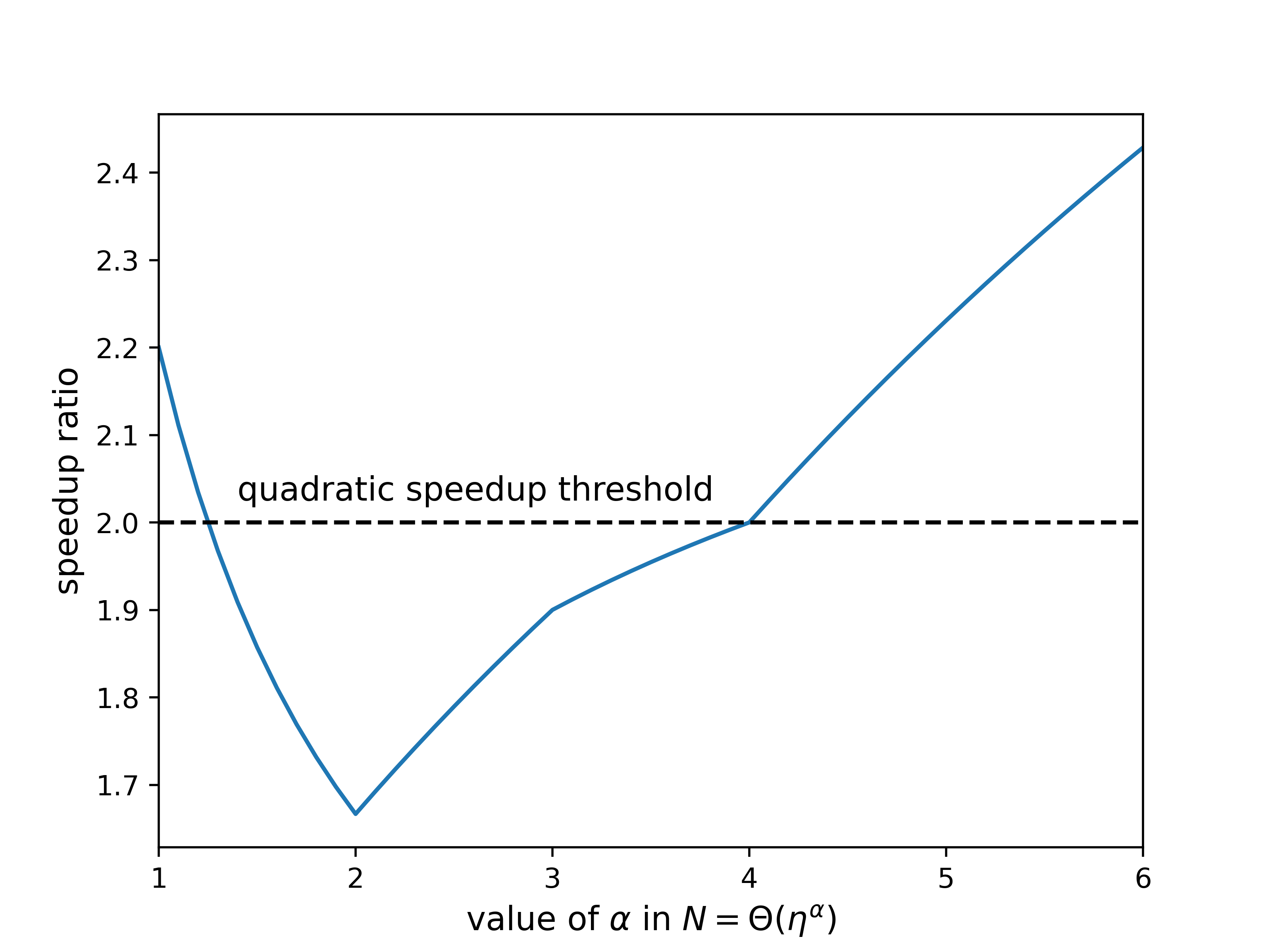}
  \caption{\label{fig:speedup} Plot showing the numerical values of the speedup exponent ratio given in \eq{speedup}. We see that a super-quadratic speedup of exact quantum algorithms over mean-field classical algorithms is realized when $\alpha < 5/4$ and when $\alpha > 4$.}
\end{figure}

Finally, we discuss the hope that Trotter based first quantized algorithms might be sped up by a factor of $\widetilde{\cal O}(\eta)$ by developing more efficient Trotter steps. The bottleneck for Trotter steps is the computation of the Coulomb operator since the simulation of the kinetic operator scales as $\widetilde{\cal O}(\eta)$. Thus, it seems promising that fast-multipole \cite{Rokhlin1985} Barnes-Hut \cite{BarnesHut}, or particle-mesh Ewald \cite{Ewald1993} type algorithms for computing the Coulomb potential require $\widetilde{\cal O}(\eta)$ operations in the classical random access memory (RAM) model. By contrast, the standard way of computing the Coulomb potential (involving summing up all ${\eta \choose 2}$ pairs of electrons) scales as $\widetilde{\cal O}(\eta^2)$. Thus, if one can figure out how to extend these better scaling methods to first quantization with $\widetilde{\cal O}(\eta)$ operations in the reversible circuit model (the cost model of relevance for this subroutine if executed on a quantum computer), the quantum algorithm would scale as
\begin{equation}
\left(N^{1/3}\eta^{4/3}t +  N^{2/3}\eta^{1/3}t\right)\left( \frac{N t}{\epsilon}\right)^{o\left(1\right)} \, .
\label{eq:fmm_cost}
\end{equation}

We note that it is straightforward to adapt these algorithms to second quantization with $\widetilde{\cal O}(N)$ gate complexity \cite{Low2018,Low2022b}. However, translating such algorithms to first quantization with $\widetilde{\cal O}(N)$ gate complexity in the quantum circuit model is highly non-trivial. This is due to nuances of how adaptive tree-like data structures are constructed and used in these algorithms, and it is why the work of \cite{Childs2022} decided to invoke the impractical assumption of QRAM in order to leverage the fast multipole algorithms. Note further that some of these algorithms such as the original fast multipole \cite{Rokhlin1985} and particle-mesh Ewald \cite{Ewald1993} make further assumptions on the state. In particular, if space is partitioned into ${\cal O}(\eta)$ boxes, then these methods require that no more than $k$ electrons are present in any box, in any configuration on which the wavefunction has support. Since electrons tend to repel one another this is often a good assumption at low energies, but it is not true for general states. It seems possible to implement a first quantized algorithm with $\widetilde{\cal O}(\eta \, k)$ space complexity and $\widetilde{\cal O}(\eta \, \textrm{poly}(k))$ gate complexity by keeping $k$ electron registers for each of these ${\cal O}(\eta)$ boxes of space. But there also exist versions of these algorithms, e.g. described in \cite{adaptiveFMM}, which use RAM and an adaptive tree structure to give $\widetilde{\cal O}(\eta)$ complexity without any assumptions on the state. Such approaches appear quite challenging to port to the quantum circuit model with the same complexity. However, if possible, the first quantized fast multipole-based Trotter would scale better than all other known approaches as long as $N < \eta^7$. When $N > \eta^7$, the first quantized interaction picture algorithm has better scaling.

\section{Efficient reduced density matrix estimation using classical shadows in first quantization}
\label{app:shadows}
\subsection{Problem statement}

We consider a system of \(\eta\) identical fermions occupying \(N \gg \eta\) orbitals.
In first-quantization, we represent the state of such a system as a wavefunction on \(\eta\) registers of \(n = \ceil{\log(N)}\) qubits.
We demand that this wavefunction is antisymmetric under the exchange of any two registers in order for it to represent a valid physical state.

Most physically interesting observables of such a system are captured by the few-body marginals, the reduced density matrices.
In this section, we concern ourselves with efficiently estimating elements of the \(k\)-body reduced density matrix ($k$-RDM) of the first-quantized state \(\ket{\psi}\) defined on \(\eta\) identical fermion particles,
\begin{equation}
    \prescript{k}{}D_{i_1, \ldots, i_k}^{j_1, \ldots, j_k} = \frac{\eta !}{\left( \eta - k \right)!} \tr \left[\ketbra{\psi}\prod_{\ell=1}^k \ketbra{i_\ell}{j_\ell}_\ell \right],
    \label{eq:1_register_k_rdm}
\end{equation}
where \(\ketbra{i}{j}_\ell\) indicates the tensor product of \(\ketbra{i}{j}\) on the \(\ell\)th register with the identity on the other \(\eta-1\) registers.
We can write an equivalent definition (equivalent due to the antisymmetry of the wavefunction),
\begin{equation}
    \prescript{k}{}D_{i_1, \ldots, i_k}^{j_1, \ldots, j_k} = \sum_{\bm{x} \in S_k^\eta} \tr \left[\ketbra{\psi} \prod_{\ell=1}^k \ketbra{i_\ell}{j_\ell}_{x_\ell} \right],
    \label{eq:k_register_k_rdm}
\end{equation}
where \(S_k^\eta\) is the set composed of all possible sequences of length \(k\) generating by drawing without replacement from \([\eta] \coloneqq \left\{ 1,\ldots,\eta \right\}\).

Our goal is to use measurements of the state \(\ket{\psi}\) to obtain a classical description of the state with enough information to approximate all \(N^{2k}\) elements of the $k$-RDM. We would like all of these estimates to accurate up to some additive error \(\epsilon\) with probability at least \(1 - \delta\). Ideally, our protocol will be efficient not only in terms of the number of measurements, but also in terms of the (gate) complexity of implementing each measurement and the classical complexity of the required post-processing.

We will accomplish our goal by applying the classical shadows formalism of Ref.~\citenum{Huang2020}.
We propose and analyze a protocol that requires at most
\begin{equation}
    m = 64 e^3 \log\left(N/\delta\right) k  \left( 2k + 2e \right)^k \eta^k \epsilon^{-2}
    \label{eq:real_m}
\end{equation}
measurements to estimate the $k$-RDM.
Performing these measurements requires acting on each of the particle registers with a randomly sampled Clifford circuit and performing a measurement in the computational basis.
These circuits can be implemented using \(\mathcal{O}(\eta n^2)\) one- and two-qubit Clifford gates on a linearly connected array of qubits in depth \(\mathcal{O}(n)\).
Each element of the $k$-RDM requires performing a number of classical operations that scales as 
\begin{equation}
    m' = \mathcal{O}\left(\left(n^4 + \log\left(1/\delta\right)\right) \eta^{2k} 2^k \epsilon^{-2}\right).
    \label{eq:real_m_prime}
\end{equation}

\subsection{The measurement protocol}

The classical shadows formalism of Huang et al.~works by choosing an ensemble of random unitaries \(\mathcal{U}\) on \(n\) qubits and defining a measurement channel
\begin{equation}
    \mathcal{M}(\sigma) \coloneqq \mathbb{E}_{U \sim \mathcal{U}} \sum_{b \in \{0, 1\}^n}
    U^\dagger \ketbra{b} U \ev{U \sigma U^\dagger}{b}.
\end{equation}
For specific choices of \(\mathcal{U}\), the channel \(\mathcal{M}\) is analytically invertible.
Operationally, we obtain the classical shadow of \(\sigma\) by repeatedly sampling a unitary \(U\) from \(\mathcal{U}\), applying the sampled \(U\) to a copy of \(\sigma\), and measuring in the computational basis (obtaining the bitstring \(b\)).
If we collect \(m\) such samples, then we call the (potentially unphysical) state
\begin{equation}
    \hat{\sigma} \coloneqq \frac{1}{m} \sum_{i=1}^m \mathcal{M}^{-1}\left(U^\dagger_i \ketbra{b_i} U_i \right)
\end{equation}
a classical shadow of \(\sigma\).
For an arbitrary observable \(O\), we can define an estimator \(\hat{o}\) of the quantity \(\tr \left[O \rho \right]\) using the classical shadow of \(\rho\),
\begin{equation}
    \hat{o} \coloneqq \tr \left[O \hat{\rho}\right].
\end{equation}

In expectation, we have that
\begin{equation}
    \ev{\hat{\sigma}} = \mathbb{E}_{U \sim \mathcal{U}} \sum_{b \in \{0, 1\}^n} \mathcal{M}^{-1}\left(U^\dagger \ketbra{b} U\right) \ev{U \sigma U^\dagger}{b} = \mathcal{M}^{-1}\left(\mathcal{M}\left(\sigma\right)\right) = \sigma.
    \label{eq:shadow_average}
\end{equation}
When we take \(\mathcal{U}\) to be the uniform distribution over the Clifford group on \(n\) qubits, the classical shadows measurement channel and its inverse have particularly simple forms~\cite{Huang2020},\footnote{Actually, a substantial constant factor savings in the number of gates can be obtained by using the canonical form of Ref.~\citenum{Bravyi2021-rn} and simply dropping the permutation at the end of the circuit.
    See, e.g., Ref.~\citenum{Huggins2022-nc}.
}
\begin{align}
    \mathcal{M}(A)      & = \frac{1}{2^n + 1}
    A + \frac{\tr\left[A\right]}{2^n + 1} \mathbb{I},
    \\
    \mathcal{M}^{-1}(A) & = (2^n + 1) A - \tr\left[A\right] \mathbb{I}.
    \label{eq:inverse_channel_exp}
\end{align}
Here, and throughout our analysis of the measurement protocol, we use the symbol \(\mathbb{I}\) to denote the identity operator on a Hilbert space whose dimension is appropriate for the context.

In this work, we propose and analyze the impact of using an ensemble \(\mathcal{U}\) that consists of a tensor product of \(\eta\) copies of the uniform distribution over \(n\) qubit Clifford circuits,
\begin{equation}
    \mathcal{U} = \bigotimes_{j=1}^\eta \mathrm{Cl}(2^n).
    \label{eq:U_def}
\end{equation}
That is to say, we perform our measurements by independently sampling \(\eta\) \(n\)-qubit Clifford unitaries, applying one to each particle register, and measuring in the computational basis.
We can consider the action of the corresponding classical shadow measurement channel and its inverse on an operator \(X_1 \otimes \cdots \otimes X_\eta\) that factorizes across the \(\eta\) registers. The channel is defined on the whole Hilbert space by linear extension.
For the classical shadow measurement channel, we have
\begin{align}
    \mathcal{M}(X_1 \otimes \cdots \otimes X_\eta) 
    &=
    \bigotimes_{j=1}^\eta \left( \mathbb{E}_{U_j \sim \mathrm{Cl}(2^n)} \sum_{b_j \in \{0,1\}^n} U_j^\dagger \ketbra{b_j} U_j \ev{U_j X_j U_j^\dagger}{b_j}\right)
    \\ &=
    \bigotimes_{j=1}^\eta \left( \frac{X_j + \tr\left[ X_j \right]\mathbb{I}}{2^n + 1} \right).
    \label{eq:M_channel}
\end{align}
The inverse, similarly, is given by
\begin{equation}
    \mathcal{M}^{-1}(X_1 \otimes \cdots \otimes X_\eta) = \bigotimes_{j=1}^\eta \left(\left(2^n + 1\right)X_j - \tr\left[X_j\right]\mathbb{I} \right).
    \label{eq:M_channel_inverse}
\end{equation}

Due to the antisymmetry of the wavefunction, we have the freedom to choose between a number of different observables when estimating the elements of the $k$-RDM.
Consider an arbitrary operator \(O\), and the operator \(P O P^\dagger\), where \(P\) is an operator that permutes the particle registers.
The expectation values of \(O\) and \(P O P^\dagger\) with respect to a first-quantized wavefunction are the same (to see this, observe that any sign picked up by acting \(P^\dagger\) on the ket is cancelled out by a corresponding sign obtained from acting \(P\) on the bra).
We can use this degree of freedom to minimize the variance of our measurement protocol.
Using the observable from \eq{1_register_k_rdm} to construct a classical shadow estimator of a $k$-RDM element would lead to an unnecessarily large variance, essentially because the observable doesn't take advantage of all of the information present in the state.
In contrast, \eq{k_register_k_rdm} defines the $k$-RDM element in terms of a sum over many different permutations of the registers.
We conjecture that a measurement protocol based on the observable in \eq{k_register_k_rdm} would perform well, but the analysis could be tedious due to the many different cases that would arise.

Rather than using the observables implied by either \eq{1_register_k_rdm} or \eq{k_register_k_rdm} in our classical shadow measurement procedure, we instead choose to estimate the $k$-RDM elements using an observable that involves a sum over a simpler set of permutations.
Essentially, we break the \(\eta\) registers up into \(k\) groups of size \(\eta /k\) and measure the $k$-RDM element using registers from each group.
For ease of notation, let us assume that \(\eta\) is divisible by \(k\).\footnote{In the event that \(\eta\) is not exactly divisible by \(k\), one could modify the protocol to either use groups of slightly different sizes or to only perform the measurements using \(\eta' = k \lfloor\eta / k \rfloor\) registers.}
Formally, we can define a set of sequences
\begin{equation}
    R_k =
    \left\{ 1,\ldots, \eta / k \right\} \times
    \left\{ \eta / k + 1,\ldots, 2\eta / k \right\} \times
    \cdots \times
    \left\{ \left( k - 1 \right)\eta / k + 1,\ldots, \eta \right\}.
    \label{eq:R_k_def}
\end{equation}
Due to the antisymmetry of the wavefunction, we have that 
\begin{equation}
    \prescript{k}{}D_{i_1, \ldots, i_k}^{j_1, \ldots, j_k} = 
    \frac{k^k \left( \eta ! \right)}{\eta ^ k\left( \eta - k \right)!} 
    \sum_{\bm{x} \in R_k} \tr \left[\ketbra{\psi} \prod_{\ell=1}^k \ketbra{i_\ell}{j_\ell}_{x_\ell} \right].
    \label{eq:R_k_register_k_rdm}
\end{equation}
We define an estimator \(\hat{d}\) for the $k$-RDM element \(\prescript{k}{}D_{i_1, \ldots, i_k}^{j_1, \ldots, j_k}\) using the classical shadow \(\hat{\rho}\) of \(\ket{\psi}\),
\begin{equation}
    \hat{d} = 
    \frac{k^k \left( \eta ! \right)}{\eta ^ k\left( \eta - k \right)!} 
    \sum_{\bm{x} \in R_k} \tr \left[\hat{\rho} \prod_{\ell=1}^k \ketbra{i_\ell}{j_\ell}_{x_\ell} \right].
    \label{eq:def_hat_d}
\end{equation}

In \app{shadows_k}, we prove that the single-shot variance of this estimator is bounded by
\begin{equation}
    \mathrm{Var}(\hat{d}) \leq
    e^3 \eta^k \left( 2k + 2e \right)^k.
    \label{eq:d_hat_variance_bound_intro}
\end{equation}
In order to guarantee that our estimates are close to the true value of the $k$-RDM elements with high probability, we need to proceed along the same lines as Ref.~\citenum{Huang2020} and construct a median-of-means estimator to obtain the desired rigorous guarantees~\cite{Lerasle2019-zm}.
To be precise, using Proposition 12 from Ref.~\citenum{Lerasle2019-zm}, we can consider an estimator that divides the \(m\) total classical shadow samples into \(K\) groups of size \(b\), and takes the median of the sample mean obtained by averaging the estimates within each group.
The probability that this median of means estimator has an error larger than \(2 \sqrt{\mathrm{Var}(\hat{d}) / b}\) is at most \(e^{-K/8}\).
To bound the error in our estimate by \(\epsilon\) with a success probability of at least \(1 - \delta\), this implies that we need 
\begin{align}
    b &= 4 \, \mathrm{Var}(\hat{d}) / \epsilon^2, 
    \\
    K &= 8 \log\left( 1/\delta \right).
\end{align}
The overall number of measurements claimed in \eq{real_m} follows directly from applying a union bound over the failure probabilities for estimating all \(N^{2k}\) $k$-RDM elements.

The measurement protocol can be summarized as follows.
We take a classical shadow of \(\ket{\psi}\) with the \(\mathcal{U}\) defined in \eq{U_def} using a number of samples \(m\) chosen according to \eq{real_m}.
For each sample, we evaluate the expectation values of the \(\left( \eta/k \right)^k\) different terms in the sum over \(R_k\) (see \eq{def_hat_d}) using generalizations of Gottesman-Knill theorem that account for the phase of the quantities involved~\cite{Garcia2017-rx,Aaronson2004-yb,Gottesman1998-rv}. Breaking the samples into \(K\) groups of size \(b\), averaging within the groups, and then taking the median of these means then yields the final estimate.
The classical post-processing costs quoted in \eq{real_m_prime} come from counting the number of \(n\)-qubit sized Clifford circuits that need to be simulated classically to carry out this procedure.

\subsection{Notation and preliminaries}
\label{app:simplification}

Before we proceed to bound the variance of the estimator \(\hat{d}\) for an arbitrary $k$-RDM element, it is helpful to recall a few useful expressions and prove some identities that we will use later.

We will make use of a formula for the two-fold twirl over the Clifford group and partial trace obtained from Ref.~\citenum{Huang2020},
\begin{equation}
    \label{eq:2_twirl_k}
    \mathbb{E}_{U \sim \textrm{Cl}(2^n)} U^\dagger \ketbra{x} U \ev{U A U^\dagger}{x} = \frac{A + \tr(A) \mathbb{I}}{2^n(2^n+1)}.
\end{equation}
For the three-fold twirl and partial trace, we find it convenient to use the identity
\begin{multline} \label{eq:3_twirl_k}
    \mathbb{E}_{U \sim \mathrm{Cl}(2^n)} U^\dagger \ketbra{x} U \ev{U B U^\dagger}{x} \ev{U C U^\dagger}{x} =
    \\
    \frac{1}{2^n \left( 2^n + 1 \right)\left( 2^n +2 \right)}
    \left( \mathbb{I}\left( \tr\left[ B C \right] +
        \tr\left[ B \right]\tr\left[ C \right] \right) + B \tr\left[ C \right] + C\tr\left[ B \right] +BC
    +CB \right).
\end{multline}
This equation is different from the corresponding one considered in previous work (Eq.~(S36) of Ref.~\citenum{Huang2020}),
in that it allows for \(B\) and \(C\) to have non-zero trace. It can be obtained directly from the analysis of Ref.~\citenum{Gross2015-ck}.\footnote{Note that while the proof of Lemma 7 in Ref.~\citenum{Gross2015-ck} is technically for Hermitian matrices, the same proof holds exactly in the non-Hermitian case.}

Another small departure we make from some prior work is that we directly consider the variance of estimators for the expectation values of non-Hermitian observables.
For a classical shadow \(\hat{\rho}\) of a state \(\rho\) and an estimator \(\hat{o} = \tr\left[ \hat{\rho} O \right]\) of the expectation value of a (not necessarily Hermitian) operator \(O\), we have
\begin{align}
    \mathrm{Var}(\hat{o}) &= \tr \left[ \rho \sum_{b} \mathbb{E}_{U \sim \mathcal{U}}  U^\dagger\ketbra{b} U \ev{U \mathcal{M}^{-1}(O) U^\dagger}{b} \ev{U \mathcal{M}^{-1}(O^\dagger) U^\dagger}{b}\right] - \left| {\tr \left[ O \rho \right]} \right|^2
    \\
    & \leq 
    \tr \left[ \rho \sum_{b} \mathbb{E}_{U \sim \mathcal{U}}  U^\dagger\ketbra{b} U \ev{U \mathcal{M}^{-1}(O) U^\dagger}{b} \ev{U \mathcal{M}^{-1}(O^\dagger) U^\dagger}{b}\right].
    \label{eq:var_general}
\end{align}
This expression can be arrived at from the definition of the variance of a complex-valued random variable applied to the classical shadow formalism.
We refer the reader to Ref.~\citenum{Wan2022} for a thorough discussion.

In the course of calculating the variance for the higher-order RDMs, we will find that we repeatedly need to simplify certain expressions.
Before describing those expressions and showing how they may be simplified, let us define some notation used for convenience throughout the rest of our analysis:
\begin{align}
    P_x &= \ketbra{x},
    \\
    P_{xy} &= \ketbra{x}{y},
    \\
    \mathbb{E}_U &= \mathbb{E}_{U \sim \mathrm{Cl}(2^n)},
    \\
    \sum_{b} &= \sum_{b \in \left\{ 0,1 \right\}^n}.
\end{align}

One class of expressions that we will need to simplify are of the form
\begin{equation}
    A = \mathbb{E}_{U} \sum_{b} U^\dagger P_b U \ev{U \mathcal{M}^{-1}(P_{ij}) U^\dagger}{b}.
    \label{eq:A_2_generic}
\end{equation}

We can use \eq{inverse_channel_exp} and \eq{2_twirl_k} to simplify \eq{A_2_generic},
\begin{align}
    A &= \mathbb{E}_{U} \sum_{b} U^\dagger P_b U \ev{U \mathcal{M}^{-1}(P_{ij}) U^\dagger}{b}
    \\ & = 
    \mathbb{E}_{U} \sum_{b} U^\dagger P_b U \ev{U \left( \left( 2^n + 1 \right) P_{ij} - \delta_{i,j}\mathbb{I} \right) U^\dagger}{b}
    \\ & = 
    P_{ij}.
    \label{eq:A_2_simplify}
\end{align}

Another kind of expression that we will need to simplify is of the form
\begin{equation}
    A = \mathbb{E}_{U} \sum_{b} U^\dagger P_b U \ev{U \mathcal{M}^{-1}(P_{ij}) U^\dagger}{b} \ev{U \mathcal{M}^{-1}(P_{kl}) U^\dagger}{b}.
    \label{eq:A_3_generic}
\end{equation}

Let us consider the first case, and simplify \(A\) as defined below,
\begin{equation}
    A = \mathbb{E}_{U} \sum_{b} U^\dagger P_b U \ev{U \mathcal{M}^{-1}(P_{i}) U^\dagger}{b} \ev{U \mathcal{M}^{-1}(P_{i}) U^\dagger}{b}.
    \label{eq:A_3_case_iiii}
\end{equation}
We have
\begin{align}
    \mathcal{M}^{-1}\left( P_{i}\right) = \left( 2^n + 1 \right)P_i - \mathbb{I}
\end{align}
by an application of \eq{inverse_channel_exp}.
Now we can apply \eq{3_twirl_k} with \(B = C = \left( 2^n + 1 \right)P_i - \mathbb{I}\).

Let us simplify the pieces of \eq{3_twirl_k} separately before combining them.
We have 
\begin{align}
    BC &= CB = \left( \left( 2^n + 1 \right)P_i - \mathbb{I} \right)^2
    \\
    &= \left( 2^n + 1 \right)\left( 2^n - 1 \right)P_i + \mathbb{I},
    \\
    \tr \left[ BC \right] &= \tr \left[ CB \right] = 2^{n}(2^n + 1) - 1,
    \\
    \tr \left[ B \right] &= \tr \left[ C \right] = 1.
\end{align}
As a result, 
\begin{align}
    &
    \mathbb{I}\left( \tr\left[ B C \right] +
    \tr\left[ B \right]\tr\left[ C \right] \right) + B \tr\left[ C \right] + C\tr\left[ B \right] +BC
    +CB
    \\
    = \,&
    2^n \left( 2^n + 1 \right) \mathbb{I} + 2\left( 2^n + 1 \right) P_i - 2\mathbb{I} + 2 \left( 2^n + 1 \right)\left( 2^n - 1 \right)P_i + 2 \mathbb{I}
    \\
    = \,&
    2^n \left( 2^n + 1 \right) \left(\mathbb{I} + 2 P_i  \right).
\end{align}

Putting everything together, we have 
\begin{align}
    A &=
    \mathbb{E}_U \sum_b U^\dagger P_b U \ev{U \mathcal{M}^{-1}(P_{i}) U^\dagger}{b} \ev{U \mathcal{M}^{-1}(P_{i}) U^\dagger}{b}
    \\
    &= \frac{2^n}{2^n + 2}\left( \mathbb{I} +2 P_{i}\right).
    \label{eq:A_3_case_iiii_simplified}
\end{align}

Now we consider simplifying the expression
\begin{equation}
    A = \mathbb{E}_{U} \sum_{b} U^\dagger P_b U \ev{U \mathcal{M}^{-1}(P_{ij}) U^\dagger}{b} \ev{U \mathcal{M}^{-1}(P_{ji}) U^\dagger}{b}.
    \label{eq:A_3_case_ijji}
\end{equation}
In this case, we can again use \eq{3_twirl_k} with
\begin{align}
    B &= \mathcal{M}^{-1}(P_{i j}) = \left( 2^n + 1 \right)P_{i j},
    \\
    C &= \mathcal{M}^{-1}(P_{j i}) = \left( 2^n + 1 \right)P_{j i}.
\end{align}

Working out some of the pieces, we have
\begin{align}
    BC &= \left( 2^n + 1 \right)^2 P_i,
    \\
    CB &= \left( 2^n + 1 \right)^2 P_j,
    \\
    \tr \left[ BC \right] &= \left( 2^n + 1 \right)^2,
    \\
    \tr \left[ B \right] &= \tr \left[ C \right] = 0.
\end{align}
Therefore,
\begin{align}
    &
    \mathbb{I}\left( \tr\left[ B C \right] +
    \tr\left[ B \right]\tr\left[ C \right] \right) + B \tr\left[ C \right] + C\tr\left[ B \right] +BC
    +CB
    \\
    &= \left( 2^n + 1 \right)^2 \left( \mathbb{I} + P_i + P_j \right).
\end{align}

Finally, we have 
\begin{align}
    A &=
    \mathbb{E}_U \sum_b U^\dagger P_b U \ev{U \mathcal{M}^{-1}(P_{ij}) U^\dagger}{b} \ev{U \mathcal{M}^{-1}(P_{ji}) U^\dagger}{b}
    \\
    &= \frac{2^n + 1}{2^n + 2}\left( \mathbb{I} + P_i + P_j \right).
    \label{eq:A_3_case_ijji_simplified}
\end{align}

\subsection{Variance of the $k$-RDM with a restricted sum}
\label{app:shadows_k}

Now we are ready to turn to the task of bounding the variance \(\hat{d}\) as defined in \eq{def_hat_d}.
For now, we neglect the coefficient in order to simplify the presentation.
Let
\begin{align}
    O &= \sum_{\bm{x} \in R_k} O_{\bm{x}},
    \\
    O_{\bm{x}} &= \prod_{\ell=1}^k \ketbra{i_\ell}{j_\ell}_{x_\ell}.
\end{align}
The variance of the classical shadow estimator \(\hat{o}\) of \(\ev{O}\) is bounded by 
\begin{align}
    \mathrm{Var}(\hat{o}) &\leq \sum_{\bm{x} \in R_k} \sum_{\bm{y} \in R_k}\tr \left[ \ketbra{\psi} A_{\bm{x}\bm{y}}\right], \label{eq:o_variance_tr_A}
    \\
    A_{\bm{x}\bm{y}} &= \sum_{b} \mathbb{E}_{U \sim \mathcal{U}}  U^\dagger\ketbra{b} U \ev{U \mathcal{M}^{-1}(O_{\bm{x}}) U^\dagger}{b} \ev{U \mathcal{M}^{-1}(O_{\bm{y}}^\dagger) U^\dagger}{b}.
\end{align}

Because the inverse channel, the random unitaries, and the \(O_{\bm{x}}\) all factorize across the registers, we can rewrite \(A_{\bm{x}\bm{y}}\) as a tensor product,
\begin{align}
    A_{\bm{x}\bm{y}} & = \bigotimes_{z = 1}^\eta A_{\bm{x}\bm{y}}^z,
\end{align}
where \(A_{\bm{x}\bm{y}}^z\) takes one of three forms depending on whether neither, one of, or both of \(O_{\bm{x}}\) and \(O_{\bm{y}}\) act non-trivially on the \(z\)th register.
If \(z \notin \bm{x}\) and \(z \notin \bm{y}\), then
\begin{equation}
    A_{\bm{x}\bm{y}}^z = \mathbb{I}.
\end{equation}

If exactly one of \(z \in \bm{x}\) or \(z \in \bm{y}\) is true, then we can use \eq{A_2_simplify} to simplify our expression for \(A_{\bm{x}\bm{y}}^z\).
The cases are symmetric between \(z \in \bm{x}\) and \(z \in \bm{y}\), so we can treat only the first case without loss of generality.
Let \(\ell\) denote the index of \(z\) in \(\bm{x}\) (i.e., \(x_\ell = z\)).
We have
\begin{align}
    A_{\bm{x}\bm{y}}^z &= \sum_{b} \mathbb{E}_{U \sim \mathcal{U}}  U^\dagger\ketbra{b} U \ev{U \mathcal{M}^{-1}(\ketbra{i_\ell}{j_\ell}) U^\dagger}{b}
    \\
    &= \ketbra{i_\ell}{j_\ell}_z.
\end{align}
If \(z \in \bm{y}\) we instead have \(A_{\bm{x}\bm{y}}^z = \ketbra{j_\ell}{i_\ell}_z\).

The third case we must consider is where \(z \in \bm{x}\) and \(z \in \bm{y}\).
Let \(\ell\) denote the index of \(z\) in \(\bm{x}\) and \(\bm{y}\) (they must be the same because of the way we construct \(\bm{x}\) and \(\bm{y}\)).
In this case,
\begin{align}
    A_{\bm{x}\bm{y}}^z &= \sum_{b} \mathbb{E}_{U \sim \mathcal{U}}  U^\dagger\ketbra{b} 
    U \ev{U \mathcal{M}^{-1}(\ketbra{i_\ell}{j_\ell}) U^\dagger}{b}
    U \ev{U \mathcal{M}^{-1}(\ketbra{j_\ell}{i_\ell}) U^\dagger}{b}.
\end{align}
If \(i_\ell = j_\ell\) we can simplify this expression using \eq{A_3_case_iiii_simplified}, otherwise we can use \eq{A_3_case_ijji_simplified}. The combination of these two formulas lets us write 
\begin{align}
    A_{\bm{x}\bm{y}}^z &= \sum_{b} \mathbb{E}_{U \sim \mathcal{U}}  U^\dagger\ketbra{b} 
    U \ev{U \mathcal{M}^{-1}(\ketbra{i_\ell}{j_\ell}) U^\dagger}{b}
    U \ev{U \mathcal{M}^{-1}(\ketbra{j_\ell}{i_\ell}) U^\dagger}{b}
    \\
    &= \frac{2^n + 1 - \delta_{i_\ell, j_\ell}}{2^n + 2}\left( \mathbb{I} + \ketbra{i_\ell} +
    \ketbra{j_\ell} \right).
    \label{eq:A_3_both_cases}
\end{align}

Now we will use the antisymmetry of \(\ket{\psi}\) to bound the quantity \(\left|\tr \left[ \ketbra{\psi} A_{\bm{xy}} \right]\right|\).
Let 
\begin{equation}
    a = |\bm{x} \cap \bm{y}|,
    \qquad \qquad
    b = 2k - 2a. 
    \label{eq:a_b_def}
\end{equation}
The operator \(A_{\bm{xy}}\) acts non-trivially on \(a + b\) registers.
On \(a\) registers, it acts with an operator of the form given in \eq{A_3_both_cases}.
On the other \(b\) registers, it acts as \(\ketbra{c}{d}\) for some \(c, d\) (that can vary per register).
Due to the antisymmetry of \(\ket{\psi}\), we can freely permute the registers without affecting the expectation value.

We can therefore rewrite the expectation value of interest as
\begin{align}
    \left|\tr \left[ \ketbra{\psi} A_{\bm{xy}} \right]\right| &= 
    \left|\ev{\left(\bigotimes_{\ell=1}^a \frac{2^n + 1 - \delta_{c_\ell, d_\ell}}{2^n + 2}\left( \mathbb{I} + \ketbra{c_\ell} + \ketbra{d_\ell} \right)
    \bigotimes_{\ell = a + 1}^{a + b} \ketbra{c_\ell}{d_\ell} 
    \bigotimes_{\ell = a + b + 1}^\eta \mathbb{I}\right)}{\psi}\right|
    \\
    &\leq 
    \left|\ev{\left(\bigotimes_{\ell=1}^a \left( \mathbb{I} + \ketbra{c_\ell} + \ketbra{d_\ell} \right)
    \bigotimes_{\ell = a + 1}^{a + b} \ketbra{c_\ell}{d_\ell} 
    \bigotimes_{\ell = a + b + 1}^\eta \mathbb{I}\right)}{\psi}\right|.
    \label{eq:A_bound_with_offdiags}
\end{align}

\subsubsection{Removing the off-diagonal terms}

We can simplify the bound in \eq{A_bound_with_offdiags} by replacing the off-diagonal matrix elements with projectors.
To do so, we will need the following lemma.

\begin{lemma}
    Let \(\ket{\psi}\) be an arbitrary normalized pure quantum state on \(n\) qubits.
    Let \(O\) be an arbitrary positive semidefinite operator on \(a\) qubits, and let \(\ket{\alpha}\) and \(\ket{\beta}\) be arbitrary orthonormal quantum states on \(n - a\) qubits.
    Then,
    \begin{equation}
        \left| \ev{\left(O \otimes \ketbra{\alpha}{\beta}\right)}{\psi} \right| \leq 
        \left| \ev{\left(O \otimes \ketbra{\phi}{\phi}\right)}{\psi} \right|
    \end{equation}
    for \(\ket{\phi} = \ket{\alpha}\) or \(\ket{\phi} = \ket{\beta}\).
    \label{lem:offdiag_to_diag}
\end{lemma}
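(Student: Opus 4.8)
\emph{Proof plan.} The idea is to reduce the inequality to a Cauchy--Schwarz bound for the positive semidefinite sesquilinear form induced by $O$. First I would introduce, for each $\gamma \in \{\alpha, \beta\}$, the partial-overlap vector on the $a$-qubit factor, $\ket{\chi_\gamma} \coloneqq \left(\mathbb{I} \otimes \bra{\gamma}\right)\ket{\psi}$, where the bra acts only on the complementary $(n-a)$-qubit register. Expanding $\ket{\psi}$ in any product basis, a one-line computation shows that the factor $\ketbra{\alpha}{\beta}$ simply contracts $\ket{\psi}$ against $\ket{\beta}$ on the right and $\ket{\alpha}$ on the left, so that the quantity of interest collapses to $\ev{\left(O \otimes \ketbra{\alpha}{\beta}\right)}{\psi} = \bra{\chi_\alpha} O \ket{\chi_\beta}$, while the two candidate right-hand sides become $\bra{\chi_\alpha} O \ket{\chi_\alpha}$ and $\bra{\chi_\beta} O \ket{\chi_\beta}$. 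This step is pure bookkeeping and I would present it compactly.

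With this reduction, the claim becomes $\left| \bra{\chi_\alpha} O \ket{\chi_\beta} \right| \le \max\!\left( \bra{\chi_\alpha} O \ket{\chi_\alpha}, \bra{\chi_\beta} O \ket{\chi_\beta} \right)$. Since $O \succeq 0$, the map $\left(\ket{u}, \ket{v}\right) \mapsto \bra{u} O \ket{v}$ is a positive semidefinite sesquilinear form, so the Cauchy--Schwarz inequality applies (concretely, factor $O = O^{1/2} O^{1/2}$ and apply the ordinary Cauchy--Schwarz inequality to the vectors $O^{1/2}\ket{\chi_\alpha}$ and $O^{1/2}\ket{\chi_\beta}$), giving $\left| \bra{\chi_\alpha} O \ket{\chi_\beta} \right| \le \sqrt{\bra{\chi_\alpha} O \ket{\chi_\alpha}}\,\sqrt{\bra{\chi_\beta} O \ket{\chi_\beta}}$. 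Because $O$ is positive semidefinite, both diagonal terms are nonnegative, and the geometric mean of two nonnegative numbers never exceeds their maximum, $\sqrt{xy} \le \max(x, y)$. Chaining these two inequalities proves the lemma, with $\ket{\phi}$ taken to be whichever of $\ket{\alpha}, \ket{\beta}$ yields the larger diagonal value.

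I do not expect a real obstacle here; the only point requiring care is the first reduction, where one must verify that no cross terms survive and that $\ket{\chi_\alpha}, \ket{\chi_\beta}$ are exactly the partial inner products of $\ket{\psi}$ with $\ket{\alpha}, \ket{\beta}$. This relies on $\ket{\alpha}$ and $\ket{\beta}$ being \emph{states} on the complementary register rather than general operators. Notably, the orthonormality of $\ket{\alpha}$ and $\ket{\beta}$ is not needed for the inequality itself, so I would not invoke it in the proof; it matters only for the interpretation of the lemma within the surrounding variance calculation.
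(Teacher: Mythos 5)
Your proof is correct and is essentially the paper's own argument: the paper expands $\ket{\psi}$ in an eigenbasis of $O$, defines $k_{ij}=c_{ij}\sqrt{O_{ii}}$ (i.e., applies $O^{1/2}$ componentwise), and then uses Cauchy--Schwarz followed by the fact that a geometric mean is at most the maximum; your partial-overlap vectors $\ket{\chi_\gamma}$ and the sesquilinear-form Cauchy--Schwarz are the basis-free rendering of exactly that. Your side remark that orthonormality of $\ket{\alpha},\ket{\beta}$ is not actually needed is also correct --- the paper uses it only to embed the two states in an orthonormal basis $\{\ket{j}\}$ for the expansion.
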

\begin{proof}
    To begin the proof, expand \(\ket{\psi}\) as
    \begin{equation}
        \ket{\psi} = \sum_{ij} c_{ij} \ket{i}\ket{j},
    \end{equation}
    where the states \(\left\{ \ket{i} \right\}\) form an eigenbasis for \(O\) and the states \(\left\{ \ket{j} \right\}\) are an orthonormal basis such that
    \(\ket{\alpha}, \ket{\beta} \in \left\{ \ket{j} \right\}\).
    Then 
    \begin{align}
        \left| \ev{\left(O \otimes \ketbra{\alpha}{\beta}\right)}{\psi} \right| &=
        \left| \sum_{i} c_{i\alpha}^* c_{i \beta} O_{ii} \right|
        \\ &= 
        \sum_{i} k_{i\alpha}^* k_{i \beta},
        \label{eq:k_time}
    \end{align}
    where \(O_{ii}\) denotes the eigenvalue of \(O\) corresponding to the eigenvector \(\ket{i}\) and \(k_{ij}\) is defined implicitly as \(k_{ij} = c_{ij}\sqrt{O_{ii}}\).
    We can consider the quantity in \eq{k_time} as the inner product of two vectors \(\vec{k}_{\alpha}\) and \(\vec{k}_{\beta}\). The Cauchy-Schwarz inequality tells us that
    \begin{equation}
        \left| \sum_{i} k_{i\alpha}^* k_{i \beta} \right| \leq \sqrt{\left(  \sum_{i} k_{i\alpha}^* k_{i \alpha} \right) \left(  \sum_{i} k_{i\beta}^* k_{i \beta} \right)}.
    \end{equation}
    We can choose \(\gamma \in \left\{ \alpha, \beta \right\}\) such that
    \begin{align}
        \left|\sum_{i} k_{i\gamma}^* k_{i \gamma}\right| \geq & \left|\sum_{i} k_{i\alpha}^* k_{i \alpha}\right| \text{ and} \nonumber \\
        \left|\sum_{i} k_{i\gamma}^* k_{i \gamma}\right| \geq & \left|\sum_{i} k_{i\beta}^* k_{i \beta}\right|.
    \end{align}
    Therefore, we have that 
    \begin{align}
        \left| \ev{\left(O \otimes \ketbra{\alpha}{\beta}\right)}{\psi} \right| & \leq \left|\sum_{i} k_{i\gamma}^* k_{i \gamma}\right|
        \\
        &=
        \left| \sum_{i} c_{i\gamma}^* c_{i \gamma} O_{ii} \right|
        \\
        &= \ev{\left(O \otimes \ketbra{\gamma}{\gamma}\right)}{\psi}
    \end{align}
    for either \(\ket{\gamma} = \ket{\alpha}\) or \(\ket{\gamma} = \ket{\beta}\)
    We can remove the absolute value bars in the final line because \(O \otimes \ketbra{\gamma}\) is a positive semidefinite operator.
\end{proof}

Now we can return to our bound from \eq{A_bound_with_offdiags},
\begin{equation}
    \left|\tr \left[ \ketbra{\psi} A_{\bm{xy}} \right]\right|
    \leq 
    \left|\ev{\left(\bigotimes_{\ell=1}^a \left( \mathbb{I} + \ketbra{c_\ell} + \ketbra{d_\ell} \right)
    \bigotimes_{\ell = a + 1}^{a + b} \ketbra{c_\ell}{d_\ell} 
    \bigotimes_{\ell = a + b + 1}^\eta \mathbb{I}\right)}{\psi}\right|.
\end{equation}
By rearranging the registers, we can apply \lem{offdiag_to_diag}.
Taking \(\ket{\alpha}\) to be \(\bigotimes_{\ell = a + 1}^{a + b} \ket{c_\ell}\) and \(\bra{\beta}\) to be \(\bigotimes_{\ell = a + 1}^{a + b} \bra{d_\ell}\), 
we can show that either
\begin{equation}
    \left|\tr \left[ \ketbra{\psi} A_{\bm{xy}} \right]\right|
    \leq 
    \ev{\left(\bigotimes_{\ell=1}^a \left( \mathbb{I} + \ketbra{c_\ell} + \ketbra{d_\ell} \right)
    \bigotimes_{\ell = a + 1}^{a + b} \ketbra{c_\ell}{c_\ell} 
    \bigotimes_{\ell = a + b + 1}^\eta \mathbb{I}\right)}{\psi}
    \label{eq:d_or_c_smaller}
\end{equation}
holds, or an equivalent expression with \(\ketbra{d_\ell}\) instead of \(\ketbra{c_\ell}\) in the second set of registers.
Both cases are identical, so we will proceed using the label \(g_\ell\) for whichever choice is valid in each register.

We can also simplify the expression in the first registers.
We claim that, for each register, we can replace the term \(\ketbra{c_\ell} + \ketbra{d_\ell}\) with either \(2\ketbra{c_\ell}\) or \(2\ketbra{d_\ell}\) without making the expectation value any smaller.
This can be seen by proceeding register by register, using the linearity of the expectation value.
Here again, the choice of \(c_\ell\) or \(d_\ell\) in each register is immaterial, so we use the label \(g_\ell\) to denote whichever one is appropriate for each register.
Making this simplification, we have that 
\begin{equation}
    \left|\tr \left[ \ketbra{\psi} A_{\bm{xy}} \right]\right|
    \leq 
    \ev{\left(
    \bigotimes_{\ell=1}^a \left( \mathbb{I} + 2\ketbra{g_\ell}\right)
    \bigotimes_{\ell = a + 1}^{a + b} \ketbra{g_\ell}
    \bigotimes_{\ell = a + b + 1}^\eta \mathbb{I}
    \right)}{\psi}.
    \label{eq:g_thangs}
\end{equation}

\subsubsection{Taking advantage of antisymmetry}

Now we will take advantage of the antisymmetry of \(\ket{\psi}\) to bound the expectation values in \eq{g_thangs}.
It is helpful to rewrite the expression in the first set of registers in a different form:
\begin{align}
    \bigotimes_{\ell=1}^a \left( \mathbb{I} + 2\ketbra{g_\ell}\right) = 
    \sum_{w=0}^a 2^w \sum_{S \subseteq \left[ a \right]: |S| = w} \bigotimes_{\ell=1}^a W_\ell^S,
\end{align}
where \(W_\ell^S = \ketbra{g_\ell}\) if \(\ell \in S\) and \(W_\ell = \mathbb{I}\) otherwise.
This then leads us to the bound
\begin{equation}
    \left|\tr \left[ \ketbra{\psi} A_{\bm{xy}} \right]\right|
    \leq 
    \sum_{w=0}^a 2^w \sum_{S \subseteq \left[ a \right]: |S| = w} 
    \ev{\left(
    \bigotimes_{\ell=1}^a W_\ell^S
    \bigotimes_{\ell = a + 1}^{a + b} \ketbra{g_\ell}
    \bigotimes_{\ell = a + b + 1}^\eta \mathbb{I}
    \right)}{\psi}.
    \label{eq:g_thangs_now_with_more_W}
\end{equation}

Now that we have obtained this bound, we will proceed to use the antisymmetry of \(\ket{\psi}\) to show that
\begin{equation}
    \ev{\left(
    \bigotimes_{\ell=1}^a W_\ell^S,
    \bigotimes_{\ell = a + 1}^{a + b} \ketbra{g_\ell}
    \bigotimes_{\ell = a + b + 1}^\eta \mathbb{I}
    \right)}{\psi} \leq \frac{1}{P (\eta, |S| + b)} = \frac{\left( \eta - |S| - b \right)!}{\eta !}.
    \label{eq:W_g_bound}
\end{equation}

To do so, let us prove the following lemma,
\begin{lemma} \label{lem:projector_expectation_bound}
    Let \(\ket{\psi}\) be a normalized pure state on \(\eta\) registers of \(n\) qubits each. Furthermore, let \(S \ket{\psi} = -\ket{\psi}\) for any operator \(S\) that swaps the states of two of the registers. Let \(\{P_i\}_{i \in \left[ k \right]}\) be a set of projectors onto orthonormal \(n\) qubit states.
    Then 
    \begin{equation}
        0 \leq \ev{\left(\bigotimes_{i=1}^k P_i \bigotimes_{i=k + 1}^{\eta} \mathbb{I}\right)}{\psi} \leq \frac{1}{P(\eta, k)} = 
        \frac{\left( \eta - k \right)!}{\eta!},
        \label{eq:projector_product_bound}
    \end{equation}
    where \(P(\eta, k)\) denotes the number of ways to choose a sequence of \(k\) items from a set of size \(\eta\).
\end{lemma}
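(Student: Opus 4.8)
The lower bound is immediate and I would dispose of it first: the operator $\bigotimes_{i=1}^k P_i \otimes \mathbb{I}^{\otimes(\eta-k)}$ is a product of orthogonal projectors acting on distinct registers, hence itself a projector and in particular positive semidefinite, so its expectation in any state is nonnegative. The content is the upper bound, and my plan is to convert the antisymmetry of $\ket{\psi}$ into an operator inequality via a ``spreading'' argument over register placements. Writing $P_i^{(x)}$ for the projector $P_i$ acting on register $x$ (identity elsewhere), set $E = \bra{\psi}\bigl(\bigotimes_{i=1}^k P_i\bigr)\otimes\mathbb{I}\,\ket{\psi} = \bra{\psi} P_1^{(1)}\cdots P_k^{(k)}\ket{\psi}$. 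The first step is to observe that $E$ is independent of where the $k$ projectors are placed, provided they occupy distinct registers: for any length-$k$ sequence $\bm{x}=(x_1,\dots,x_k)$ of distinct register indices, choose $\pi\in S_\eta$ with $\pi(i)=x_i$ and let $P_\pi$ be its representation on the registers. Conjugation gives $P_\pi \bigl(\prod_{i} P_i^{(i)}\bigr) P_\pi^\dagger = \prod_i P_i^{(x_i)} =: Q_{\bm{x}}$, while antisymmetry gives $P_\pi\ket{\psi}=\operatorname{sgn}(\pi)\ket{\psi}$; since $\operatorname{sgn}(\pi)^2=1$ the sign cancels and $\bra{\psi}Q_{\bm{x}}\ket{\psi}=E$ for every such $\bm{x}$.

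The second step is to sum $Q_{\bm{x}}$ over all placements. Introducing the collective operators $M_i=\sum_{x=1}^{\eta}P_i^{(x)}$, I would expand the product
\begin{equation}
\prod_{i=1}^k M_i = \sum_{x_1,\dots,x_k} \prod_{i=1}^k P_i^{(x_i)}.
\end{equation}
Every term in which two indices coincide, $x_i=x_j$ with $i\neq j$, contains the factor $P_i^{(x_i)}P_j^{(x_i)}=0$ because the underlying states are orthonormal; hence the repeated-index terms vanish identically as operators, giving $\prod_i M_i=\sum_{\bm{x}\text{ distinct}}Q_{\bm{x}}$, a sum of exactly $P(\eta,k)$ terms. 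Taking the expectation and using the first step, $\bra{\psi}\prod_i M_i\ket{\psi}=P(\eta,k)\,E$.

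It then remains to show $\bra{\psi}\prod_i M_i\ket{\psi}\le 1$, which is where antisymmetry does the real work (the Pauli-exclusion heart of the argument). I would verify that, restricted to the antisymmetric subspace, the $M_i$ are commuting orthogonal projectors: they commute because the only potentially non-commuting contributions come from a single register $x=y$, where $P_i^{(x)}P_j^{(x)}=P_j^{(x)}P_i^{(x)}=0$; and each satisfies $M_i^2\ket{\psi}=M_i\ket{\psi}$ because the off-diagonal terms $P_i^{(x)}P_i^{(y)}\ket{\psi}$ with $x\neq y$ annihilate $\ket{\psi}$ --- such a vector places two registers in the same state and is therefore simultaneously symmetric and antisymmetric under the swap $S_{xy}$, forcing it to vanish. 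A product of commuting orthogonal projectors is again an orthogonal projector, so $\prod_i M_i\le\mathbb{I}$ on the antisymmetric subspace and its expectation in $\ket{\psi}$ is at most $1$. Combining with $\bra{\psi}\prod_i M_i\ket{\psi}=P(\eta,k)\,E$ yields $E\le 1/P(\eta,k)=(\eta-k)!/\eta!$.

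I expect the main obstacle to be making this final ``Pauli exclusion'' step fully rigorous --- that is, justifying $M_i^2\ket{\psi}=M_i\ket{\psi}$ and the commutation cleanly enough that $\prod_i M_i$ is genuinely a projector on the relevant subspace. The cleanest route is the swap-symmetry argument above; equivalently, one may recognize $M_i$ as the second-quantized number operator $a_{\phi_i}^\dagger a_{\phi_i}$, whose eigenvalues on fermionic states lie in $\{0,1\}$, so that the product of number operators for the orthonormal modes $\phi_1,\dots,\phi_k$ is bounded above by the identity. Everything else is bookkeeping.
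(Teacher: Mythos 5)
Your proof is correct, and its skeleton matches the paper's: both arguments exploit that the expectation value is unchanged under relocating the $k$ projectors to any $k$ distinct registers (antisymmetry, with the permutation sign squaring away), and both reduce the upper bound to showing that the sum of the expectation values over all $P(\eta,k)$ placements is at most $1$. Where you diverge is in how that last inequality is established. The paper keeps the $P(\eta,k)$ placement operators $A_s$ separate, shows that for $s\neq s'$ one has $\ev{A_s A_{s'}}{\psi}=0$ (either because the two operators annihilate each other outright, or because their product forces two registers into the same state, which antisymmetry forbids), and then concludes $\sum_s \ev{A_s}{\psi}\le 1$ by simultaneously diagonalizing the commuting $A_s$. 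You instead re-sum the placements into $\prod_{i=1}^k M_i$ with $M_i=\sum_x P_i^{(x)}$, observe that the collision terms vanish as operators by orthonormality of the $P_i$, and show each $M_i$ restricts to an orthogonal projector on the antisymmetric subspace (the off-diagonal terms $P_i^{(x)}P_i^{(y)}$ kill any antisymmetric state by the swap argument), so the product of these commuting projectors has norm at most $1$. The underlying physics is identical --- Pauli exclusion in the guise of $P_i^{(x)}P_i^{(y)}\ket{\psi}=0$ --- but your packaging buys a cleaner justification of the key inequality (no need to argue about overlapping $+1$ eigenspaces of non-orthogonal projectors) and makes the second-quantized reading transparent: the $M_i$ are the number operators $a_{\phi_i}^\dagger a_{\phi_i}$, whose eigenvalues on fermionic states lie in $\{0,1\}$. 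Your handling of the lower bound (the observable is itself a projector, hence positive semidefinite) agrees with the paper's.
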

\begin{proof}
    Let \(S_k\) denote the set of all sequences obtained by choosing \(k\) items from the set \(\left[ \eta \right]\). Note that two sequences with the same elements in different orders are treated as distinct elements of \(S_k\).
    For a sequence \(s \in S_k\) we define the operator \(A_s\) as the operator that acts on register \(s_i\) with the projector \(P_i\) for all \(i \in \left[ k \right]\) and acts on the other \(\eta - k\) registers with the identity operation. Note that all of the operators \(A_s\) are defined using the same set of \(k\) projectors acting on (potentially) different registers.

    We will prove the claim by showing that 
    \begin{equation}
        \sum_{s \in S_k} \ev{A_s}{\psi} \leq 1.
    \end{equation}
    Clearly the operators \(\left\{ A_s \right\}_{s \in S_k}\) are all projectors onto different subspaces.
    In general, these projectors are not orthogonal (under the Hilbert-Schmidt inner product).
    Equivalently, we could say that the \(+1\) eigenspaces of these operators are not orthogonal in general.

    However, we can show that \(\ket{\psi}\) has no support on states that are in the \(+1\) eigenspace of more than one of these projectors.
    Consider \(A_x\) and \(A_y\) for \(x \neq y\).
    There must be some register \(\ell\) on which they act differently.
    If \(A_x\) and \(A_y\) both act on register \(\ell\) with distinct projectors \(P_i\) and \(P_j\) then \(A_x A_y\) = 0 and their eigenspaces have no overlap, so we are done.
    Assume that only one of \(A_x\) and \(A_y\) acts on register \(\ell\).
    Without loss of generality we consider the case where \(A_x\) acts on register \(\ell\) with the projector \(P_i\).
    Then, by definition, \(A_y\) acts on a different register \(\ell'\) with \(P_i\) (since \(A_y\) acts with exactly the same projectors as \(A_x\), just on a potentially different set of registers).
    Due to the antisymmetry of \(\ket{\psi}\), we therefore have \(\ev{A_x A_y}{\psi} = 0\).

    Therefore, we can assert that
    \begin{equation}
        \sum_{s \in S_k} \ev{A_s}{\psi} \leq 1.
        \label{eq:sum_A_s}
    \end{equation}
    This could be seen in more detail by expanding \(\ket{\psi}\) in the basis that diagonalizes all of the \(\left\{ A_s \right\}\) and applying the fact that if \(A_x \ket{\phi} = 1\) then \(A_y \ket{\phi} = 0\) for all \(x \neq y\).
    The antisymmetry of \(\ket{\psi}\) also implies that \(\ev{A_x}{\psi} = \ev{A_y}{\psi}\) for all \(x, y\). Therefore, we have that 
    \begin{equation}
        |S_k|\ev{A_s}{\psi} \leq 1
    \end{equation}
    for any \(A_s\).
    The \(\left\{ A_s \right\}\) are all positive semidefinite, so we can bound the expectation value of the particular one from \eq{projector_product_bound} below by zero and divide by \(|S_k| = P(\eta, k)\) to yield
    \begin{equation}
        0 \leq \ev{\left(\bigotimes_{i=1}^k P_i \bigotimes_{i=k + 1}^{\eta} \mathbb{I}\right)}{\psi} \leq \frac{1}{P(\eta, k)} = 
        \frac{\left( \eta - k \right)!}{\eta!},
    \end{equation}
    completing the proof.
\end{proof}

\eq{W_g_bound} follows directly from this lemma and the fact that we can freely permute the observables between registers without changing the expectation value.
Now we can return to \eq{g_thangs_now_with_more_W} and apply \eq{W_g_bound} to show that
\begin{align}
    \left|\tr \left[ \ketbra{\psi} A_{\bm{xy}} \right]\right|
    & \leq 
    \sum_{w=0}^a 2^w \sum_{S \subseteq \left[ a \right]: |S| = w} 
    \ev{\left(
    \bigotimes_{\ell=1}^a W_\ell^S
    \bigotimes_{\ell = a + 1}^{a + b} \ketbra{g_\ell}
    \bigotimes_{\ell = a + b + 1}^\eta \mathbb{I}
    \right)}{\psi}
    \\ & \leq
    \sum_{w=0}^a 2^w \sum_{S \subseteq \left[ a \right]: |S| = w} 
    \frac{\left( \eta - w - b \right)!}{\eta !}
    \\ & = 
    \sum_{w=0}^a 2^w \sum_{S \subseteq \left[ a \right]: |S| = w} 
    \frac{\left( \eta - w\right)!}{\eta !}
    \frac{\left( \eta - w - b \right)!}{\left( \eta - w\right)!}
    \\ & \leq
    \sum_{w=0}^a 2^w \sum_{S \subseteq \left[ a \right]: |S| = w} 
    \frac{\left( \eta - w\right)!}{\eta !}
    \frac{\left( \eta - a - b \right)!}{\left( \eta - a\right)!},
\end{align}
    with the last inequality following from the fact that \(\eta - a \leq \eta - w\).
    Then we have that 
\begin{align}
    \left|\tr \left[ \ketbra{\psi} A_{\bm{xy}} \right]\right|
    & \leq 
    \sum_{w=0}^a 2^w \sum_{S \subseteq \left[ a \right]: |S| = w} 
    \frac{\left( \eta - w\right)!}{\eta !}
    \frac{\left( \eta - a - b \right)!}{\left( \eta - a\right)!},
    \\ & =
    \frac{\left( \eta - a - b \right)!}{\left( \eta - a\right)!}
    \sum_{w=0}^a 2^w \sum_{S \subseteq \left[ a \right]: |S| = w} 
    \frac{\left( \eta - w\right)!}{\eta !}
    \\ & =
    \frac{\left( \eta - a - b \right)!}{\left( \eta - a\right)!}
    \sum_{w=0}^a 2^w 
    \binom{a}{w}
    \frac{\left( \eta - w\right)!}{\eta !}
    \\ & \leq
    \frac{\left( \eta - a - b \right)!}{\left( \eta - a\right)!}
    \sum_{w=0}^a 2^w 
    \binom{a}{w}
    \frac{\left( a-w \right)!}{a!}
    \\ & =
    \frac{\left( \eta - a - b \right)!}{\left( \eta - a\right)!}
    \sum_{w=0}^a
    \frac{2^w}{w!}
    \\ & \leq
    \frac{\left( \eta - a - b \right)!}{\left( \eta - a\right)!}
    \sum_{w=0}^\infty
    \frac{2^w}{w!} \label{eq:series_fun}
    \\ & =
    \frac{\left( \eta - a - b \right)!}{\left( \eta - a\right)!} e^2,
\end{align}
where the last step is obtained by the application of a well-known formula for the infinite sum of the sequence in \eq{series_fun}.

\subsubsection{Putting the pieces together}

Having shown that 
\begin{equation}
    \left|\tr \left[ \ketbra{\psi} A_{\bm{xy}} \right]\right|
    \leq 
    \frac{e^2\left( \eta - a - b \right)!}{\left( \eta - a\right)!},
\end{equation}
we are ready to return to the bound in \eq{o_variance_tr_A}, which we recall below:
\begin{equation}
    \mathrm{Var}(\hat{o}) \leq \sum_{\bm{x} \in R_k} \sum_{\bm{y} \in R_k}\tr \left[ \ketbra{\psi} A_{\bm{x}\bm{y}}\right]. \label{eq:o_variance_tr_A_2}
\end{equation}
We then have that 
\begin{align}
    \mathrm{Var}(\hat{o}) &\leq \sum_{\bm{x} \in R_k} \sum_{\bm{y} \in R_k} \left|\tr \left[ \ketbra{\psi} A_{\bm{x}\bm{y}}\right]\right|
    \\ & \leq
    \sum_{\bm{x} \in R_k} \sum_{\bm{y} \in R_k} 
    \frac{e^2\left( \eta - a - b \right)!}{\left( \eta - a\right)!}. \label{eq:o_variance_e_2}
\end{align}
Recall that we defined \(a\) and \(b\) in \eq{a_b_def} in the following way,
\begin{equation}
    a = |\bm{x} \cap \bm{y}|,
    \qquad \qquad
    b = 2k - 2a. 
    \label{eq:a_b_def_2}
\end{equation}
Recall also the definition of the set of sequences \(R_k\) from \eq{R_k_def},
\begin{equation}
    R_k =
    \left\{ 1,\ldots, \eta / k \right\} \times
    \left\{ \eta / k + 1,\ldots, 2\eta / k \right\} \times
    \cdots \times
    \left\{ \left( k - 1 \right)\eta / k + 1,\ldots, \eta \right\}.
    \label{eq:R_k_def_2}
\end{equation}
Colloquially, a sequence in \(R_k\) indexes a set of \(k\) registers, one from the first group of \(\eta / k\), one from the second group of \(\eta / k\), and so on.

Let us consider a fixed sequence \(\bm{x} \in R_k\) and determine how many sequences \(\bm{y} \in R_k\) exist for a specific value of \(a\).
For a fixed value of \(a\), \(\bm{x}\) and \(\bm{y}\) share \(a\) elements.
By construction, there are \(\binom{k}{a}\) different choices for these \(a\) elements (because there are \(k\) groups and \(\bm{x}\) and \(\bm{y}\) can either match or fail to match in each group).
In each of the \(k - a\) groups of registers where \(\bm{x}\) and \(\bm{y}\) don't match, there are exactly \(\eta / k - 1\) ways to choose the corresponding element of \(\bm{y}\).
Therefore, for a given \(a\) and \(\bm{x}\), we have that
\begin{equation}
    \left| \left\{ \bm{y} \in R_k : \left| \bm{x} \cap \bm{y} \right| = a\right\} \right| = \binom{k}{a}\left( \eta / k - 1 \right)^{k-a}.
\end{equation}

The only way that a particular \(\bm{x}\) or \(\bm{y}\) enters into \eq{o_variance_e_2} is through \(a\) and \(b\), so we can use this fact to take the sums over \(\bm{x}\) and \(\bm{y}\), yielding
\begin{align}
    \mathrm{Var}(\hat{o}) & \leq
    \sum_{\bm{x} \in R_k} \sum_{\bm{y} \in R_k} 
    \frac{e^2\left( \eta - a - b \right)!}{\left( \eta - a\right)!}
    \\ & \leq
    e^2 \sum_{\bm{x} \in R_k} \sum_{a=0}^k \binom{k}{a}\left(  \eta / k - 1  \right)^{k-a}
    \frac{\left( \eta - 2k + a \right)!}{\left( \eta - a\right)!}
    \\ & \leq
    e^2 \left( \eta/k \right)^k  \sum_{a=0}^k \binom{k}{a}\left(  \eta / k - 1  \right)^{k-a}
    \frac{\left( \eta - 2k + a \right)!}{\left( \eta - a\right)!},
\end{align}
under the assumption that \(\eta > 2k\) so that we don't have to restrict the sum over \(a\).

Simplifying the inequality further, we find that
\begin{align}
    \mathrm{Var}(\hat{o}) & \leq
    e^2 \left( \eta/k \right)^k \left(  \eta / k - 1  \right)^{k} \sum_{a=0}^k \binom{k}{a}\left(  \eta / k - 1  \right)^{-a}
    \frac{\left( \eta - 2k + a \right)!}{\left( \eta - a\right)!}
    \\ & \leq
    e^2 \left( \eta/k \right)^k \left(  \eta / k - 1  \right)^{k} \sum_{a=0}^k \binom{k}{a}\left(  \eta / k - 1  \right)^{-a}
    \frac{\left( \eta - 2k + a \right)!}{\left( \eta - k\right)!}.
    \label{eq:messy_factorials}
\end{align}

Now we employ the upper and lower bounds from Stirling's formula (that hold for any integer \(n > 0\)),
\begin{equation}
    \sqrt{2 \pi n} \left( \frac{n}{e} \right)^n < n! < e \sqrt{2 \pi n} \left( \frac{n}{e} \right)^n.
\end{equation}
We can use these bounds to simplify the ratio of factorials in \eq{messy_factorials},
\begin{align}
    \frac{\left( \eta - 2k + a \right)!}{\left( \eta - k\right)!} & \leq
    e \sqrt{2 \pi \left( \eta - 2k + a \right)} \left( \frac{\eta - 2k + a}{e} \right)^{\eta - 2k + a} \frac{1}{\left( \eta-k \right)!}
    \\ & \leq
    e \sqrt{2 \pi \left( \eta - k \right)} \left( \frac{\eta - k}{e} \right)^{\eta - 2k + a} \frac{1}{\left( \eta-k \right)!}
    \\ & \leq
    e \left( \frac{\eta - k}{e} \right)^{\eta - 2k + a} \left(\frac{e}{\eta - k}\right)^{\eta - k}
    \\ & = 
    e \left(\frac{e}{\eta - k}\right)^{k-a}.
\end{align}
Using the assumption that \(\eta > 2k\) we can proceed further, yielding
\begin{align}
    \frac{\left( \eta - 2k + a \right)!}{\left( \eta - k\right)!} & \leq
    e \left(\frac{e}{\eta - k}\right)^{k-a} 
    \\ & \leq
    e \left(\frac{2e}{\eta}\right)^{k-a} 
    \\ & = 
    e \left(\frac{2e}{\eta}\right)^{k} \left(\frac{2e}{\eta}\right)^{-a},
    \label{eq:nice_factorials}
\end{align}
where we have used the fact that \(\eta > 2k\) implies that \(\eta - k > \eta / 2\).

We can use \eq{nice_factorials} to further simplify \eq{messy_factorials}, finding that,
\begin{align}
    \mathrm{Var}(\hat{o}) & \leq
    e^2 \left( \eta/k \right)^k \left(  \eta / k - 1  \right)^{k} \sum_{a=0}^k \binom{k}{a}\left(  \eta / k - 1  \right)^{-a}
    \frac{\left( \eta - 2k + a \right)!}{\left( \eta - k\right)!}
    \\ & \leq
    e^3 \left( \eta/k \right)^k \left(  \eta / k - 1  \right)^{k} \sum_{a=0}^k \binom{k}{a}\left(  \eta / k - 1  \right)^{-a}
    \left(\frac{2e}{\eta}\right)^{k} \left(\frac{2e}{\eta}\right)^{-a}
    \\ & \leq
        e^3 \left( \eta^2/k^2 \right)^{k} \sum_{a=0}^k \binom{k}{a}\left(  \frac{\eta}{2k} \right)^{-a}
    \left(\frac{2e}{\eta}\right)^{k} \left(\frac{2e}{\eta}\right)^{-a}
    \\ & = 
    e^3 \left(\frac{2e \eta}{k^2} \right)^k \sum_{a=0}^k \binom{k}{a}\left( \frac{e}{k} \right)^{-a}.
\end{align}
Note that we again used the fact that \(\eta > 2k\) implies that \(\eta - k > \eta / 2\) to simplify the part of the bound involving \(\left( \eta/k - 1 \right)\).
Applying the binomial theorem to the sum yields the bound
\begin{align}
    \mathrm{Var}(\hat{o}) & \leq
    e^3 \left(\frac{2e \eta}{k^2} \right)^k e^{-k} \left( k + e \right)^k
    \\ & = 
    e^3 \left(\frac{2 \eta \left( k + e \right)}{k^2} \right)^k.
\end{align}

Recall that we defined the estimator \(\hat{o}\) by neglecting the coefficient \(\frac{k^k \left( \eta ! \right)}{\eta ^ k\left( \eta - k \right)!}\) in \eq{R_k_register_k_rdm}'s expression for the $k$-RDM element \(\prescript{k}{}D_{i_1, \ldots, i_k}^{j_1, \ldots, j_k}\).
If we let \(\hat{d}\) be the estimator for this $k$-RDM element with the coefficient included, we have that
\begin{equation}
    \mathrm{Var}(\hat{d}) =
    \left(\frac{k^k \left( \eta ! \right)}{\eta ^ k\left( \eta - k \right)!}\right)^2 \mathrm{Var}(\hat{o}).
\end{equation}
Therefore, we can bound the desired variance by
\begin{align}
  \mathrm{Var}(\hat{d}) & \leq \left(\frac{k^k \left( \eta ! \right)}{\eta ^ k\left( \eta - k \right)!}\right)^2  
    e^3 \left(\frac{2 \eta \left( k + e \right)}{k^2} \right)^k.
\end{align}
Simplifying this expression, we obtain
\begin{align}
  \mathrm{Var}(\hat{d}) & \leq 
  \left(\frac{k^k \left( \eta ! \right)}{\eta ^ k\left( \eta - k \right)!}\right)^2  
    e^3 \left(\frac{2 \eta \left( k + e \right)}{k^2} \right)^k
    \\ & =
    e^3
    \left(\frac{\left( \eta ! \right)}{\left( \eta - k \right)!}\right)^2  
    \left(\frac{2 \left( k + e \right)}{\eta} \right)^k
    \\ & \leq
    e^3 \eta^k \left( 2k + 2e \right)^k,
\end{align}
which is the bound advertised in \eq{d_hat_variance_bound_intro}.

\section{More efficient Slater determinant state preparation in first quantization}
\label{app:state_prep}

The general principle is to prepare the state in second quantization, then convert it to first quantization.
To avoid needing to store all $N$ qubits for the second quantized state as it is produced, we convert its qubits to the first quantized representation.

To explain this, we will first explain how a state in the second quantized representation can be converted to the first quantized representation.
A computational basis state in second quantization consists of a string of $N$ bits with $\eta$ ones and $N-\eta$ zeros.
The procedure is to run through these qubits in sequence and store the locations in $\eta$ registers of size $\lceil\log N\rceil$.
Let us call the qubit number we consider from the second quantized representation $q$ and also record the number of electrons (ones) found so far as $\xi$.
The value of $\xi$ will be stored in an ancilla register of size $n_\eta=\lceil\log (\eta+1)\rceil$.

We initialize all $\eta$ registers for the first quantized representation and the $\xi$ register as zero.
Then, for $q=1$ to $N$ we perform the following.
\begin{enumerate}
    \item Add the value in qubit $q$ to the $\xi$ register, with Toffoli cost $n_\eta-1$.
    If the qubit is in the state $\ket 1$ then $\xi$ is incremented.
    \item Now use qubit $q$ to control unary iteration \cite{BabbushSpectra} on the register $\xi$, which has cost $\eta-1$.
    \item Use this unary iteration to write the value $q$ into register $\xi$ using CNOTs.
    Because $q$ is iterated classically, only CNOTs are needed, with no further Toffolis beyond that needed for the unary iteration.
    Because the unary iteration is controlled by qubit $q$, in the case where qubit $q$ is in state $\ket 0$, the unary iteration does not proceed and the value of $q$ is not written out.
    \item Now perform unary iteration on $\xi$ again that is \emph{not} controlled; the cost is $\eta-2$.
    \item We use the unary iteration on $\xi$ to check if the value in register number $\xi$ is $q$; if it is then we perform a NOT on qubit $q$.
    This multiply-controlled Toffoli is controlled by $\lceil \log N\rceil +1$ qubits (including the qubit from the unary iteration), so it has a cost of $\lceil \log N\rceil$.
    But, this is done for each of the $\eta$ registers, for a total cost $\eta\lceil\log N\rceil$.
\end{enumerate}
The last operation ensures that qubit $q$ is set to $\ket 0$.
+That is because, if it is initially $\ket 0$, then value $q$ is not written in register $\xi$, and the value is not flipped.
If it is initially $\ket 1$, then $q$ is written in register $\xi$, and the multiply-controlled Toffoli flips this qubit to $\ket 0$.

So far this procedure gives an ordered list of the electron positions, but we need an antisymmetrized state.
To obtain that, we apply the procedure in \cite{Berry2018} to antisymmetrize with cost $\mathcal{O}(\eta\log\eta \log N)$.
The total Toffoli cost is
\begin{equation}
    N \left( 2\eta +n_\eta - 3 + \eta\lceil\log N\rceil\right) + \mathcal{O}(\eta\log\eta \log N).
\end{equation}
The dominant cost here is $\eta N\log N$ from erasing the qubits in the second quantized representation, with the factor of $\log N$ coming from the need to check all qubits of each register to check if it is $q$.
However, recall that in unary iteration it is possible to check if a register is equal to a consecutive sequence of values without this logarithmic overhead, and we are considering consecutive values of $q$.

To eliminate that overhead, we, therefore, consider simultaneous unary iteration on all of the $\eta$ registers.
That is, for each register for the first quantized representation, we also store the qubits needed for unary iteration, as well as a control register to ensure we do not iterate on registers that do not have value written into them yet.
The control qubits will correspond to the value of $\xi$ in unary.
Our modified procedure is as follows (with the iteration of $q$ from 1 to $N$).
\begin{enumerate}
    \item Perform a single step of unary iteration on all $\eta$ registers with cost $\eta$ Toffolis.
    \item Add the value in qubit $q$ to the $\xi$ register, with Toffoli cost $n_\eta-1$.
    \item Use qubit $q$ to control unary iteration on the register $\xi$, which has cost $\eta-1$.
    \item Use this unary iteration to write the value $q$ into register $\xi$, \emph{as well} as the $\lceil\log N\rceil$ ancilla qubits for the unary iteration and the control qubit.
    Again this is performed with CNOTs.
    \item Convert the control qubits to one-hot unary using a sequence of CNOTs.
    \item For each of the $\eta$ registers, use the control qubit and the unary iteration output to control a NOT on qubit $q$.
    This has a cost of a single Toffoli for each register, for a toal of $\eta$.
    \item Convert the control qubits to from one-hot unary with CNOTs.
\end{enumerate}
As a result, we have eliminated the $\log N$ factor and also eliminated the cost of $\eta-2$ for the unary iteration on $\xi$ (because the control qubits are a unary representation of $\xi$).
One might ask if the binary representation of $\xi$ is still needed; however, it would be more costly to add increment $\xi$ in unary (about $\eta$ cost instead of $\log\eta$).
The total Toffoli cost of this procedure is now
\begin{equation}
    N \left( 3\eta +n_\eta - 2\right) + \mathcal{O}(\eta\log\eta \log N),
\end{equation}
where the order term is the cost for antisymmetrizing. Note that this reduces the Toffoli cost, but there is still a Clifford cost of $N\eta\log N$ from the CNOTs to place the value of $q$ in the first quantized registers.

Now to efficiently prepare the Slater determinant, we can perform the sequence of Givens rotations on the qubits for the second quantized representation.
The Givens rotations are performed in a sequence where Givens rotations are performed in a layer on qubits 1 to $\eta+1$, then on qubits 2 to $\eta+2$, then 3 to $\eta+3$, and so on. One can find the details of the Givens rotations that must be applied in \cite{Kivlichan2018QuantumConnectivity}.
Generally, layer $q$ of Givens rotations is performed on qubits $q$ to $\eta+q$.
After the first layer there are only $\eta+1$ qubits being used, and the first qubit is not accessed again in the preparation.
Therefore we can convert this qubit to the first quantized representation and erase it.
Then there are only $\eta$ qubits actively being used in the second quantized representation, and the next layer will be performed on qubits $2$ to $\eta+2$, bringing on one more qubit.

In this way, each time we perform a layer of Givens rotations to prepare the state, we can convert one qubit to the first quantized representation, and only $\eta+1$ qubits of the second quantized representation need be used at once, which is trivial compared to the number of qubits used for the first quantized representation.

\end{document}